\documentclass[12pt]{article}
\usepackage{graphicx}
\usepackage{natbib}
\usepackage{url} 
\usepackage{amsmath,amssymb}
\usepackage{booktabs}
\usepackage{color}
\usepackage{float}
\usepackage{amsthm}
\usepackage{authblk}
\newcommand{\blind}{1}

\newcounter{regime}

\newcounter{mystep}

\newtheorem{theorem}{Theorem}
\newtheorem{lemma}{Lemma}

\newtheorem{prop}{Proposition}
\newtheorem{assumption}{Assumption}
\newtheorem{condition}{Condition}
\newtheorem{remark}{Remark}
\newtheorem{corollary}{Corollary}

\def\E{\mathbb E}
\def\p{\mathbb P}
\def\var{\mathrm{var}}

\usepackage{hyperref}
\hypersetup{
    colorlinks=true,
    urlcolor=blue,
    citecolor=blue
}

\providecommand{\legend}[1]{\par\smallskip\begin{minipage}{0.98\textwidth}
\footnotesize\emph{Notes:} #1\end{minipage}}

\begin{document}
\tolerance=2500
\emergencystretch=2em

\def\spacingset#1{\renewcommand{\baselinestretch}{#1}\small\normalsize}
\spacingset{1}

\if1\blind
{
  \title{\bf Causal inference in two-sided randomization designs: factorial regression, two-way clustering, and covariate adjustment}

 \author[1]{Pengfei Tian}
\author[2]{Jizhou Liu}
\author[3]{Lei Shi}
\author[4]{Peng Ding}

\affil[1]{\small Qiuzhen College, Tsinghua University,
\texttt{\href{mailto:tpf24@mails.tsinghua.edu.cn}{tpf24@mails.tsinghua.edu.cn}}}

\affil[2]{\small HSBC Business School, Peking University,
\texttt{\href{mailto:jizhou.liu@phbs.pku.edu.cn}{jizhou.liu@phbs.pku.edu.cn}}}

\affil[3]{\small Adobe Research,
\texttt{\href{mailto:leis@adobe.com}{leis@adobe.com}}}

\affil[4]{\small Department of Statistics, University of California, Berkeley,
\texttt{\href{mailto:pengdingpku@berkeley.edu}{pengdingpku@berkeley.edu}}}

  \date{}
  \maketitle
}
\fi
	
\if0\blind
{
  \title{\bf Causal inference in two-sided randomization designs: factorial regression, two-way clustering, and covariate adjustment}
  \date{}
  \maketitle
}
\fi

\begin{abstract}
We study randomized experiments involving two interacting populations, such as buyers and sellers in a marketplace. In the two-sided experiments we consider, we randomize the two populations separately and independently. For a pair consisting of one member from each population, the two assignments jointly determine one of four exposure conditions. Under a local interference assumption, we consider a broad class of linear estimands, including total, interaction, and buyer- and seller-side spillover effects. Our first main result establishes that researchers can estimate these effects using ordinary least squares and conduct asymptotically valid design-based inference using the conventional two-way cluster-robust variance estimator, clustered at the buyers' and sellers' levels. Our second main result develops a sharper variance estimator for a single linear estimand that better preserves dependence within the buyer and seller dimensions and is asymptotically less conservative than the two-way clustered estimator and existing alternatives. Our third main result establishes the theory for covariate adjustment and recommends a two-way analysis-of-variance-type covariate representation to ensure efficiency gains.
\end{abstract}

\bigskip

\noindent
{\it Keywords:} Cluster-robust standard error; Design-based inference; Multiple randomization design; Local interference; H\'ajek projection

\newpage

\spacingset{1.9}

\section{Introduction}
Randomized experiments increasingly arise in settings where outcomes are indexed by
interactions between two sets of units \citep{johari2022experimental}. Examples include transactions between buyers and
sellers \citep{blake2014marketplace}, matches between riders and drivers \citep{azagirre2024better}, and relationships in financial networks \citep{comola2021treatment}. In such settings, an intervention assigned to one unit can affect outcomes
involving other units through their interactions, creating interaction effects
and spillovers that are not naturally accommodated by standard one-population
experimental designs \citep{holtz2025reducing}. Related work studies
interference generated through market-equilibrium responses
\citep{wager2021experimenting,munro2025treatment}.

Two-sided experiments naturally address this challenge. In a buyer-seller marketplace, the experimenter may randomize buyers and sellers
separately and independently. The joint assignments place each buyer--seller pair into one of four exposure
conditions. This design, also called a simple multiple randomization design
\citep{bajari2023experimental}, enables the study of several causal effects,
including total effects, buyer- and seller-side spillover effects, and
interaction effects. Under suitable restrictions on interference, recent work has developed
design-based estimation and asymptotic inference
\citep{masoero2026multiple}, Fisherian randomization inference
\citep{liu2025randomization}, and covariate adjustment
\citep{sudijono2026regression}.

However, practical challenges remain for statistical inference in two-sided experiments.
Existing design-based variance estimators are algebraically involved and therefore less attractive to empirical researchers. In addition, available conservative variance bounds can be substantially larger than the design-based variance, resulting in overly wide confidence intervals. These issues are especially important because applied researchers often
rely on regression-based workflows: specify a linear regression, report standard errors from
familiar software, and interpret coefficients as treatment-effect estimates. This paper studies
when such regression-based procedures are justified in two-sided experiments and how
design-based inference can be made less conservative.

We first show that the usual difference-in-means estimators for the total effect,
the buyer- and seller-side spillover effects, and the interaction effect can all be represented
through a saturated linear regression. 
We then develop a central limit theorem (CLT) for these estimators using
a new proof strategy based on a H\'ajek  projection \citep{hajek1968asymptotic} that separates the estimation
error into buyer-side and seller-side first-order components and an interaction
remainder. This decomposition clarifies the two sources of randomization uncertainty and
shows that the interaction remainder is asymptotically negligible under mild regularity
conditions. Finally, we show that the conventional two-way cluster-robust variance estimator \citep{cameron2011robust} from the saturated linear regression, with clusters defined by the two sides of the market, is asymptotically conservative for the design-based variance.
Thus, empirical researchers can obtain valid large-sample design-based inference by fitting the saturated regression and using two-way cluster-robust standard errors. This procedure supports joint inference for multiple effects without requiring a correctly specified linear outcome model.

For a single linear estimand, we further develop a sharper contrast-specific design-based variance estimator.
Existing approaches apply conservative bounds after expanding the variance into separate exposure-specific components. Our approach retains more of the dependence structure within the buyer and seller sides before bounding the remaining unobservable terms. This yields a class of conservative variance estimators indexed by tuning parameters. The class contains the two-way clustered variance as a special case and admits a closed-form optimal member. We show that the optimized estimator remains asymptotically conservative and is sharper than the conservative variance targets proposed in the existing literature
\citep{masoero2026multiple,sudijono2026regression}.

We also study regression adjustment with pre-treatment covariates. We consider both Fisher's
additive specification \citep{Fisher1935}, which uses a common covariate slope, and Lin's fully interacted
specification \citep{lin2013agnostic}, which allows the slope to vary across exposure cells. For each specification, we
establish a design-based CLT and show that the usual two-way
cluster-robust covariance matrix from the adjusted regression remains asymptotically
conservative. We further consider adjustments that decompose covariates into buyer-, seller-, and buyer-seller components. For Lin's adjustment, this decomposition improves asymptotic efficiency, while Fisher's adjustment enjoys the same guarantee under an additional condition.

The rest of the paper is organized as follows. Section~\ref{sec:setup} introduces the design, interference restriction, and causal estimands. Sections~\ref{sec:estimation-inference} and \ref{sec:variance-inference} develop point estimation, asymptotic distributions, and joint and scalar inference. Section~\ref{sec:regression-adjustment} studies regression adjustment. Sections~\ref{sec:simulation} and \ref{sec:real-data} present simulations and the empirical illustration.
The Supplementary Materials contain four sections: variance comparisons
(Section~\ref{app:additional-results}), proofs for unadjusted estimation and inference 
(Section~\ref{app:main-proofs}), supporting sampling and regression results 
(Section~\ref{app:technical-identities}), and regression-adjustment results and proofs 
(Section~\ref{app:regression-adjustment-proofs}).
The replication package provides code and additional numerical analyses.

\section{Setup and Notation}\label{sec:setup}

We study a two-sided experiment in which outcomes are indexed by pairs of units. For
concreteness, consider a buyer-seller marketplace with buyers indexed by
$i\in[I]:=\{1,\ldots,I\}$ and sellers indexed by
$j\in[J]:=\{1,\ldots,J\}$. The unit of analysis is a buyer-seller pair
$(i,j)\in[I]\times[J]$. Although we use buyer-seller terminology throughout, the same
notation applies to other two-sided environments, such as riders and drivers in ride-hailing
platforms, viewers and influencers in content platforms, or advertisers and content creators
in digital advertising markets. Let
$W=(W_{ij})_{i\in[I],\,j\in[J]}\in\{0,1\}^{I\times J}$ denote a generic
pair-level assignment matrix, where $W_{ij}=1$ means that pair $(i,j)$ receives the direct
intervention and $W_{ij}=0$ otherwise. For $w\in\{0,1\}^{I\times J}$, let
$Y_{ij}(w)$ denote the potential outcome of pair $(i,j)$ under assignment $w$. The observed
outcome is $Y_{ij}^{\mathrm{obs}}$. We first introduce the randomization
mechanism and then impose a local interference condition that reduces the full schedule of
potential outcomes to four exposure-specific potential outcomes.

\subsection{Randomization mechanism and local interference}\label{subsec:randomization}
In a buyer-seller marketplace, buyers and sellers are randomized separately and independently. Their joint assignments determine one of four pair-level exposure states. We now define the design formally.
Let
\[
W^B=(W_1^B,\ldots,W_I^B)^\top\in\{0,1\}^I,
\qquad
W^S=(W_1^S,\ldots,W_J^S)^\top\in\{0,1\}^J
\]
denote the buyer-side and seller-side assignments, respectively. The numbers of treated buyers and sellers are fixed at \(I_1\) and \(J_1\), so that $\sum_{i=1}^I W_i^B=I_1$ and
$\sum_{j=1}^J W_j^S=J_1$. Let $I_0=I-I_1$ and $J_0=J-J_1$. 
We formalize the randomization mechanism as follows.

\begin{assumption}[Independent two-sided complete randomization]\label{ass:smrd-randomization}
The buyer-side and seller-side assignments are generated by independent complete
randomizations. Specifically, for any $w^B\in\{0,1\}^I$ and $w^S\in\{0,1\}^J$ satisfying
$\sum_{i=1}^I w_i^B=I_1$ and $\sum_{j=1}^J w_j^S=J_1$,
\[
\p(W^B=w^B,\;W^S=w^S)
=
\p(W^B=w^B)\cdot \p(W^S=w^S)
=
\binom{I}{I_1}^{-1}\binom{J}{J_1}^{-1}.
\]
The pair-level assignment matrix is induced by
$W_{ij}=W_i^B W_j^S$ for $(i,j)\in[I]\times[J]$.
\end{assumption}
Denote the corresponding assignment fractions by
\(
e_p^B={I_p}/{I}
\) and 
\(
e_q^S={J_q}/{J}
\) 
for \(p,q\in\{0,1\}.\)
Assumption~\ref{ass:smrd-randomization} is imposed by design and therefore holds by
construction under two-sided experiments. It gives independent randomization across the two
sides while allowing dependence among pair-level assignments that share the same buyer or
seller. In particular, for any fixed pair $(i,j)$,
\[
\p(W_i^B=p,\;W_j^S=q)=e_p^B e_q^S,
\qquad p,q\in\{0,1\}.
\]
However, the pair-level assignments $\{W_{ij}\}_{i,j}$ are not mutually independent. 
Each pair therefore falls into one of four side-level assignment states,
\[
(W_i^B,W_j^S)\in\{(0,0),(1,0),(0,1),(1,1)\},
\]
and receives the direct intervention only in state \((1,1)\). This two-sided dependence
structure drives the inference procedures developed below.

We next restrict how the assignment may affect pair-level potential outcomes using the local interference condition from the two-sided experiments literature
\citep{masoero2026multiple,sudijono2026regression,liu2025randomization}.
Under local interference, the potential outcome of a pair may depend on its own treatment
status and on the fractions of treated pairs in its buyer row and seller column. Under our
two-sided complete randomization, these fractions are fixed by the side-level assignments, so local interference reduces to four exposure-specific potential outcomes indexed by
$(W_i^B,W_j^S)$.

\begin{assumption}[Local interference]\label{ass:local-interference}
For each pair $(i,j)$, the potential outcome depends on the assignment only through
$(w_i^B,w_j^S)$. Thus,
\(
Y_{ij}(w)=Y_{ij}(w_i^B,w_j^S)
\)
and
\(
Y_{ij}^{\mathrm{obs}}
=
Y_{ij}(W_i^B,W_j^S).
\)
\end{assumption}

\subsection{Estimands}\label{subsec:estimands}
Under Assumption~\ref{ass:local-interference}, each pair has four potential outcomes indexed
by the buyer- and seller-side assignments. For each $(p,q)\in\{0,1\}^2$, define
\begin{equation}\label{eq:mean-potential-outcome}
\bar Y(p,q)
=
\frac{1}{IJ}\sum_{i=1}^I\sum_{j=1}^J Y_{ij}(p,q).
\end{equation}
We also write
\(
\bar Y_{i\cdot}(p,q)
=
{J}^{-1}\sum_{j=1}^J Y_{ij}(p,q),
\) and \(
\bar Y_{\cdot j}(p,q)
=
{I}^{-1}\sum_{i=1}^I Y_{ij}(p,q)
\)
for the corresponding buyer- and seller-specific averages.
We collect the four exposure-specific means in the vector
\begin{equation}\label{eq:Ybar-vector}
\bar{\mathbf Y}
=
\bigl(
\bar Y(1,1),\,
\bar Y(1,0),\,
\bar Y(0,1),\,
\bar Y(0,0)
\bigr)^\top .
\end{equation}
The vector $\bar{\mathbf Y}$ collects the four finite-population mean potential outcomes.
The estimands considered below are linear transformations of this vector $F\bar {\mathbf Y}$. For instance, $F$ can be
\begin{equation}\label{eq:F-main}
F
=
\begin{pmatrix}
1 & 0 & 0 & -1 \\
0 & 1 & 0 & -1 \\
0 & 0 & 1 & -1 \\
1 & -1 & -1 & 1
\end{pmatrix},
\end{equation}
where the four rows correspond respectively to
\begin{align}
\tau_{\mathrm{tot}}
&:= \bar Y(1,1)-\bar Y(0,0), \notag\\
\tau_{B,\mathrm{spill}}
&:= \bar Y(1,0)-\bar Y(0,0), \label{estimand:spill}\\
\tau_{S,\mathrm{spill}}
&:= \bar Y(0,1)-\bar Y(0,0), \notag\\
\tau_{\mathrm{int}}
&:= \bar Y(1,1)-\bar Y(1,0)-\bar Y(0,1)+\bar Y(0,0).
\label{eq:special tau}
\end{align}
We refer to the four contrasts as the total effect, buyer-side spillover effect,
seller-side spillover effect, and interaction effect, respectively.
These four contrasts are also studied by
\citet{masoero2026multiple} and \citet{sudijono2026regression}.

The total effect compares joint treatment with joint control. The buyer-side spillover effect
compares $(1,0)$ with $(0,0)$, capturing treatment exposure through the buyer side for an
untreated pair; the seller-side spillover effect is analogous. The interaction effect satisfies
\(
\tau_{\mathrm{int}}
=
\tau_{\mathrm{tot}}
-
\tau_{B,\mathrm{spill}}
-
\tau_{S,\mathrm{spill}}
\)
and measures the deviation of the joint-treatment effect from the sum of the two one-sided
exposure effects, equivalently the usual interaction contrast in a $2\times2$ factorial structure.
\section{Point Estimator and its Asymptotic Distribution}\label{sec:estimation-inference}

This section introduces the moment estimator of the exposure-specific mean vector
$\bar{\mathbf Y}$ and establishes its saturated-regression representation. We then derive
a first-order decomposition and the joint asymptotic distribution of the estimator.

\subsection{Plug-in and saturated-regression estimators}
\label{subsec:plugin-reg-vector}

For each $(p,q)\in\{0,1\}^2$, define the observed cell mean
\begin{equation*}
\widehat Y(p,q)
=
\frac{1}{I_pJ_q}
\sum_{i=1}^I\sum_{j=1}^J
Y_{ij}^{\mathrm{obs}}
\mathbf 1\{W_i^B=p,\;W_j^S=q\}.
\end{equation*}
For each $(p,q)$, $\widehat Y(p,q)$ is a subarray mean under independent row and column sampling, as in the ``pigeonhole'' framework of \citet{cornfield1956average}.
Collect these four observed means in the vector
\begin{equation*}
\widehat{\mathbf Y}
=
\bigl(
\widehat Y(1,1),\,
\widehat Y(1,0),\,
\widehat Y(0,1),\,
\widehat Y(0,0)
\bigr)^\top .
\end{equation*}
The vector $\widehat{\mathbf Y}$ is the natural plug-in estimator of the finite-population
causal object $\bar{\mathbf Y}$.

We next give an equivalent regression representation. Let
$
G_{pq,ij}
=
\mathbf 1\{W_i^B=p,\;W_j^S=q\}$ for $ p,q\in\{0,1\}$, and define the four-dimensional regressor
\[
z_{ij}
=
\bigl(
G_{11,ij},\,
G_{10,ij},\,
G_{01,ij},\,
G_{00,ij}
\bigr)^\top .
\]
Let
$\widehat{\boldsymbol\mu}
=
(\widehat\mu_{11},\widehat\mu_{10},\widehat\mu_{01},\widehat\mu_{00})^\top$
be the ordinary least squares (OLS) coefficient vector from the saturated cell-indicator regression:
\begin{equation}\label{eq:reg-Gij-sec31}
\widehat{\boldsymbol\mu}
\in
\arg\min_{\mu\in\mathbb R^4}
\sum_{i=1}^I\sum_{j=1}^J
\left\{
Y_{ij}^{\mathrm{obs}}-z_{ij}^\top\mu
\right\}^2.
\end{equation}
The associated residual is
$
\widehat u_{ij}
=
Y_{ij}^{\mathrm{obs}}-z_{ij}^\top\widehat{\boldsymbol\mu}=Y_{ij}^{\mathrm{obs}}-\widehat Y(W_i^B,W_j^S).$
By standard OLS properties, the cellwise first-order conditions give
$\widehat{\boldsymbol\mu}=\widehat{\mathbf Y}$.
\begin{remark}[Factor-based regression]
\label{rem:factor-based-point}
The two side-level assignments induce a $2^2$ factorial structure. 
In the terminology of \citet{zhao2022factorial}, the saturated
cell-indicator regression above is a treatment-based regression, whereas
the equivalent factor-based regression is
\(Y_{ij}^{\mathrm{obs}}
\sim
1+W_i^B+W_j^S+W_i^BW_j^S.
\)
Its four coefficients correspond to
$\widehat Y(0,0)$, $\widehat\tau_{B,\mathrm{spill}}$,
$\widehat\tau_{S,\mathrm{spill}}$, and
$\widehat\tau_{\mathrm{int}}$, respectively, while
$\widehat\tau_{\mathrm{tot}}$ is the sum of these three coefficients. \citet{sudijono2026regression} use this factor-based
representation to express the interaction-effect estimator as the interaction
coefficient and motivate subsequent regression adjustment.
Up to an invertible linear transformation, these two regressions are
equivalent for both point estimation and covariance estimation.
We therefore focus on the OLS regression in \eqref{eq:reg-Gij-sec31} throughout.
\end{remark}

\begin{remark}[Joint-treatment-only special case]
\label{rem:joint-treatment-only}
A useful special case is one in which the potential outcome depends on the
side-level assignments only through their product
\(D_{ij}=W_i^BW_j^S\). Equivalently,
\(Y_{ij}(1,0)=Y_{ij}(0,1)=Y_{ij}(0,0)\) for all $(i,j)$.

A natural reduced regression is
\(Y_{ij}^{\mathrm{obs}}\sim 1+D_{ij}\).
Its slope coefficient can be written as
\[
\widehat\tau_D
=
F_D\widehat{\mathbf Y},
\qquad
F_D=
(
1,-\omega_{10},-\omega_{01},-\omega_{00}
),
\qquad
\omega_{pq}
=
\frac{e_p^Be_q^S}{1-e_1^Be_1^S},
\quad (p,q)\neq(1,1).
\]
Under the restriction above,
\(F_D\bar{\mathbf Y}
=\bar Y(1,1)-\bar Y(0,0)
\).
Hence the vector asymptotic theory developed below applies directly to this
reduced-regression estimand.
\end{remark}
\subsection{First-order covariance and vector CLT}\label{subsec:twoway-vector}
The key technical device is a H\'ajek projection of $\widehat{\mathbf Y}$. Viewing
$\widehat{\mathbf Y}-\bar{\mathbf Y}$ as a function of
$(W^B,W^S)$, we have
\begin{align}\label{eq:HoeffdingEquation}
    \widehat{\mathbf Y}-\bar{\mathbf Y} = \E\left(
\widehat{\mathbf Y}-\bar{\mathbf Y}
\mid W^B
\right) + \E\left(
\widehat{\mathbf Y}-\bar{\mathbf Y}
\mid W^S
\right) + R_{BS},
\end{align}
where $R_{BS}$ is a buyer-seller interaction remainder and negligible under the regularity conditions below.
To further express the first two terms on the right-hand side of \eqref{eq:HoeffdingEquation}, for each buyer $i$ and seller $j$,  define the four-dimensional buyer-side score vector
\[
A_i
=
\left(
\frac{\bar Y_{i\cdot}(1,1)}{e_1^B},
\frac{\bar Y_{i\cdot}(1,0)}{e_1^B},
-\frac{\bar Y_{i\cdot}(0,1)}{e_0^B},
-\frac{\bar Y_{i\cdot}(0,0)}{e_0^B}
\right)^\top ,
\]
and the four-dimensional seller-side score vector
\[
B_j
=
\left(
\frac{\bar Y_{\cdot j}(1,1)}{e_1^S},
-\frac{\bar Y_{\cdot j}(1,0)}{e_0^S},
\frac{\bar Y_{\cdot j}(0,1)}{e_1^S},
-\frac{\bar Y_{\cdot j}(0,0)}{e_0^S}
\right)^\top .
\]
Let
$
\bar A={I}^{-1}\sum_{i=1}^I A_i,
$ and $
\bar B={J}^{-1}\sum_{j=1}^J B_j.
$
The two first-order projections are
\begin{equation*}
\E\left(
\widehat{\mathbf Y}-\bar{\mathbf Y}
\mid W^B
\right) = \frac1I\sum_{i=1}^I (W_i^B-e_1^B)A_i, \qquad \E\left(
\widehat{\mathbf Y}-\bar{\mathbf Y}
\mid W^S
\right) =\frac1J\sum_{j=1}^J (W_j^S-e_1^S)B_j.\end{equation*} 
The buyer-side complete randomization determines the first projection, whereas the
seller-side complete randomization determines the second. The independence of the two
side-level randomizations makes these first-order terms independent. We can therefore view
their sum as a four-dimensional stratified permutation statistic with two strata,
corresponding to the buyer and seller sides. For a vector finite population $C_1,\ldots,C_N$, write
$\mathcal S_N(C)=(N-1)^{-1}\sum_{k=1}^N
(C_k-\bar C)(C_k-\bar C)^\top$. 
Let $\Sigma$ denote the covariance matrix of the sum of the two first-order projections on the right-hand side of \eqref{eq:HoeffdingEquation}.
Using the covariance representation for stratified permutation statistics in
\citet{tian2025stratified}, we obtain
\begin{equation}\label{eq:Sigma-vector}
\Sigma
=
\frac{e_1^B e_0^B}{I}
\cdot \mathcal S_I(A)
+
\frac{e_1^S e_0^S}{J}
\cdot\mathcal S_J(B).
\end{equation}
The first component of $\Sigma$ captures buyer-side randomization uncertainty, while the second captures seller-side randomization uncertainty.

\begin{condition}\label{cond:vector-clt}
As $I,J\to\infty$, assume:
\begin{enumerate}
    \item there exists a constant $\underline e>0$ such that $\min\{e_0^B,e_1^B,e_0^S,e_1^S\}\ge \underline e;$
    \item $\Sigma$ is nondegenerate in the sense that
    there exists a constant $c_\Sigma>0$ such that $\lambda_{\min}(\Sigma)
    \ge
    c_\Sigma\{I^{-1}+J^{-1}\}$ for all sufficiently large $I,J$;
    \item there exist constants $\delta>0$ and $C_Y<\infty$ such that,
    for all $(p,q)\in\{0,1\}^2$,
    \[
    \frac{1}{IJ}
    \sum_{i=1}^I\sum_{j=1}^J
    \left|Y_{ij}(p,q)-\bar Y(p,q)\right|^{2+\delta}
    \le C_Y.
    \]
\end{enumerate}
\end{condition}
Condition~\ref{cond:vector-clt}(i) requires nonvanishing treatment and
control fractions on both sides.
Condition~\ref{cond:vector-clt}(ii) prevents the leading covariance matrix
from degenerating faster than its natural scale \(I^{-1}+J^{-1}\). It requires
sufficient heterogeneity in the buyer- and seller-side first-order projection
scores and rules out settings in which these projections become nearly
constant across both sides.
Condition~\ref{cond:vector-clt}(iii) imposes a uniform
\((2+\delta)\)-moment bound on the potential outcomes.
Taken together, Conditions~\ref{cond:vector-clt}(i)--(iii), along with
Jensen's inequality, imply bounded \((2+\delta)\)-moments for $A_i-\bar A$ and
$B_j-\bar B$ and verify the Lyapunov condition for the stratified permutation CLT of
\cite{tuvaandorj2024combinatorial}. They also imply
\(\var(R_{BS})=O((IJ)^{-1})\), so the remainder is asymptotically negligible
relative to the first-order terms.
\begin{theorem}[Vector CLT for exposure-specific means]
\label{thm:vector-clt}
Under Assumptions~\ref{ass:smrd-randomization} and~\ref{ass:local-interference}, and
Condition~\ref{cond:vector-clt},
\[
\Sigma^{-1/2}
\left(
\widehat{\mathbf Y}-\bar{\mathbf Y}
\right)
\Rightarrow
N(0,\mathrm I_4),
\]
where $\Sigma$ is defined in \eqref{eq:Sigma-vector} and
$\mathrm I_4$ denotes the $4\times4$ identity matrix.
\end{theorem}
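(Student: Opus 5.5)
The plan is to use the H\'ajek decomposition \eqref{eq:HoeffdingEquation} and treat its three pieces separately: a CLT for the first-order sum, and negligibility of $R_{BS}$ at the scale $\Sigma^{1/2}$.

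\textbf{Step 1 (the projections).} Write $\widehat Y(p,q)-\bar Y(p,q)$ as a double sum over $(i,j)$ of $Y_{ij}(p,q)\{\mathbf 1(W_i^B=p)\mathbf 1(W_j^S=q)/(e_p^Be_q^S)-1\}/(IJ)$. Introduce the centered indicators $\epsilon_i^B=\mathbf 1(W_i^B=p)/e_p^B-1$ and $\epsilon_j^S=\mathbf 1(W_j^S=q)/e_q^S-1$. Then
\[
\frac{\mathbf 1(W_i^B=p)\mathbf 1(W_j^S=q)}{e_p^Be_q^S}-1=\epsilon_i^B+\epsilon_j^S+\epsilon_i^B\epsilon_j^S .
\]
Summing the first term over $j$ gives exactly $I^{-1}\sum_i(W_i^B-e_1^B)A_i$, coordinatewise, since $\sum_i(W_i^B-e_1^B)=0$. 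The seller term is analogous. Hence
\[
R_{BS}=(IJ)^{-1}\sum_{i,j}Y_{ij}(p,q)\,\epsilon_i^B\epsilon_j^S .
\]
Using independence of $W^B$ and $W^S$ and $\E\epsilon_i^B=0$, the two first-order terms are uncorrelated with $R_{BS}$ and with each other.

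\textbf{Step 2 (CLT for the first-order sum).} Call the first-order sum $T$. Its two summands are independent completely randomized linear permutation statistics, with covariance $\Sigma$ as in \eqref{eq:Sigma-vector}. I apply the Cram\'er--Wold device: for fixed $a\in\mathbb R^4$, $a^\top\Sigma^{-1/2}T$ is a scalar stratified permutation statistic with two strata. I then invoke the combinatorial CLT of \cite{tuvaandorj2024combinatorial}. Its Lyapunov/Lindeberg condition requires
\[
\sum_i|a^\top\Sigma^{-1/2}(A_i-\bar A)/I|^{2+\delta}+\sum_j|a^\top\Sigma^{-1/2}(B_j-\bar B)/J|^{2+\delta}\to0 .
\]
This follows from three facts. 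First, $\|\Sigma^{-1/2}\|^2\le c_\Sigma^{-1}(I^{-1}+J^{-1})^{-1}$ by Condition~\ref{cond:vector-clt}(ii). Second, $I^{-1}\sum_i\|A_i-\bar A\|^{2+\delta}\le C$ by Jensen's inequality (row averages) together with Condition~\ref{cond:vector-clt}(i),(iii). Third, the resulting bound is $O\big((I^{-1}+J^{-1})^{-1-\delta/2}(I^{-1-\delta}+J^{-1-\delta})\big)$, which tends to zero since $I^{-1-\delta}\le I^{-\delta/2}(I^{-1}+J^{-1})^{1+\delta/2}$.

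\textbf{Step 3 (remainder; the main obstacle).} I need $\Sigma^{-1/2}R_{BS}=o_p(1)$. By Step 2's bound on $\|\Sigma^{-1/2}\|$, it suffices to show $\var(R_{BS})=o(I^{-1}+J^{-1})$ coordinatewise. The variance target is $O((IJ)^{-1}\cdot\text{second moment})$.

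First I center $Y_{ij}(p,q)$. Because $\sum_i\epsilon_i^B=0$ and $\sum_j\epsilon_j^S=0$, I can replace $Y_{ij}$ by its double-centered version $\tilde Y_{ij}=Y_{ij}-\bar Y_{i\cdot}-\bar Y_{\cdot j}+\bar Y$.

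Next I compute the variance. By independence,
\[
\E R_{BS}^2=(IJ)^{-2}\sum_{i,i',j,j'}\tilde Y_{ij}\tilde Y_{i'j'}\,\E(\epsilon_i^B\epsilon_{i'}^B)\,\E(\epsilon_j^S\epsilon_{j'}^S).
\]
Under complete randomization, $\E(\epsilon_i^B\epsilon_{i'}^B)=c_1\mathbf 1(i=i')-c_1/(I-1)\,\mathbf 1(i\ne i')$, and similarly on the seller side. Since the rows and columns of $\tilde Y$ sum to zero, the off-diagonal parts collapse to multiples of the diagonal ones. This gives $\E R_{BS}^2\le C(IJ)^{-2}\sum_{i,j}\tilde Y_{ij}^2=O((IJ)^{-1})$, using the second-moment bound implied by Condition~\ref{cond:vector-clt}(iii).

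Finally, $(IJ)^{-1}=o(I^{-1}+J^{-1})$ as both $I,J\to\infty$. Combining this with Steps 1--2 and Slutsky's lemma yields $\Sigma^{-1/2}(\widehat{\mathbf Y}-\bar{\mathbf Y})\Rightarrow N(0,\mathrm I_4)$.

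The bookkeeping in Step 3 with the exact hypergeometric covariances is where care is needed. A fallback is the crude bound $|\E(\epsilon_i\epsilon_{i'})|\le C/I$ for $i\ne i'$, combined with centering.
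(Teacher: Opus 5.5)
Your proposal is correct and follows essentially the same route as the paper. It uses the same exact three-term decomposition (your $\epsilon_i^B$ is the paper's $U_{ip}/e_p^B$), the same Cram\'er--Wold reduction of the first-order part to a two-stratum permutation statistic with the identical Lyapunov bound via Condition~\ref{cond:vector-clt}(ii), and the same double-centering plus exact complete-randomization covariance computation giving $\var(R_{BS})=O((IJ)^{-1})=o(I^{-1}+J^{-1})$, followed by Slutsky. The one step the paper spells out more explicitly is representing each side's assignment as a uniform permutation of centered labels $c_k$, so that the arrays $x_i c_k/I$ satisfy the row/column-centering hypothesis of the stratified CLT.
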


Theorem~\ref{thm:vector-clt} gives the joint design-based distribution of the four exposure-specific mean estimators. It implies the CLT for $F\widehat{\mathbf Y}$ immediately.

\section{Variance Estimators and Inference}\label{sec:variance-inference}
\subsection{Two-way cluster-robust covariance}\label{subsec:joint-inference}
The two-sided randomization creates dependence along both dimensions of the
outcome array. The buyer-side assignment $W_i^B$ enters the exposure state of
every pair in row $i$, while the seller-side assignment $W_j^S$ enters the
exposure state of every pair in column $j$. This naturally suggests clustering
by both buyers and sellers.

Conventional two-way cluster-robust inference is typically developed under
stochastic dependence
\citep{cameron2011robust,yap2025asymptotic}.
A design-based perspective on regression uncertainty is developed by
\citet{abadie2020sampling}, while \citet{abadie2023should} emphasize
the sampling- and assignment-design foundations of clustered inference.
\citet{xu2024clustering} study multiway clustering in a design-based framework
with clustered sampling and assignment, showing that the conventional multiway
cluster-robust variance estimator need not be conservative in that setting.
Here we consider two-sided experiments under two independent complete
randomizations. Treating the potential outcomes as fixed, we show below that
the conventional two-way cluster-robust covariance estimator is asymptotically
conservative under this randomization structure.

Let $Z$ be the $IJ\times 4$ design matrix whose row corresponding to pair $(i,j)$ is
$z_{ij}^\top$. Let $Z_i$ denote the submatrix of $Z$ containing all rows associated with buyer
$i$, and let $\widehat u_i$ denote the corresponding vector of OLS residuals. Similarly, let
$Z_{\cdot j}$ and $\widehat u_{\cdot j}$ denote the seller-side analogues. Define $\Gamma_B
=
\sum_{i=1}^I
Z_i^\top \widehat u_i\widehat u_i^\top Z_i$, $
\Gamma_S
=
\sum_{j=1}^J
Z_{\cdot j}^\top \widehat u_{\cdot j}\widehat u_{\cdot j}^\top Z_{\cdot j}$,
and
$
\Gamma_{BS}
=
\sum_{i=1}^I\sum_{j=1}^J
z_{ij}z_{ij}^\top \widehat u_{ij}^2.$
The two-way clustered covariance matrix \citep{cameron2011robust} for $\widehat{\mathbf Y}$ is
\begin{equation}\label{eq:twoway-cov-vector}
\widehat\Omega_{\mathrm{2w}}
=
(Z^\top Z)^{-1}
\bigl(
\Gamma_B+\Gamma_S-\Gamma_{BS}
\bigr)
(Z^\top Z)^{-1}.
\end{equation}

We now state the asymptotic conservativeness for the two-way
cluster-robust covariance estimator. Split $A_i$ and $B_j$
by assignment state. Let
$A_i^{(1)}=(\bar Y_{i\cdot}(1,1)/e_1^B,
\bar Y_{i\cdot}(1,0)/e_1^B,0,0)^\top$ and
$A_i^{(0)}=(0,0,\bar Y_{i\cdot}(0,1)/e_0^B,
\bar Y_{i\cdot}(0,0)/e_0^B)^\top$, and define
$B_j^{(1)}$ and $B_j^{(0)}$ analogously. Then
$A_i=A_i^{(1)}-A_i^{(0)}$ and $B_j=B_j^{(1)}-B_j^{(0)}$.
\begin{prop}
\label{prop:twoway-vector-conservative}
Under Assumptions~\ref{ass:smrd-randomization} and~\ref{ass:local-interference}, suppose
Condition~\ref{cond:vector-clt} holds. Suppose further that there exists a constant
$C_4<\infty$ such that, for all $(p,q)\in\{0,1\}^2$,
\[
\frac{1}{IJ}
\sum_{i=1}^I\sum_{j=1}^J
\{Y_{ij}(p,q)-\bar Y(p,q)\}^4
\le C_4.
\]
Then
\begin{equation}\label{eq:Omega2w-conservative}
\widehat\Omega_{\mathrm{2w}}-\Sigma
=\frac 1 I \mathcal S_I\left(e_1^B A^{(1)}+e_0^BA^{(0)}\right)+\frac 1 J \mathcal S_J\left(e_1^S B^{(1)}+e_0^SB^{(0)}\right)
+o_{\mathbb P} (I^{-1}+J^{-1}).
\end{equation}
\end{prop}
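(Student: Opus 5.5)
Since $Z^\top Z=\mathrm{diag}(I_1J_1,I_1J_0,I_0J_1,I_0J_0)$, I will write $\widehat\Omega_{\mathrm{2w}}=D^{-1}(\Gamma_B+\Gamma_S-\Gamma_{BS})D^{-1}$ with $D=Z^\top Z$ and analyze the three pieces separately. First I dispose of $\Gamma_{BS}$. It is diagonal with entries $\sum_{ij}G_{pq,ij}\widehat u_{ij}^2$, so $D^{-1}\Gamma_{BS}D^{-1}$ has entries of order $(IJ)^{-1}\cdot(\text{average squared residual})$. Using the fourth-moment bound, the within-cell average of $\widehat u_{ij}^2$ is $O_{\mathbb P}(1)$, so this term is $O_{\mathbb P}((IJ)^{-1})=o_{\mathbb P}(I^{-1}+J^{-1})$.

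Next I treat the buyer term. Row $i$ lies in exposure states $(W_i^B,\cdot)$ only, so $Z_i^\top\widehat u_i$ has nonzero entries only in the two coordinates with $p=W_i^B$. The entry for $q$ is $\sum_{j:W_j^S=q}\{Y_{ij}(p,q)-\widehat Y(p,q)\}$. After scaling by $D^{-1}$, the $(p,q)$ coordinate becomes $(I_pJ_q)^{-1}\sum_{j:W_j^S=q}\{Y_{ij}(p,q)-\widehat Y(p,q)\}$. I will replace this sample-based quantity by its population analogue $(I_p)^{-1}\{\bar Y_{i\cdot}(p,q)-\bar Y(p,q)\}$. The replacement has two parts. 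Column subsampling gives $J_q^{-1}\sum_{j:W_j^S=q}Y_{ij}(p,q)-\bar Y_{i\cdot}(p,q)$, whose mean square averaged over $i$ is $O(J^{-1})$ by the finite-population sampling variance and the moment bounds. The term $\widehat Y(p,q)-\bar Y(p,q)=O_{\mathbb P}(I^{-1/2}+J^{-1/2})$ by Theorem~\ref{thm:vector-clt}. Then $D^{-1}\Gamma_BD^{-1}=I^{-2}\sum_i v_iv_i^\top$, where
\[
v_i=W_i^B\bigl(A_i^{(1)}-\bar A^{(1)}\bigr)+(1-W_i^B)\bigl(A_i^{(0)}-\bar A^{(0)}\bigr)\cdot\frac{e_0^B}{e_0^B}
\]
up to sign and up to the scaling $I/I_p=1/e_p^B$. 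Concretely, the coordinates carry $\bar Y_{i\cdot}/e_p^B$, matching the definition of $A^{(p)}$. The cross terms between $v_i$ and the error are controlled by Cauchy--Schwarz, using the fourth-moment condition to bound second moments of the errors uniformly.

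The resulting quantity $I^{-2}\sum_i W_i^B(A_i^{(1)}-\bar A^{(1)})(\cdot)^\top+I^{-2}\sum_i(1-W_i^B)(A_i^{(0)}-\bar A^{(0)})(\cdot)^\top$ is a pair of sample means under complete randomization. By a weak law for finite populations (variance bounded via fourth moments), it equals
\[
I^{-1}\bigl\{e_1^B\mathcal S_I(A^{(1)})+e_0^B\mathcal S_I(A^{(0)})\bigr\}+o_{\mathbb P}(I^{-1}).
\]
The algebraic step is to check the identity
\[
e_1\mathcal S(A^{(1)})+e_0\mathcal S(A^{(0)})=e_1e_0\,\mathcal S(A^{(1)}-A^{(0)})+\mathcal S(e_1A^{(1)}+e_0A^{(0)}).
\]
This is the standard Neyman decomposition: expand both sides and use $e_1+e_0=1$, so that $e_1-e_1^2=e_1e_0$. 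Since $A=A^{(1)}-A^{(0)}$, this gives the buyer part of $\Sigma$ plus the buyer part of the excess in \eqref{eq:Omega2w-conservative}. The seller term is handled symmetrically, and the two pieces sum to the claim.

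The main obstacle will be the replacement step in the buyer (and seller) term. The column subsampling inside row $i$ introduces an error whose square, summed over $i$ and divided by $I^2$, is of order $(IJ)^{-1}$, but only in expectation and after accounting for dependence across rows through the common $W^S$. I plan to bound its expectation directly by conditioning on $W^B$ and using the exact sampling variance of a simple random sample from row $i$. This requires $\frac1I\sum_i\frac1J\sum_j Y_{ij}^2$-type bounds, which follow from the fourth-moment (indeed second-moment) condition.
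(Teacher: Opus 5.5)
Your proposal is correct and follows essentially the same route as the paper. You discard the diagonal intersection term as $O_{\mathbb P}((IJ)^{-1})$, identify each buyer (seller) cluster score as $I^{-1}$ ($J^{-1}$) times a state-specific projection vector, control the within-row seller-subsampling error through its conditional simple-random-sampling variance and the independence of $W^B$ and $W^S$, and finish with the identity $e_1\mathcal S(X)+e_0\mathcal S(Y)=e_1e_0\mathcal S(X-Y)+\mathcal S(e_1X+e_0Y)$.

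The only difference is minor. You center the scores at population means and apply a direct Chebyshev-type law of large numbers, using the fourth moments, to the averages of outer products. The paper instead keeps the exact sample-mean centering, writes the buyer contribution as $\sum_p \frac{I_p-1}{I^2}$ times within-state sample covariances (and analogously for sellers), and then invokes its sample-variance consistency lemma.
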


Proposition~\ref{prop:twoway-vector-conservative} gives an explicit
first-order decomposition of the conservativeness of
$\widehat\Omega_{\mathrm{2w}}$. The first two terms
on the right-hand side of \eqref{eq:Omega2w-conservative} are positive
semidefinite. 
\begin{remark}[Finite-sample positive semidefiniteness]
\label{rem:twoway-psd}
As noted by \citet{cameron2011robust}, the conventional
two-way cluster-robust covariance need not be positive semidefinite in finite
samples. Under the conditions of Proposition~\ref{prop:twoway-vector-conservative}, however, $\Gamma_{BS}$ is first-order negligible, so omitting it yields a
positive-semidefinite, first-order-equivalent covariance matrix; see
\eqref{eq:intersection-negligible} in Section~\ref{app:proof-two-way}.
\end{remark}

\begin{remark}[Multiple treatment levels]
\label{rem:multiple-levels}

The regression formulation extends naturally to fixed multiple levels of $W^B$ and $W^S$ via a
saturated regression on all buyer-seller treatment cells, equivalently on
buyer- and seller-level dummies and their interactions, with two-way clustered
standard errors. Under factor coding, centering determines how effects are
averaged over the other factor \citep{zhao2022factorial}.
Strip-plot design is a special case of this multi-level factorial structure:
within each block, each treatment level is assigned to a single row or column, so that
$I_p=J_q=1$ in our notation. Their variance estimation therefore relies on replication across
blocks rather than within treatment cells \citep{alqallaf2019causal}.
\end{remark}

\subsection{Improving the variance estimator for a scalar parameter}
\label{subsec:scalar-inference}
We next improve variance estimation for a fixed scalar linear contrast.

For a fixed contrast vector $\beta$, let
$\tau_\beta=\beta^\top\bar{\mathbf Y}$ and
$\widehat\tau_\beta=\beta^\top\widehat{\mathbf Y}$.
The vector CLT in Theorem~\ref{thm:vector-clt} immediately gives a CLT for $\widehat\tau_\beta$ with the asymptotic variance defined as $\sigma_\beta^2
=
\beta^\top\Sigma\beta.$

To express $\sigma_\beta^2$ in a form useful for variance estimation, define the
state-specific buyer-side projections
\(
a_i(p)
=\beta^\top A_i^{(p)},
\) for \(
p\in\{0,1\}\),
and the state-specific seller-side projections
\(
b_j(q)
=\beta ^\top B_j^{(q)},
\) for \(
q\in\{0,1\}.\)
These quantities satisfy
\[
\beta^\top A_i
=
a_i(1)-a_i(0),
\qquad
\beta^\top B_j
=
b_j(1)-b_j(0).
\]

Define 
\(
S_{a,p}^2
=
(I-1)^{-1}
\sum_{i=1}^I
\{a_i(p)-\bar a(p)\}^2
\) and \(S_{a,\Delta}^2
=(I-1)^{-1}\sum_{i=1}^I
\{\beta^\top (A_i-\bar A) \}^2\).
Similarly, define $S^2_{b,q}$ and $S^2_{b,\Delta}$.
The asymptotic variance of $\widehat\tau_\beta$ is
\begin{equation}\label{eq:scalar-leading-variance}
\sigma_\beta^2
=
\frac{e_1^Be_0^B}{I}S_{a,\Delta}^2
+
\frac{e_1^Se_0^S}{J}S_{b,\Delta}^2.
\end{equation}
By the Cauchy--Schwarz inequality, we have
\(
S_{a,\Delta}^2
\le
(S_{a,1}+S_{a,0})^2
\) and \(
S_{b,\Delta}^2
\le
(S_{b,1}+S_{b,0})^2.
\)
Hence,
\begin{equation}\label{eq:Vopt-population}
\sigma_\beta^2
\le
\frac{e_1^Be_0^B}{I}
(S_{a,1}+S_{a,0})^2
+
\frac{e_1^Se_0^S}{J}
(S_{b,1}+S_{b,0})^2.
\end{equation}
We denote the right-hand side of \eqref{eq:Vopt-population} by
$V_{\mathrm{opt},\beta}^{\circ}$ and estimate it using sample analogues of
the state-specific standard deviations.
\begin{remark}[Exactness of variance targets for a scalar parameter]
\label{rem:scalar-variance-exactness}
The requirement for $V_{\mathrm{opt},\beta}^{\circ}=\sigma_\beta^2$ is weaker than $\beta^\top\Omega_{\mathrm{2w}}^\circ\beta=\sigma_\beta^2$. On the buyer side, exactness of the optimized target requires that there exist constants $c_1^B,c_0^B\ge0$ with $c_1^B+c_0^B=1$ such that for all $i$,
\[
c_1^B \{a_i(1)-\bar a(1)\}+c_0^B\{a_i(0)-\bar a(0)\}=0.
\]
Exactness of the two-way target is the special case
$c_1^B=e_1^B$ and $c_0^B=e_0^B$.
The seller-side conditions are analogous. 
Corollary~\ref{cor:scalar-variance-exactness} in Section~\ref{app:optimized-variance-family}
formalizes this comparison.
\end{remark}
This Cauchy--Schwarz bound on the cross-state covariance term parallels the improved conservative variance bound
for completely randomized experiments in \citet[Problem~4.5]{ding2024first};
see also \citet{sengupta2026design}. 
For comparison with other conservative variance estimators,
Section~\ref{app:optimized-variance-family} embeds
$V_{\mathrm{opt},\beta}^{\circ}$ in a two-parameter family of conservative targets.
The family contains the first-order target of the two-way cluster-robust
variance estimator and, for the buyer-spillover contrast, the target associated
with \citet{liu2025randomization} as special cases.

We next construct the sample analogue of $V_{\mathrm{opt},\beta}^{\circ}$. For $p,q\in\{0,1\}$, buyers $i$ with $W_i^B=p$ and sellers $j$ with $W_j^S=q$, define $\widehat{\bar Y}_{i\cdot}(p,q) = J_q^{-1}\sum_{j=1}^J Y_{ij}^{\mathrm{obs}}\mathbf 1\{W_j^S=q\}$ and $\widehat{\bar Y}_{\cdot j}(p,q) = I_p^{-1}\sum_{i=1}^I Y_{ij}^{\mathrm{obs}}\mathbf 1\{W_i^B=p\}$. For each buyer $i$ with $W_i^B=p$, define $\widehat a_i(p) = (e_p^B)^{-1}\sum_{q=0}^1 \beta_{pq}\widehat{\bar Y}_{i\cdot}(p,q)$. Let $\widehat{\bar a}(p) = I_p^{-1}\sum_{i:W_i^B=p}\widehat a_i(p)$ and $s_{a,p}^2 = (I_p-1)^{-1}\sum_{i:W_i^B=p} \{\widehat a_i(p)-\widehat{\bar a}(p)\}^2$ denote its sample mean and variance. Define the seller-side quantities $\widehat b_j(q)$, $\widehat{\bar b}(q)$, and $s_{b,q}^2$ analogously. We call the resulting estimator optimized because its population target minimizes a two-parameter family of conservative variance targets; see Section~\ref{app:optimized-variance-family}. Specifically, define
\begin{equation}\label{eq:Vopt-sample} \widehat V_{\mathrm{opt},\beta} = \frac{e_1^Be_0^B}{I} \bigl(s_{a,1}+s_{a,0}\bigr)^2 + \frac{e_1^Se_0^S}{J} \bigl(s_{b,1}+s_{b,0}\bigr)^2. \end{equation}

Let $\Omega_{\mathrm{2w}}^\circ$ denote the deterministic
first-order target defined in
\eqref{eq:Omega2w-pop-target} in Section~\ref{app:proof-two-way}.
\begin{prop}[Validity of the optimized variance estimator]
\label{prop:optimized-scalar-variance}
Under the conditions of
Proposition~\ref{prop:twoway-vector-conservative},
\(
\widehat V_{\mathrm{opt},\beta}
=
V_{\mathrm{opt},\beta}^{\circ}
+
o_{\mathbb P}(I^{-1}+J^{-1}),
\)
and
\(
\sigma_\beta^2\le V_{\mathrm{opt},\beta}^{\circ}
\le \beta^\top \Omega_{\textup{2w}}^\circ \beta.
\)
\end{prop}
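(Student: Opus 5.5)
The plan is to prove the chain of inequalities first, since it is a deterministic computation. The consistency statement $\widehat V_{\mathrm{opt},\beta}=V^\circ_{\mathrm{opt},\beta}+o_{\mathbb P}(I^{-1}+J^{-1})$ then reduces to showing that $s_{a,p}^2-S_{a,p}^2=o_{\mathbb P}(1)$ and $s_{b,q}^2-S_{b,q}^2=o_{\mathbb P}(1)$ for each $p,q\in\{0,1\}$.

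\emph{Inequalities.} The bound $\sigma_\beta^2\le V^\circ_{\mathrm{opt},\beta}$ is exactly \eqref{eq:Vopt-population}, which follows from Cauchy--Schwarz on the cross-state covariance. For the second inequality I would read off $\beta^\top\Omega^\circ_{\mathrm{2w}}\beta$ from Proposition~\ref{prop:twoway-vector-conservative}. I expect the target \eqref{eq:Omega2w-pop-target} to equal $\Sigma$ plus the two displayed excess terms. Write $C_a$ for the covariance of $a_i(1)$ and $a_i(0)$ over $i$, so that $S_{a,\Delta}^2=S_{a,1}^2+S_{a,0}^2-2C_a$. The buyer part of $\beta^\top\Omega^\circ_{\mathrm{2w}}\beta$ is then
\[
\frac{1}{I}\Bigl[e_1^Be_0^B\bigl(S_{a,1}^2+S_{a,0}^2-2C_a\bigr)+\bigl\{(e_1^B)^2S_{a,1}^2+(e_0^B)^2S_{a,0}^2+2e_1^Be_0^BC_a\bigr\}\Bigr]
=\frac{1}{I}\bigl(e_1^BS_{a,1}^2+e_0^BS_{a,0}^2\bigr),
\]
where I used $e_1^B+e_0^B=1$ and the $C_a$ terms cancel. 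Subtracting the buyer part of $V^\circ_{\mathrm{opt},\beta}$ gives
\[
e_1^BS_{a,1}^2+e_0^BS_{a,0}^2-e_1^Be_0^B(S_{a,1}+S_{a,0})^2=(e_1^BS_{a,1}-e_0^BS_{a,0})^2\ge0.
\]
The seller side is identical, which proves the chain.

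\emph{Consistency.} Condition~\ref{cond:vector-clt}(i) and the moment bounds together with Jensen's inequality give $S_{a,p}^2,S_{b,q}^2=O(1)$. Since $|\sqrt x-\sqrt y|\le\sqrt{|x-y|}$, it suffices to prove $s_{a,p}^2-S_{a,p}^2=o_{\mathbb P}(1)$, and the seller side is symmetric. I would decompose $\widehat a_i(p)=a_i(p)+\varepsilon_i(p)$ for buyers with $W_i^B=p$. Here $\varepsilon_i(p)=(e_p^B)^{-1}\sum_q\beta_{pq}\{\widehat{\bar Y}_{i\cdot}(p,q)-\bar Y_{i\cdot}(p,q)\}$ is a seller-side sampling error that depends only on $W^S$, because $\widehat{\bar Y}_{i\cdot}(p,q)$ is a completely randomized subsample mean of row $i$ of $Y_{\cdot\cdot}(p,q)$.

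Write $s_{a,p}^2$ as the sum of three pieces: the sample variance of the true $a_i(p)$ over treated buyers, twice a cross term, and the sample variance of $\varepsilon_i(p)$. For the first piece, buyer-side complete randomization and the fourth-moment assumption (again via Jensen on row means) give that the sample variance over a random subset of size $I_p$ has mean $S_{a,p}^2$ and variance $O(I^{-1})$. Hence it equals $S^2_{a,p}+o_{\mathbb P}(1)$. For the third piece, I bound its mean by $I_p^{-1}\sum_{i}\E\,\varepsilon_i(p)^2$. Each summand is at most a constant times $J^{-1}$ times the row-$i$ finite-population variance, so the average over $i$ is $O(J^{-1})\cdot(IJ)^{-1}\sum_{i,j}\{Y_{ij}(p,q)-\bar Y(p,q)\}^2=O(J^{-1})$. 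By Markov this piece is $o_{\mathbb P}(1)$, and the cross term is then $o_{\mathbb P}(1)$ by Cauchy--Schwarz.

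The step I expect to need the most care is this error term. Two points matter. First, the treated buyer set and the sellers' sampling are independent, so conditioning on $W^B$ leaves the bound on $\E\{\varepsilon_i(p)^2\mid W^B\}$ intact. Second, I must check that the scaling in $\widehat a_i(p)$ and $a_i(p)$ matches, in particular the signs and the $e_p^B$ factors coming from $A_i^{(p)}$. Centering is harmless because sample variances are invariant to additive constants.
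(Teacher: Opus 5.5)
Your proposal is correct and follows essentially the paper's route. For the upper bound you use the same identity $\beta^\top\Omega_{\mathrm{2w}}^\circ\beta-V_{\mathrm{opt},\beta}^\circ=I^{-1}(e_1^BS_{a,1}-e_0^BS_{a,0})^2+J^{-1}(e_1^SS_{b,1}-e_0^SS_{b,0})^2$. For consistency you use the same split of $\widehat a_i(p)$ into the fixed projection plus a seller-side sampling error of mean square $O(J^{-1})$, which is the paper's $d_i(p)=\check a_i(p)-a_i(p)$ in Lemma~\ref{lem:estimated-projection-variances}. The only differences are technical. You handle the fixed-projection sample variance with a direct fourth-moment variance bound and the cross term with Cauchy--Schwarz, whereas the paper uses Lemma~\ref{lemma:sample-var-consistency} (with its subsequence argument) and the reverse triangle inequality for centered norms.
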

Combining Proposition~\ref{prop:optimized-scalar-variance} with the CLT for $\widehat\tau_\beta$ yields
asymptotically conservative Wald confidence intervals based on
$\widehat V_{\mathrm{opt},\beta}$.

Our optimized variance improves on the existing variance estimators from the literature. Define the probability limit of variance estimators in \citet{masoero2026multiple} and \citet{sudijono2026regression} as
\[
V_{\mathrm{Mas},\beta}^{\circ}
=
\sum_{p,q}
\beta_{pq}^2
\var\{\widehat Y(p,q)\}
+
\frac{1}{2}
\sum_{(p,q)\neq(p',q')}
|\beta_{pq}\beta_{p'q'}|
\left[
\var\{\widehat Y(p,q)\}
+
\var\{\widehat Y(p',q')\}
\right],
\]
and
\[
V_{\mathrm{Sud},\beta}^{\circ}
=
\left[
\sum_{p,q}
|\beta_{pq}|
\sqrt{\var\{\widehat Y(p,q)\}}
\right]^2.
\]

\begin{prop}
\label{prop:scalar-variance-comparison}
For any fixed contrast vector $\beta$, we have
\(
V_{\mathrm{opt},\beta}^{\circ}
\le
V_{\mathrm{Sud},\beta}^{\circ}
\le
V_{\mathrm{Mas},\beta}^{\circ}.
\)
\end{prop}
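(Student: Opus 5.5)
The plan is to prove the two inequalities separately. The right-hand one, $V_{\mathrm{Sud},\beta}^{\circ}\le V_{\mathrm{Mas},\beta}^{\circ}$, is elementary. The left-hand one needs a lower bound on each $\var\{\widehat Y(p,q)\}$ obtained from the H\'ajek decomposition \eqref{eq:HoeffdingEquation}.

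\emph{Step 1 (Sud $\le$ Mas).} Write $v_{pq}=\var\{\widehat Y(p,q)\}$. Expanding the square gives
\[
V_{\mathrm{Sud},\beta}^{\circ}=\sum_{p,q}\beta_{pq}^2v_{pq}+\sum_{(p,q)\neq(p',q')}|\beta_{pq}\beta_{p'q'}|\sqrt{v_{pq}v_{p'q'}} .
\]
The AM--GM inequality $\sqrt{v_{pq}v_{p'q'}}\le (v_{pq}+v_{p'q'})/2$, applied to each cross term, gives $V_{\mathrm{Mas},\beta}^{\circ}$ directly.

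\emph{Step 2 (lower bound on the cell variances).} Apply \eqref{eq:HoeffdingEquation} coordinatewise to $\widehat Y(p,q)-\bar Y(p,q)$. The three pieces are the buyer projection, the seller projection and $R_{BS}$.
\begin{itemize}
\item The buyer and seller projections are independent, because $W^B$ and $W^S$ are independent by Assumption~\ref{ass:smrd-randomization}.
\item By construction of the H\'ajek/Hoeffding decomposition, $R_{BS}$ satisfies $\E(R_{BS}\mid W^B)=\E(R_{BS}\mid W^S)=0$. Hence it is uncorrelated with both projections.
\end{itemize}
Therefore $v_{pq}\ge (\sigma^B_{pq})^2+(\sigma^S_{pq})^2$, where $(\sigma^B_{pq})^2$ and $(\sigma^S_{pq})^2$ are the variances of the $(p,q)$ coordinates of the two projections. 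The $(p,q)$ coordinate of $A_i$ is $\pm\bar Y_{i\cdot}(p,q)/e_p^B$. The complete-randomization variance formula (the diagonal of \eqref{eq:Sigma-vector}) then gives
\[
\sigma^B_{pq}=\sqrt{e_1^Be_0^B/I}\;\mathrm{sd}_I\{\bar Y_{i\cdot}(p,q)\}/e_p^B ,
\]
with $\mathrm{sd}_I$ the $(I-1)$-denominator standard deviation, and analogously
\[
\sigma^S_{pq}=\sqrt{e_1^Se_0^S/J}\;\mathrm{sd}_J\{\bar Y_{\cdot j}(p,q)\}/e_q^S .
\]
The signs in $A_i$ and $B_j$ are irrelevant to these variances.

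\emph{Step 3 (opt $\le$ Sud).} By definition, $a_i(p)=\sum_{q}\beta_{pq}\bar Y_{i\cdot}(p,q)/e_p^B$. Minkowski's inequality for standard deviations gives $S_{a,p}\le\sum_q|\beta_{pq}|\,\mathrm{sd}_I\{\bar Y_{i\cdot}(p,q)\}/e_p^B$. Hence
\[
x:=\sqrt{e_1^Be_0^B/I}\,(S_{a,1}+S_{a,0})\le\sum_{p,q}|\beta_{pq}|\sigma^B_{pq},
\]
and likewise $y:=\sqrt{e_1^Se_0^S/J}\,(S_{b,1}+S_{b,0})\le\sum_{p,q}|\beta_{pq}|\sigma^S_{pq}$. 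Then
\[
V_{\mathrm{opt},\beta}^{\circ}=x^2+y^2\le \Bigl\|\sum_{p,q}|\beta_{pq}|(\sigma^B_{pq},\sigma^S_{pq})\Bigr\|_2^2\le\Bigl(\sum_{p,q}|\beta_{pq}|\sqrt{(\sigma^B_{pq})^2+(\sigma^S_{pq})^2}\Bigr)^2 ,
\]
where the last step is the triangle inequality in $\mathbb R^2$. By Step~2 each square root is at most $\sqrt{v_{pq}}$, so the right-hand side is at most $V_{\mathrm{Sud},\beta}^{\circ}$.

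\emph{Main obstacle.} The delicate point is Step~2. I must verify that the first-order variance of a single coordinate matches the diagonal of $\Sigma$ exactly in finite samples, including the $(I-1)$ normalization. I must also verify that $R_{BS}$ is genuinely orthogonal to both projections, so that it only adds variance. I would check this by writing $\widehat Y(p,q)$ as a bilinear form in the indicators $\mathbf 1\{W_i^B=p\}$ and $\mathbf 1\{W_j^S=q\}$. Centering each indicator exhibits $R_{BS}$ as a product of centered buyer and centered seller terms. Its conditional means given either side then vanish by independence.
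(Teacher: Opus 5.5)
Your proof is correct and follows the paper's argument. The paper also proves the second inequality via AM--GM, written as the identity $V_{\mathrm{Mas},\beta}^{\circ}-V_{\mathrm{Sud},\beta}^{\circ}=\tfrac12\sum_{g\neq h}w_gw_h(\sqrt{v_g}-\sqrt{v_h})^2$. It proves the first from the lower bound $v_{pq}\ge x_{pq}^2+y_{pq}^2$ given by the buyer/seller/interaction cell-variance decomposition (your Step~2 re-derives that lemma, with the same orthogonality argument), then the triangle inequality for finite-population standard deviations, then Minkowski's inequality in $\mathbb R^2$.
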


Proposition~\ref{prop:scalar-variance-comparison} shows that
$V_{\mathrm{opt},\beta}^{\circ}$ is sharper because it preserves jointly observed
within-state variation and bounds only the unobserved cross-state covariance.

To make the source of the improvement clear, Table~\ref{tab:bspill-var-targets}
in Section~\ref{app:buyer-spillover-illustration} compares the variance
targets using the buyer-spillover effect as an example. The variance of the
buyer-spillover estimator contains buyer-side and seller-side components.
The seller-side covariance is consistently estimable, whereas the buyer-side
cross-state covariance is not identified. Our estimator preserves the former
and bounds only the latter. By contrast, the Masoero- and Sudijono-type
constructions separate the exposure-specific components on both sides before
bounding their covariance and therefore discard estimable seller-side dependence.
The buyer-spillover contrast also illustrates
Remark~\ref{rem:scalar-variance-exactness}. Since $b_j(1)=0$ for every
seller $j$, the optimized target is automatically exact on the seller side,
whereas the two-way target is exact on that side only if
$b_j(0)=\bar b(0)$ for all $j$.

The optimized estimator has a direct regression implementation. After fitting the saturated
regression, one separates the usual buyer- and seller-cluster score contributions by assignment
state and changes only their final aggregation. Section~\ref{app:cluster-score-implementation} in the Supplementary Materials
gives the exact score identities, the aggregation rule, and its first-order equivalence to
\(\widehat V_{\mathrm{opt},\beta}\).

\section{Regression Adjustment using Covariates}\label{sec:regression-adjustment}
A regression-based view makes covariate adjustment natural. Once pre-treatment
covariates are available, we can add them to the exposure-dummy regression.
Regression adjustment has been extensively studied for randomized experiments
and online experimentation
\citep{freedman2008regression,lin2013agnostic,deng2013improving}.
In completely randomized experiments, two canonical specifications are
Fisher's additive regression \citep{Fisher1935}, and Lin's fully
interacted regression \citep{lin2013agnostic}. Fisher's regression uses a common covariate slope across treatment arms. Lin's regression allows the slope to vary by treatment arm. In two-sided experiments, we analogously add covariates to the saturated exposure-dummy regression, using either a common slope or cell-specific slopes across the four exposure cells.

Let $X_{ij}\in\mathbb R^d$ be fixed pre-treatment covariates for pair $(i,j)$, where $d$ is fixed. Define the finite-population covariate mean $\bar X = (IJ)^{-1}\sum_{i=1}^I\sum_{j=1}^J X_{ij}.$ 
We define Fisher's regression in two-sided experiments as \begin{equation}\label{eq:reg-fisher-adjustment} (\widehat{\boldsymbol\mu}_{\mathrm F},\widehat\theta_{\mathrm F}) \in \arg\min_{\mu\in\mathbb R^4,\theta\in\mathbb R^d} \sum_{i=1}^I\sum_{j=1}^J \left[ Y_{ij}^{\mathrm{obs}} - \sum_{p=0}^1\sum_{q=0}^1 G_{pq,ij}\mu_{pq} - (X_{ij}-\bar X)^\top\theta \right]^2,
\end{equation}
where $\widehat{\boldsymbol\mu}_{\mathrm F} = (\widehat\mu_{\mathrm F,11}, \widehat\mu_{\mathrm F,10}, \widehat\mu_{\mathrm F,01}, \widehat\mu_{\mathrm F,00})^\top.$ Here the covariate adjustment uses a single slope $\widehat\theta_{\mathrm F}$ shared across the four exposure cells. Define Lin's regression in two-sided experiments as
\begin{equation}\label{eq:reg-lin-adjustment}
\begin{aligned}
 (\widehat{\boldsymbol\mu}_{\mathrm L},\{\widehat\theta_{\mathrm L,pq}\}_{p,q}) \in \arg\min_{\mu\in\mathbb R^4,\{\theta_{pq}\in\mathbb R^d\}_{p,q}} \sum_{i=1}^I\sum_{j=1}^J\! \left[ Y_{ij}^{\mathrm{obs}} - \sum_{p=0}^1\sum_{q=0}^1 G_{pq,ij} \left\{ \mu_{pq} + (X_{ij}-\bar X)^\top\theta_{pq} \right\} \right]^2,
\end{aligned}
\end{equation}
where $\widehat{\boldsymbol\mu}_{\mathrm L} = (\widehat\mu_{\mathrm L,11}, \widehat\mu_{\mathrm L,10}, \widehat\mu_{\mathrm L,01}, \widehat\mu_{\mathrm L,00})^\top.$
Equivalently, Lin's regression fits a separate covariate slope within each exposure cell. In both regressions, the four exposure-dummy coefficients form a regression-adjusted estimator of $\bar{\mathbf Y}$. We use the two-way clustered covariance matrix from the same regression for inference.

\subsection{Fisher's regression in two-sided experiments}
\label{subsec:fisher-adjustment}

For each $(p,q)\in\{0,1\}^2$, define the observed covariate mean in exposure cell $(p,q)$ as
\(
\widehat X(p,q)
=
{(I_pJ_q)}^{-1}
\sum_{i=1}^I\sum_{j=1}^J
X_{ij}G_{pq,ij}.
\)
The first-order conditions for the exposure-dummy coefficients in
\eqref{eq:reg-fisher-adjustment} imply
\begin{equation}\label{eq:mu-fisher-cell-representation}
\widehat\mu_{\mathrm F,pq}
=
\widehat Y(p,q)
-
\{\widehat X(p,q)-\bar X\}^\top\widehat\theta_{\mathrm F},
\qquad (p,q)\in\{0,1\}^2.
\end{equation}
To describe the large-sample behavior of this estimator, define
$\dot X_{ij}=X_{ij}-\bar X$ and
$\dot Y_{ij}(p,q)=Y_{ij}(p,q)-\bar Y(p,q)$.
By the Frisch--Waugh--Lovell theorem \citep{frisch1933partial,lovell1963seasonal}, the population Fisher slope
$\theta_{\mathrm F}^\star$ equals
\begin{equation}\label{eq:theta-fisher-star}
\theta_{\mathrm F}^\star
\in
\arg\min_{\theta\in\mathbb R^d}
\sum_{p=0}^1\sum_{q=0}^1
e_p^Be_q^S
\frac{1}{IJ}
\sum_{i=1}^I\sum_{j=1}^J
\left[
\dot Y_{ij}(p,q)
-
\dot X_{ij}^\top\theta
\right]^2.
\end{equation}
Equivalently, let
\begin{align}\label{eq:QX_def}
    Q_{X}^{\circ}
=
\frac{1}{IJ}
\sum_{i=1}^I\sum_{j=1}^J
\dot X_{ij}\dot X_{ij}^\top,\qquad q_{\mathrm F}^{\circ}
=
\sum_{p=0}^1\sum_{q=0}^1
e_p^Be_q^S
\left\{
\frac{1}{IJ}
\sum_{i=1}^I\sum_{j=1}^J
\dot X_{ij}\dot Y_{ij}(p,q)
\right\}.
\end{align}
When $Q_{X}^{\circ}$ is nonsingular, we have
\(
\theta_{\mathrm F}^\star
=
(Q_{X}^{\circ})^{-1}q_{\mathrm F}^{\circ},
\)
and the minimizer in \eqref{eq:theta-fisher-star} is unique.

Define
\(
Y_{ij}^{\mathrm F}(p,q)
=
Y_{ij}(p,q)
-
\dot X_{ij}^\top\theta_{\mathrm F}^\star
\).
For each $(p,q)$, its finite-population mean is
$(IJ)^{-1}\sum_{i,j}Y_{ij}^{\mathrm F}(p,q)=\bar Y(p,q)$.
Let $A_i^{\mathrm F}$, $B_j^{\mathrm F}$, and $\Sigma_{\mathrm F}$ be defined as
$A_i$, $B_j$, and $\Sigma$ in Section~\ref{subsec:twoway-vector}, but with
$Y_{ij}(p,q)$ replaced by $Y_{ij}^{\mathrm F}(p,q)$. That is,
\[
\Sigma_{\mathrm F}
=
\frac{e_1^B e_0^B}{I}
\cdot \mathcal S_I(A^{\mathrm F})
+
\frac{e_1^S e_0^S}{J}
\cdot\mathcal S_J(B^{\mathrm F}).
\]

We first impose a regularity condition shared by the Fisher and Lin adjustments. Let $\|\cdot\|$ denote the Euclidean norm and $\|\cdot\|_{\mathrm F}$ denote the Frobenius norm.
\begin{condition}[Covariate regularity]
\label{cond:covariate-regularity}
As $I,J\to\infty$, the covariate dimension $d$ is fixed,
and
\(
\lambda_{\min}(Q_{X}^{\circ})\ge c_Q>0
\)
for some constant $c_Q>0$. In addition, the covariate and outcome--covariate moments satisfy
\[
\frac{1}{IJ}\sum_{i=1}^I\sum_{j=1}^J
\|\dot X_{ij}\|^2
+
\frac{1}{IJ}\sum_{i=1}^I\sum_{j=1}^J
\|\dot X_{ij}\dot X_{ij}^\top-Q_{X}^{\circ}\|_{\mathrm F}^2
=O(1),
\]
and, for every $(p,q)\in\{0,1\}^2$,
\[
\frac{1}{IJ}\sum_{i=1}^I\sum_{j=1}^J
|\dot Y_{ij}(p,q)|^2
+
\frac{1}{IJ}\sum_{i=1}^I\sum_{j=1}^J
\|\dot X_{ij}\dot Y_{ij}(p,q)-M_{XY}(p,q)\|^2
=O(1),
\]
where
\(
M_{XY}(p,q)
=
(IJ)^{-1}\sum_{i=1}^I\sum_{j=1}^J
\dot X_{ij}\dot Y_{ij}(p,q).
\)
\end{condition}

\begin{condition}\label{cond:fisher-adjustment}
The residualized potential outcomes
$\{Y_{ij}^{\mathrm F}(p,q):p,q\in\{0,1\}\}$ satisfy
Condition~\ref{cond:vector-clt} with leading covariance matrix
$\Sigma_{\mathrm F}$.
\end{condition}
\begin{theorem}
\label{thm:fisher-adjustment}
Under Assumptions~\ref{ass:smrd-randomization} and~\ref{ass:local-interference}, and
Condition~\ref{cond:covariate-regularity} and \ref{cond:fisher-adjustment}, we have
\[
\Sigma_{\mathrm F}^{-1/2}
\left(
\widehat{\boldsymbol\mu}_{\mathrm F}
-
\bar{\mathbf Y}
\right)
\Rightarrow
N(0,\mathrm I_4).
\]
\end{theorem}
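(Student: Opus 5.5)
The plan is to reduce the Fisher-adjusted estimator to the unadjusted estimator applied to residualized potential outcomes, plus a negligible slope-estimation error. By \eqref{eq:mu-fisher-cell-representation}, for each $(p,q)$,
\[
\widehat\mu_{\mathrm F,pq}-\bar Y(p,q)
=
\bigl\{\widehat Y(p,q)-\{\widehat X(p,q)-\bar X\}^\top\theta_{\mathrm F}^\star-\bar Y(p,q)\bigr\}
-
\{\widehat X(p,q)-\bar X\}^\top(\widehat\theta_{\mathrm F}-\theta_{\mathrm F}^\star).
\]
The first bracket equals $\widehat Y^{\mathrm F}(p,q)-\bar Y(p,q)$, where $\widehat Y^{\mathrm F}(p,q)$ is the observed cell mean of $Y_{ij}^{\mathrm F}(W_i^B,W_j^S)$. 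These residualized outcomes are fixed potential outcomes satisfying Assumption~\ref{ass:local-interference}, and their finite-population means equal $\bar Y(p,q)$. Condition~\ref{cond:fisher-adjustment} states that they satisfy Condition~\ref{cond:vector-clt} with covariance $\Sigma_{\mathrm F}$. Theorem~\ref{thm:vector-clt} therefore gives $\Sigma_{\mathrm F}^{-1/2}(\widehat{\mathbf Y}^{\mathrm F}-\bar{\mathbf Y})\Rightarrow N(0,\mathrm I_4)$. Because $\lambda_{\min}(\Sigma_{\mathrm F})\ge c\{I^{-1}+J^{-1}\}$, it then suffices to show that the second term is $o_{\mathbb P}((I^{-1}+J^{-1})^{1/2})$, and Slutsky's lemma finishes the argument.

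For the second term, I will bound the two factors separately. First, $\widehat X(p,q)-\bar X=O_{\mathbb P}((I^{-1}+J^{-1})^{1/2})$. This follows from the same Hájek-type variance calculation used for $\widehat Y(p,q)$: the covariate cell mean is a subarray mean under independent row and column complete sampling, and its variance is bounded by $C\{I^{-1}+J^{-1}+(IJ)^{-1}\}$ times the second moment $(IJ)^{-1}\sum\|\dot X_{ij}\|^2=O(1)$ from Condition~\ref{cond:covariate-regularity}. Second, $\widehat\theta_{\mathrm F}-\theta_{\mathrm F}^\star=o_{\mathbb P}(1)$. Combining the two gives the product order $o_{\mathbb P}((I^{-1}+J^{-1})^{1/2})$, as needed.

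Consistency of the slope is the main step. By Frisch--Waugh--Lovell, $\widehat\theta_{\mathrm F}=\widehat Q_X^{-1}\widehat q$, where
\[
\widehat Q_X=(IJ)^{-1}\sum_{p,q}\sum_{i,j}G_{pq,ij}\{X_{ij}-\widehat X(p,q)\}\{X_{ij}-\widehat X(p,q)\}^\top,
\]
and $\widehat q$ is defined analogously with $Y^{\mathrm{obs}}_{ij}-\widehat Y(p,q)$ in place of the second covariate factor. I would expand each as $\sum_{p,q}e_p^Be_q^S$ times the cell-mean of $\dot X_{ij}\dot X_{ij}^\top$ (respectively $\dot X_{ij}\dot Y_{ij}(p,q)$), minus products of cell-mean deviations. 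The product terms are $O_{\mathbb P}(I^{-1}+J^{-1})$ by the previous paragraph and the analogous bound for $\widehat Y(p,q)$. Each cell mean of $\dot X\dot X^\top$ or $\dot X\dot Y(p,q)$ is again a subarray mean. Its deviation from the population mean $Q_X^\circ$ or $M_{XY}(p,q)$ has variance $O(I^{-1}+J^{-1})$ times the centered second moments, which Condition~\ref{cond:covariate-regularity} bounds. Chebyshev's inequality then gives $\widehat Q_X=Q_X^\circ+o_{\mathbb P}(1)$ and $\widehat q=q_{\mathrm F}^\circ+o_{\mathbb P}(1)$. Since $\lambda_{\min}(Q_X^\circ)\ge c_Q$, the continuous mapping theorem yields $\widehat\theta_{\mathrm F}\to\theta_{\mathrm F}^\star$ in probability.

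The main obstacle I expect is making the subarray-mean variance bound rigorous and uniform. I need $\var\{(I_pJ_q)^{-1}\sum_{i,j}G_{pq,ij}c_{ij}\}\le C(I^{-1}+J^{-1}+(IJ)^{-1})(IJ)^{-1}\sum_{i,j}(c_{ij}-\bar c)^2$ for a generic array $c$, and this must hold with only second moments, whereas the CLT condition uses $(2+\delta)$ moments. I would prove it directly from the Hoeffding decomposition \eqref{eq:HoeffdingEquation}. The row projection has variance bounded by $I^{-1}$ times the variance of the row means, which Jensen's inequality controls by the array second moment. The column projection is handled symmetrically. The interaction remainder is controlled by its own variance, computed using the inclusion probabilities of complete randomization on each side.
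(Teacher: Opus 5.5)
Your proposal is correct and matches the paper's proof essentially step for step. The paper's three steps are the ones you outline: consistency of $\widehat\theta_{\mathrm F}$ via the Frisch--Waugh--Lovell representation and the exposure-cell law of large numbers (Lemma~\ref{lem:cell-average-lln}, proved by the same buyer/seller/interaction decomposition you sketch, without needing moments beyond the second); the oracle CLT from Theorem~\ref{thm:vector-clt} applied to $Y^{\mathrm F}_{ij}(p,q)$; and the $O_{\mathbb P}\{(I^{-1}+J^{-1})^{1/2}\}\cdot o_{\mathbb P}(1)$ slope-replacement bound, combined with $\|\Sigma_{\mathrm F}^{-1/2}\|_{\mathrm{op}}=O\{(I^{-1}+J^{-1})^{-1/2}\}$ and Slutsky's theorem.
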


Let \(\widehat\Omega_{\mathrm{2w}}^{\mathrm F}\) be the dummy-coefficient block of the two-way cluster-robust covariance matrix from Fisher's regression
\eqref{eq:reg-fisher-adjustment}.
 We impose the following additional fourth-moment
condition for the variance estimator.

\begin{condition}\label{cond:fisher-variance}
Assume that there exists a constant
\(C_{F,4}<\infty\) such that, for every \((p,q)\in\{0,1\}^2\), we have
\[
\frac{1}{IJ}
\sum_{i=1}^I\sum_{j=1}^J
\{Y_{ij}^{\mathrm F}(p,q)-\bar Y(p,q)\}^4
\le C_{F,4}.
\]
\end{condition}

Define $A_i^{\mathrm F,(1)}$, $A_i^{\mathrm F,(0)}$,
$B_j^{\mathrm F,(1)}$, and $B_j^{\mathrm F,(0)}$
from the Fisher-residualized potential outcomes in the same way as
$A_i^{(1)}$, $A_i^{(0)}$, $B_j^{(1)}$, and $B_j^{(0)}$ defined in Section~\ref{app:proof-two-way}.

\begin{theorem}
\label{thm:fisher-2w-conservative}
Under Assumptions~\ref{ass:smrd-randomization} and~\ref{ass:local-interference},
and Conditions~\ref{cond:covariate-regularity}, \ref{cond:fisher-adjustment} and~\ref{cond:fisher-variance}, we have
\begin{equation}
\label{eq:fisher-2w-conservative}
\widehat\Omega_{\mathrm{2w}}^{\mathrm F}-\Sigma_{\mathrm F}
=
\frac{1}{I}
\mathcal S_I\left(
e_1^B A^{\mathrm F,(1)}
+
e_0^B A^{\mathrm F,(0)}
\right)
+
\frac{1}{J}
\mathcal S_J\left(
e_1^S B^{\mathrm F,(1)}
+
e_0^S B^{\mathrm F,(0)}
\right)
+
o_{\mathbb P}(I^{-1}+J^{-1}).
\end{equation}
\end{theorem}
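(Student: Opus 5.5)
The plan is to reduce Theorem~\ref{thm:fisher-2w-conservative} to Proposition~\ref{prop:twoway-vector-conservative}. We apply that proposition to the residualized potential outcomes $Y_{ij}^{\mathrm F}(p,q)$ with the \emph{population} slope $\theta_{\mathrm F}^\star$, and then show that replacing $\theta_{\mathrm F}^\star$ by $\widehat\theta_{\mathrm F}$ changes the two-way clustered covariance block only by $o_{\mathbb P}(I^{-1}+J^{-1})$.

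\textbf{Step 1: residuals and bread.} The OLS residual from \eqref{eq:reg-fisher-adjustment} is $\widehat u^{\mathrm F}_{ij}=Y_{ij}^{\mathrm{obs}}-\widehat\mu_{\mathrm F,W_i^BW_j^S}-\dot X_{ij}^\top\widehat\theta_{\mathrm F}$. By \eqref{eq:mu-fisher-cell-representation}, this equals $\tilde u_{ij}-(\dot X_{ij}-\widehat{\dot X}(W_i^B,W_j^S))^\top(\widehat\theta_{\mathrm F}-\theta_{\mathrm F}^\star)$, where $\tilde u_{ij}$ is the saturated-regression residual for the pseudo outcome $Y^{\mathrm{F,obs}}_{ij}=Y_{ij}^{\mathrm{obs}}-\dot X_{ij}^\top\theta_{\mathrm F}^\star$. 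For the bread, Frisch--Waugh--Lovell shows that the dummy block of the inverse Gram matrix of $(Z,\dot X)$ is $(Z^\top Z)^{-1}+(Z^\top Z)^{-1}Z^\top\dot X\,[\dot X^\top M_Z\dot X]^{-1}\dot X^\top Z(Z^\top Z)^{-1}$. For the meat, the scores of the dummy coefficients in the sandwich are $z_{ij}\widehat u_{ij}$ corrected by a term involving $\widehat X(p,q)-\bar X$ times the covariate-score block. The cell covariate means satisfy $\widehat X(p,q)-\bar X=O_{\mathbb P}(I^{-1/2}+J^{-1/2})$ by the same H\'ajek/Chebyshev argument as in Theorem~\ref{thm:vector-clt}, using Condition~\ref{cond:covariate-regularity}. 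So every correction involving these cross-block terms contributes a factor $o_{\mathbb P}(1)$ on top of an $O_{\mathbb P}(I^{-1}+J^{-1})$ quantity, and the dummy block of the full sandwich matches the sandwich built from $Z$ alone with residuals $\widehat u^{\mathrm F}_{ij}$, up to $o_{\mathbb P}(I^{-1}+J^{-1})$.

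\textbf{Step 2: slope consistency.} Show $\widehat\theta_{\mathrm F}-\theta_{\mathrm F}^\star=o_{\mathbb P}(1)$, in fact $O_{\mathbb P}(I^{-1/2}+J^{-1/2})$. Write $\widehat\theta_{\mathrm F}=(\dot X^\top M_Z\dot X/IJ)^{-1}(\dot X^\top M_Z Y^{\mathrm{obs}}/IJ)$. The numerator and denominator are cell-weighted averages of $\dot X_{ij}\dot X_{ij}^\top$ and $\dot X_{ij}\dot Y_{ij}(p,q)$, minus products of cell means. Under two-sided complete randomization, each cell-restricted average $(I_pJ_q)^{-1}\sum G_{pq,ij}H_{ij}$ concentrates around $(IJ)^{-1}\sum H_{ij}$, with variance bounded by $O(I^{-1}+J^{-1})$ times the second moment of $H_{ij}$ minus its mean. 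Those second moments are $O(1)$ by Condition~\ref{cond:covariate-regularity}. This gives convergence to $Q_X^\circ$ and $q_{\mathrm F}^\circ$; the lower eigenvalue bound $c_Q$ then handles the inverse.

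\textbf{Step 3: the main obstacle.} The expected difficulty is controlling how the slope perturbation enters the two-way meat $\Gamma_B+\Gamma_S-\Gamma_{BS}$. Write $\widehat u^{\mathrm F}=\tilde u-r$ with $r_{ij}=\check X_{ij}^\top\delta$, where $\delta=\widehat\theta_{\mathrm F}-\theta^\star_{\mathrm F}$ and $\check X_{ij}$ is the cell-demeaned covariate. Each $\Gamma$ then expands into $\tilde\Gamma$, cross terms, and quadratic terms. For the buyer cluster,
\[
\frac{1}{(IJ)^2}\sum_i \Bigl\|\sum_j z_{ij} r_{ij}\Bigr\|^2\le \frac{\|\delta\|^2}{(IJ)^2}\sum_i\Bigl\|\sum_j z_{ij}\check X_{ij}^\top\Bigr\|_{\mathrm F}^2\le \|\delta\|^2\cdot\frac{1}{I}\cdot\frac{1}{IJ}\sum_{i,j}\|\check X_{ij}\|^2 ,
\]
by Cauchy--Schwarz within rows. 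This is $o_{\mathbb P}(I^{-1})$. The seller quadratic term is analogous, and the intersection term is smaller still. The cross terms are handled by Cauchy--Schwarz between the quadratic term and $\tilde\Gamma$, since the latter is $O_{\mathbb P}(I^{-1}+J^{-1})$ after scaling by $(Z^\top Z)^{-1}$ on both sides. One must check that the second-moment covariate bounds suffice here rather than fourth moments; if they do not, a truncation argument can supply the missing control.

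\textbf{Step 4: conclusion.} Apply Proposition~\ref{prop:twoway-vector-conservative} to the pseudo outcomes $Y_{ij}^{\mathrm F}(p,q)$. Its hypotheses hold: the pseudo outcomes satisfy local interference trivially, Condition~\ref{cond:vector-clt} holds via Condition~\ref{cond:fisher-adjustment}, and the fourth-moment bound holds via Condition~\ref{cond:fisher-variance}. The proposition yields exactly the right-hand side of \eqref{eq:fisher-2w-conservative} for the sandwich built from $\tilde u$. Combining this with Steps 1--3 completes the proof.
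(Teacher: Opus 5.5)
Your plan matches the paper's proof. You residualize at $\theta^\star_{\mathrm F}$, apply Proposition~\ref{prop:twoway-vector-conservative} to $Y^{\mathrm F}_{ij}(p,q)$, prove $\widehat\theta_{\mathrm F}-\theta^\star_{\mathrm F}=o_{\mathbb P}(1)$, and compare the feasible dummy block with the oracle sandwich. The paper does that last comparison in one step through Lemma~\ref{lem:oracle-sandwich-equivalence}. Your ordering (``drop the covariate columns, then swap residuals in the $Z$-only meat'') is legitimate. Your Step~3 bound is correct with second moments only, because $\|z_{ij}\|=1$. The step that is actually unproved is the one you pass over in Step~1.

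Write $\Pi=(Z^\top Z)^{-1}Z^\top\dot X$ (its rows are $\{\widehat X(p,q)-\bar X\}^\top$), $K=(\dot X^\top M_Z\dot X)^{-1}$, $\check X_{ij}=X_{ij}-\widehat X(W_i^B,W_j^S)$, and $\delta=\widehat\theta_{\mathrm F}-\theta^\star_{\mathrm F}$. The buyer-cluster score of the dummy coefficients is $(Z^\top Z)^{-1}\sum_j z_{ij}\widehat u^{\mathrm F}_{ij}-\Pi K\sum_j\check X_{ij}\widehat u^{\mathrm F}_{ij}$, and analogously for sellers and single pairs. Since $\|\Pi\|^2\|K\|^2=O_{\mathbb P}(\rho_{I,J}(IJ)^{-2})$, the correction is $o_{\mathbb P}(\rho_{I,J})$ provided $(IJ)^{-2}\sum_i\|\sum_j\check X_{ij}\widehat u^{\mathrm F}_{ij}\|^2=o_{\mathbb P}(1)$, together with its seller and intersection analogues.

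You assert this without proof. It does not follow from second moments of covariates and residuals alone. A single pair with $\check X_{11}$ and $\widehat u^{\mathrm F}_{11}$ both of order $(IJ)^{1/2}$ has bounded second moments but makes these normalized sums of order one, which leaves only an $O_{\mathbb P}(\rho_{I,J})$ bound on the correction.

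The missing observation is that the hypotheses already supply fourth moments.
\begin{itemize}
\item Because $\|\dot X_{ij}\dot X_{ij}^\top\|_{\mathrm F}=\|\dot X_{ij}\|^2$, the Frobenius term in Condition~\ref{cond:covariate-regularity} gives $(IJ)^{-1}\sum_{i,j}\|\dot X_{ij}\|^4=O(1)$, as in \eqref{eq:covariate-fourth-moment}. You already use this term in Step~2 for $H_{ij}=\dot X_{ij}\dot X_{ij}^\top$. By \eqref{eq:within-cell-covariate-fourth-moment}, the same bound holds for $\check X_{ij}$.
\item Condition~\ref{cond:fisher-variance} gives fourth moments of $\tilde u_{ij}$.
\end{itemize}
Cauchy--Schwarz within rows and then across pairs yields
$\sum_i\|\sum_j\check X_{ij}\tilde u_{ij}\|^2\le J\{\sum_{i,j}\|\check X_{ij}\|^4\}^{1/2}\{\sum_{i,j}\tilde u_{ij}^4\}^{1/2}=O_{\mathbb P}(IJ^2)$
and
$\sum_i\|\sum_j\check X_{ij}\check X_{ij}^\top\delta\|^2\le\|\delta\|^2J\sum_{i,j}\|\check X_{ij}\|^4=o_{\mathbb P}(IJ^2)$.
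The seller bounds are symmetric and the intersection term is smaller. This is exactly the moment input used in Lemma~\ref{lem:oracle-sandwich-equivalence}. With it Step~1 closes, and no truncation argument is needed.
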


Together, Theorem \ref{thm:fisher-adjustment} and \ref{thm:fisher-2w-conservative} show that directly running Fisher's regression with the usual two-way cluster-robust covariance matrix yields asymptotically valid, conservative inference for the exposure-mean vector and its linear contrasts.

\subsection{Lin's regression in two-sided experiments}
\label{subsec:lin-adjustment}

For Lin's regression, the covariate slope is allowed to vary across the four exposure
cells. The first-order condition for the exposure-dummy coefficient in cell \((p,q)\) implies
\begin{equation}\label{eq:mu-lin-cell-representation}
\widehat\mu_{\mathrm L,pq}
=
\widehat Y(p,q)
-
\{\widehat X(p,q)-\bar X\}^\top\widehat\theta_{\mathrm L,pq},
\qquad (p,q)\in\{0,1\}^2.
\end{equation}
For each \((p,q)\in\{0,1\}^2\), by the Frisch--Waugh--Lovell theorem,  the population
Lin slope equals
\begin{equation}\label{eq:theta-lin-star}
\theta_{\mathrm L,pq}^\star
\in
\arg\min_{\theta\in\mathbb R^d}
\frac{1}{IJ}
\sum_{i=1}^I\sum_{j=1}^J
\left[
\dot Y_{ij}(p,q)
-
\dot X_{ij}^\top\theta
\right]^2.
\end{equation}
Equivalently, recall $Q_{X}^{\circ}$ in \eqref{eq:QX_def} and let 
\[
q_{\mathrm L,pq}^{\circ}
=M_{XY}(p,q)=
\frac{1}{IJ}
\sum_{i=1}^I\sum_{j=1}^J
\dot X_{ij}\dot Y_{ij}(p,q).
\]
When \(Q_{X}^{\circ}\) is nonsingular, the minimizer in
\eqref{eq:theta-lin-star} is unique and
\(
\theta_{\mathrm L,pq}^\star
=
(Q_{X}^{\circ})^{-1}q_{\mathrm L,pq}^{\circ}.
\)

Define the residualized potential outcomes
\(
Y_{ij}^{\mathrm L}(p,q)
=
Y_{ij}(p,q)
-
\dot X_{ij}^\top\theta_{\mathrm L,pq}^\star.
\)
Since \((IJ)^{-1}\sum_{i,j}\dot X_{ij}=0\), the finite-population mean of
\(Y_{ij}^{\mathrm L}(p,q)\) equals \(\bar Y(p,q)\). Let
\(A_i^{\mathrm L}\), \(B_j^{\mathrm L}\), and \(\Sigma_{\mathrm L}\) be defined as
\(A_i\), \(B_j\), and \(\Sigma\) in Section~\ref{subsec:twoway-vector}, but with
\(Y_{ij}(p,q)\) replaced by \(Y_{ij}^{\mathrm L}(p,q)\). That is,
\[
\Sigma_{\mathrm L}
=
\frac{e_1^B e_0^B}{I}
\cdot \mathcal S_I(A^{\mathrm L})
+
\frac{e_1^S e_0^S}{J}
\cdot\mathcal S_J(B^{\mathrm L}).
\]

\begin{condition}\label{cond:lin-adjustment}
The residualized potential outcomes
    \(\{Y_{ij}^{\mathrm L}(p,q):p,q\in\{0,1\}\}\) satisfy
    Condition~\ref{cond:vector-clt} with leading covariance matrix
    \(\Sigma_{\mathrm L}\).
\end{condition}

\begin{theorem}
\label{thm:lin-adjustment}
Under Assumptions~\ref{ass:smrd-randomization} and~\ref{ass:local-interference}, and
Condition~\ref{cond:covariate-regularity} and \ref{cond:lin-adjustment}, we have
\[
\Sigma_{\mathrm L}^{-1/2}
\left(
\widehat{\boldsymbol\mu}_{\mathrm L}
-
\bar{\mathbf Y}
\right)
\Rightarrow
N(0,\mathrm I_4).
\]
\end{theorem}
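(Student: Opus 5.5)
The plan is to reduce Lin's estimator to the unadjusted estimator applied to the residualized potential outcomes $Y_{ij}^{\mathrm L}(p,q)$, plus a negligible error, and then invoke Theorem~\ref{thm:vector-clt}. By \eqref{eq:mu-lin-cell-representation},
\[
\widehat\mu_{\mathrm L,pq}
=\widehat Y(p,q)-\{\widehat X(p,q)-\bar X\}^\top\theta_{\mathrm L,pq}^\star
-\{\widehat X(p,q)-\bar X\}^\top(\widehat\theta_{\mathrm L,pq}-\theta_{\mathrm L,pq}^\star).
\]
The first two terms equal $\widehat Y^{\mathrm L}(p,q)$, the observed cell mean of the residualized outcomes, because $Y^{\mathrm{obs}}_{ij}-\dot X_{ij}^\top\theta_{\mathrm L,pq}^\star=Y_{ij}^{\mathrm L}(p,q)$ on cell $(p,q)$. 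The residualized outcomes still satisfy Assumption~\ref{ass:local-interference}, have finite-population means $\bar Y(p,q)$, and satisfy Condition~\ref{cond:vector-clt} by Condition~\ref{cond:lin-adjustment}. Theorem~\ref{thm:vector-clt} applied to them then gives $\Sigma_{\mathrm L}^{-1/2}(\widehat{\mathbf Y}^{\mathrm L}-\bar{\mathbf Y})\Rightarrow N(0,\mathrm I_4)$.

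It remains to show that the last term is $o_{\mathbb P}((I^{-1}+J^{-1})^{1/2})$. Since $\lambda_{\min}(\Sigma_{\mathrm L})\ge c(I^{-1}+J^{-1})$, Slutsky's lemma then finishes the proof. Two facts suffice.

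\begin{enumerate}
\item $\widehat X(p,q)-\bar X=O_{\mathbb P}((I^{-1}+J^{-1})^{1/2})$. Treat the covariates as "outcomes" that do not depend on $(p,q)$. The same Hájek decomposition \eqref{eq:HoeffdingEquation} and variance formula \eqref{eq:Sigma-vector}, together with $\var(R_{BS})=O((IJ)^{-1})$, give $\var\{\widehat X(p,q)\}=O(I^{-1}+J^{-1})$ from the bounded second moment of $\dot X_{ij}$ in Condition~\ref{cond:covariate-regularity}. Chebyshev's inequality then gives the rate.
\item $\widehat\theta_{\mathrm L,pq}-\theta_{\mathrm L,pq}^\star=o_{\mathbb P}(1)$. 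By Frisch--Waugh--Lovell within cell $(p,q)$,
\[
\widehat\theta_{\mathrm L,pq}=\widehat Q_{pq}^{-1}\widehat q_{pq},
\]
where $\widehat Q_{pq}$ is the within-cell sample covariance of $X_{ij}$ and $\widehat q_{pq}$ is the within-cell sample cross-covariance of $X_{ij}$ and $Y^{\mathrm{obs}}_{ij}$, both computed over the $I_pJ_q$ pairs in the cell.
\end{enumerate}

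The main work is the second fact. Expand
\[
\widehat Q_{pq}=(I_pJ_q)^{-1}\sum G_{pq,ij}\dot X_{ij}\dot X_{ij}^\top-\{\widehat X(p,q)-\bar X\}\{\widehat X(p,q)-\bar X\}^\top .
\]
The first piece is a cell mean of the array $\dot X_{ij}\dot X_{ij}^\top$, so its variance is again $O(I^{-1}+J^{-1})$ by the same two-sided subarray variance bound. This uses the bounded second moment of $\dot X_{ij}\dot X_{ij}^\top-Q_X^\circ$ in Condition~\ref{cond:covariate-regularity}. Hence $\widehat Q_{pq}\to Q_X^\circ$ in probability. The same argument applied to the array $\dot X_{ij}\dot Y_{ij}(p,q)$ gives $\widehat q_{pq}\to M_{XY}(p,q)$, using the bound on $\dot X_{ij}\dot Y_{ij}(p,q)-M_{XY}(p,q)$. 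The centering correction is $\{\widehat X(p,q)-\bar X\}\{\widehat Y(p,q)-\bar Y(p,q)\}=o_{\mathbb P}(1)$ by fact 1 and the second moments of $\dot Y$. Since $\lambda_{\min}(Q_X^\circ)\ge c_Q$, the continuous mapping theorem gives $\widehat\theta_{\mathrm L,pq}\to\theta^\star_{\mathrm L,pq}$ in probability.

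The one ingredient I would isolate as a lemma is the generic subarray bound. For any fixed array $V_{ij}$, the product (Cauchy--Schwarz) bound $\var\{(I_pJ_q)^{-1}\sum G_{pq,ij}V_{ij}\}\le C(I^{-1}+J^{-1})(IJ)^{-1}\sum_{i,j}\|V_{ij}-\bar V\|^2$ holds, with $C$ depending only on $\underline e$. This follows from the Hájek decomposition already used for Theorem~\ref{thm:vector-clt}. Combining the two facts, the remainder is $O_{\mathbb P}((I^{-1}+J^{-1})^{1/2})\cdot o_{\mathbb P}(1)$, which completes the argument.
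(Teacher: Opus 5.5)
Your proposal is correct and follows the paper's proof essentially step for step. The paper also proves slope consistency through the within-cell Frisch--Waugh--Lovell representation and a cell-average law of large numbers (Lemma~\ref{lem:cell-average-lln}, which is your generic subarray bound). It then applies Theorem~\ref{thm:vector-clt} to the oracle residualized cell means $\widehat{\mathbf Y}^{\mathrm L}$, bounds $\|\widehat{\boldsymbol\mu}_{\mathrm L}-\widehat{\mathbf Y}^{\mathrm L}\|$ by $2\max_{p,q}\|\widehat X(p,q)-\bar X\|\cdot\max_{p,q}\|\widehat\theta_{\mathrm L,pq}-\theta_{\mathrm L,pq}^\star\|=o_{\mathbb P}\{(I^{-1}+J^{-1})^{1/2}\}$, and concludes using $\lambda_{\min}(\Sigma_{\mathrm L})\ge c(I^{-1}+J^{-1})$ and Slutsky's theorem.
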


Let \(\widehat\Omega_{\mathrm{2w}}^{\mathrm L}\) be the dummy-coefficient block of the two-way cluster-robust covariance matrix from Lin's regression
\eqref{eq:reg-lin-adjustment}.
We impose the following additional fourth-moment condition for the variance estimator.

\begin{condition}\label{cond:lin-variance}
Assume that there exists a constant
\(C_{L,4}<\infty\) such that, for every \((p,q)\in\{0,1\}^2\),
\[
\frac{1}{IJ}
\sum_{i=1}^I\sum_{j=1}^J
\{Y_{ij}^{\mathrm L}(p,q)-\bar Y(p,q)\}^4
\le C_{L,4}.
\]
\end{condition}

Define $A_i^{\mathrm L,(1)}$, $A_i^{\mathrm L,(0)}$,
$B_j^{\mathrm L,(1)}$, and $B_j^{\mathrm L,(0)}$
from the Lin-residualized potential outcomes in the same way as
$A_i^{(1)}$, $A_i^{(0)}$, $B_j^{(1)}$, and $B_j^{(0)}$ defined in Section~\ref{app:proof-two-way}.
\begin{theorem}
\label{thm:lin-2w-conservative}
Under Assumptions~\ref{ass:smrd-randomization} and~\ref{ass:local-interference},
and Conditions~\ref{cond:covariate-regularity}, \ref{cond:lin-adjustment} and~\ref{cond:lin-variance}, we have
\begin{equation}\label{eq:lin-2w-conservative}
\widehat\Omega_{\mathrm{2w}}^{\mathrm L}-\Sigma_{\mathrm L}
=
\frac{1}{I}
\mathcal S_I\left(
e_1^B A^{\mathrm L,(1)}
+
e_0^B A^{\mathrm L,(0)}
\right)
+
\frac{1}{J}
\mathcal S_J\left(
e_1^S B^{\mathrm L,(1)}
+
e_0^S B^{\mathrm L,(0)}
\right)
+
o_{\mathbb P}(I^{-1}+J^{-1}).
\end{equation}
\end{theorem}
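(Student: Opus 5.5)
The plan is to reduce Theorem~\ref{thm:lin-2w-conservative} to Proposition~\ref{prop:twoway-vector-conservative}, applied to the Lin-residualized potential outcomes $Y_{ij}^{\mathrm L}(p,q)$, plus a perturbation bound for replacing the estimated slopes by $\theta_{\mathrm L,pq}^\star$.

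\textbf{Step 1: block structure.} Lin's regression is fully interacted across the four exposure cells, so the design matrix is block diagonal by cell after permuting columns. Within cell $(p,q)$, the regressors are $G_{pq,ij}$ and $G_{pq,ij}(X_{ij}-\bar X)$. Two consequences follow.
\begin{itemize}
\item By Frisch--Waugh--Lovell, the dummy block of $(Z_{\mathrm L}^\top Z_{\mathrm L})^{-1}Z_{\mathrm L}^\top$ acts on each cell as the cell-mean operator minus a term proportional to $\{\widehat X(p,q)-\bar X\}^\top$.
\item The OLS residuals are $\widehat u_{ij}^{\mathrm L}=Y_{ij}^{\mathrm{obs}}-\widehat\mu_{\mathrm L,pq}-\dot X_{ij}^\top\widehat\theta_{\mathrm L,pq}$ on cell $(p,q)$.
\end{itemize}
The Horvitz--Thompson-type mean $\widehat X(p,q)-\bar X$ is a cell mean of a centered array. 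By the Hájek decomposition \eqref{eq:HoeffdingEquation} applied to $\dot X$ under Condition~\ref{cond:covariate-regularity}, it is $O_{\mathbb P}((I^{-1}+J^{-1})^{1/2})$. Hence the dummy block of the sandwich equals
\[
\frac{1}{(I_pJ_q)(I_{p'}J_{q'})}\Bigl(\textstyle\sum_i s_i s_i^\top+\sum_j t_jt_j^\top-\sum_{ij}\cdots\Bigr)
\]
up to a correction of relative order $o_{\mathbb P}(1)$, where $s_i,t_j$ are cluster sums of $z_{ij}\widehat u_{ij}^{\mathrm L}$. This is exactly the unadjusted $\widehat\Omega_{\mathrm{2w}}$ formula with $\widehat u$ replaced by $\widehat u^{\mathrm L}$.

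\textbf{Step 2: slope consistency and residual comparison.} Next I would show $\widehat\theta_{\mathrm L,pq}-\theta_{\mathrm L,pq}^\star=o_{\mathbb P}(1)$. Write $\widehat\theta_{\mathrm L,pq}$ as the ratio of cell sample covariances $\widehat Q_{X,pq}^{-1}\widehat q_{pq}$. Each entry is a cell mean of a fixed array ($\dot X\dot X^\top$ or $\dot X\dot Y(p,q)$), so by the second-moment bounds in Condition~\ref{cond:covariate-regularity} and the variance formula for subarray means, it converges to $Q_X^\circ$ or $M_{XY}(p,q)$. The eigenvalue bound $\lambda_{\min}(Q_X^\circ)\ge c_Q$ then gives invertibility.

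Define the oracle residual $\tilde u_{ij}=Y_{ij}^{\mathrm L}(p,q)-\widehat Y^{\mathrm L}(p,q)$, which is precisely the residual of the unadjusted saturated regression run on $Y^{\mathrm L}$. Then
\[
\widehat u_{ij}^{\mathrm L}-\tilde u_{ij}=-\{\dot X_{ij}-(\widehat X(p,q)-\bar X)\}^\top(\widehat\theta_{\mathrm L,pq}-\theta_{\mathrm L,pq}^\star).
\]

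\textbf{Step 3: invoke the unadjusted result.} Proposition~\ref{prop:twoway-vector-conservative} applies to $Y^{\mathrm L}$ because Conditions~\ref{cond:lin-adjustment} and~\ref{cond:lin-variance} supply its hypotheses. It yields the right-hand side of \eqref{eq:lin-2w-conservative} plus $\Sigma_{\mathrm L}$ for the oracle sandwich built from $\tilde u$.

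It remains to bound the difference between the sandwiches from $\widehat u^{\mathrm L}$ and $\tilde u$. The difference of the meat matrices expands into cross terms and square terms. A typical one is $\sum_i(\sum_{j}G\tilde u_{ij})(\sum_j G\dot X_{ij})^\top\Delta$ with $\Delta=o_{\mathbb P}(1)$. I would bound these by Cauchy--Schwarz against the oracle buyer-cluster meat, which is $O_{\mathbb P}(I_pJ_q\cdot(\ldots))$ of the right order, and the analogous covariate buyer-cluster meat $\sum_i\|\sum_j G\dot X_{ij}\|^2$. The covariate meat is controlled at order $IJ^2(I^{-1}+J^{-1})$ after accounting for the $\widehat X$ centering, using the same subarray computation as in the proof of Proposition~\ref{prop:twoway-vector-conservative} with $\dot X$ in place of $\dot Y$. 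After scaling by $(I_pJ_q)^{-2}$, each piece is $O_{\mathbb P}(I^{-1}+J^{-1})\cdot o_{\mathbb P}(1)$. The $\Gamma_{BS}$ piece is negligible as in Remark~\ref{rem:twoway-psd}.

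\textbf{Main obstacle.} I expect the main obstacle to be Step 3's covariate cluster-meat bound. It requires only second moments of $\dot X$ (Condition~\ref{cond:covariate-regularity}), not fourth moments, and the row sums $\sum_j G\dot X_{ij}$ are not centered by buyer. I would first decompose $\dot X_{ij}$ into its two-way ANOVA pieces (row mean, column mean, residual). I would then show each piece contributes at the $I^{-1}+J^{-1}$ scale in expectation and apply Markov's inequality, which suffices because we only need $O_{\mathbb P}$ there and $\Delta\to0$ supplies the $o_{\mathbb P}$.
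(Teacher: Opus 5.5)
Your proposal follows the paper's proof: residualize with $\theta^\star_{\mathrm L,pq}$, apply Proposition~\ref{prop:twoway-vector-conservative} to $Y^{\mathrm L}$ under Conditions~\ref{cond:lin-adjustment} and~\ref{cond:lin-variance}, establish FWL slope consistency, and bound the sandwich perturbation, which the paper packages as Lemma~\ref{lem:oracle-sandwich-equivalence} (its bread bounds are your Step~1 and its residual-replacement bound is your Step~3). Two small fixes: the buyer covariate cluster meat is of order $IJ^2$, not $IJ^2(I^{-1}+J^{-1})$, because the row component $X_i^B$ does not cancel, and this still suffices after scaling by $(I_pJ_q)^{-2}$ since $\widehat\theta_{\mathrm L,pq}-\theta^\star_{\mathrm L,pq}=o_{\mathbb P}(1)$; and the bread correction in Step~1 involves cluster sums of $(X_{ij}-\widehat X(p,q))\widehat u^{\mathrm L}_{ij}$, which second moments of $\dot X$ alone do not control, but the fourth-moment bound \eqref{eq:covariate-fourth-moment} that Condition~\ref{cond:covariate-regularity} implies, together with Condition~\ref{cond:lin-variance}, does.
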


Similarly, Theorem \ref{thm:lin-adjustment} and \ref{thm:lin-2w-conservative} show that directly running Lin's fully interacted regression with the usual two-way cluster-robust covariance matrix yields asymptotically valid, conservative inference for the exposure-mean vector and its fixed linear contrasts.

\subsection{No-harm guarantee}
\label{subsec:no-harm-guarantee}
We next study conditions under which covariate adjustment is guaranteed not to reduce asymptotic efficiency.
To align the adjustment with the two-way structure, define the buyer-, seller-, and buyer-seller
components of the centered covariate by
\[
X_i^B=\bar X_{i\cdot}-\bar X,\qquad
X_j^S=\bar X_{\cdot j}-\bar X,\qquad
X_{ij}^{BS}
=
X_{ij}-\bar X_{i\cdot}-\bar X_{\cdot j}+\bar X,
\]
so that
\(
X_{ij}-\bar X=X_i^B+X_j^S+X_{ij}^{BS}.
\)
Let
\(
X_{ij}^{\mathrm A}
=
\{(X_i^B)^\top,(X_j^S)^\top,(X_{ij}^{BS})^\top\}^\top.
\) 
Consider Fisher's and Lin's regressions with $\dot X_{ij}$ replaced by
$X_{ij}^{\mathrm A}$, and refer to them as the
\emph{ANOVA-decomposed Fisher} and
\emph{ANOVA-decomposed Lin} regressions, respectively. 
Let $\Sigma_{\mathrm{AF}}$ and $\Sigma_{\mathrm{AL}}$ denote their
leading design covariance matrices.
For $(p,q)\in\{0,1\}^2$, define the buyer- and seller-side
cell-specific population projection slopes by
\[
\gamma_{pq}^{B}
=
\{\mathcal S_I(X^B)\}^{-1}
\frac{1}{I-1}
\sum_{i=1}^I
X_i^B
\{\bar Y_{i\cdot}(p,q)-\bar Y(p,q)\},
\]
and
\[
\gamma_{pq}^{S}
=
\{\mathcal S_J(X^S)\}^{-1}
\frac{1}{J-1}
\sum_{j=1}^J
X_j^S
\{\bar Y_{\cdot j}(p,q)-\bar Y(p,q)\}.
\]
Define the corresponding common Fisher slopes as
\(
\gamma_{\mathrm F}^{B}
=
\sum_{p,q}e_p^Be_q^S\gamma_{pq}^{B},
\) and \(
\gamma_{\mathrm F}^{S}
=
\sum_{p,q}e_p^Be_q^S\gamma_{pq}^{S}.
\)

\begin{prop}[Efficiency of ANOVA-decomposed regression adjustment]
\label{prop:anova-adjustment-efficiency}
Suppose $\mathcal S_I(X^B)$ and $\mathcal S_J(X^S)$ are nonsingular.
Then the following results hold.

\begin{enumerate}
\item If, for every $(p,q)\in\{0,1\}^2$,
\[(\gamma_{pq}^{B}-\gamma_{\mathrm F}^{B})^\top
\mathcal S_I(X^B)\gamma_{\mathrm F}^{B}=0,\quad 
(\gamma_{pq}^{S}-\gamma_{\mathrm F}^{S})^\top
\mathcal S_J(X^S)\gamma_{\mathrm F}^{S}=0,\]
then
\(
\Sigma_{\mathrm{AF}}\preceq\Sigma.
\)

\item Without any additional assumption, we have
\(
\Sigma_{\mathrm{AL}}
\preceq
\Sigma,
\Sigma_{\mathrm F},
\Sigma_{\mathrm L},
\Sigma_{\mathrm{AF}}.
\)
\end{enumerate}
\end{prop}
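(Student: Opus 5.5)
The plan rests on one observation. Every leading covariance in the statement has the form $\frac{e_1^Be_0^B}{I}\mathcal S_I(\cdot)+\frac{e_1^Se_0^S}{J}\mathcal S_J(\cdot)$ applied to buyer- and seller-side scores. These scores depend on the (residualized) potential outcomes only through the row averages $\bar Y_{i\cdot}(p,q)$ and column averages $\bar Y_{\cdot j}(p,q)$. I will therefore compute, for each adjustment, what the row and column averages of the residualized outcomes are. Each comparison then reduces to a Loewner-order comparison of covariance matrices of linearly residualized buyer scores, plus the analogous seller comparison. Since both sides enter additively with nonnegative weights, it suffices to prove each side separately. I describe the buyer side; the seller side is symmetric.

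\textbf{Step 1 (row averages of the adjustment).} Since $J^{-1}\sum_j X_j^S=0$ and $J^{-1}\sum_j X^{BS}_{ij}=0$ for each $i$, the row average of $X_{ij}-\bar X$ is $X_i^B$. Hence for any cell-specific coefficient vector, the row average of $Y_{ij}(p,q)-(X_{ij}-\bar X)^\top\theta_{pq}$ equals $\bar Y_{i\cdot}(p,q)-X_i^{B\top}\theta_{pq}$. For the ANOVA-decomposed regressions with slopes $(\theta^B_{pq},\theta^S_{pq},\theta^{BS}_{pq})$, the row average is $\bar Y_{i\cdot}(p,q)-X_i^{B\top}\theta^B_{pq}$. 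Thus in all five cases the buyer score has the form $A_i-\Gamma X_i^B$, where $\Gamma$ is a $4\times d$ matrix whose $(p,q)$ row is $\pm\theta_{pq}^\top/e_p^B$ (with the sign pattern of $A_i$). Specifically, $\Gamma=0$ for $\Sigma$, the rows use $\theta^\star_{\mathrm F}$ or $\theta^\star_{\mathrm L,pq}$ for $\Sigma_{\mathrm F},\Sigma_{\mathrm L}$, and the rows use $\theta^B$ for the ANOVA versions.

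\textbf{Step 2 (identifying the ANOVA slopes).} The three blocks of $X^{\mathrm A}_{ij}$ are mutually orthogonal in the $IJ$-sum: for example, $\sum_{i,j}X_i^BX_j^{S\top}=0$ and $\sum_{i,j}X_i^BX_{ij}^{BS\top}=0$. So the population Gram matrix is block diagonal and each population slope is computed blockwise. The $X^B$-block slope in \eqref{eq:theta-lin-star} regresses $\dot Y_{ij}(p,q)$ on $X_i^B$ over all pairs. This equals the regression of $\bar Y_{i\cdot}(p,q)-\bar Y(p,q)$ on $X_i^B$ over buyers, i.e.\ $\gamma^B_{pq}$, since the $J$ and $(I-1)/I$ factors cancel. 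Likewise, the Fisher version \eqref{eq:theta-fisher-star} gives the weighted average $\gamma^B_{\mathrm F}$.

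Let $\Gamma^\star$ be the matrix with rows $\pm\gamma^{B\top}_{pq}/e_p^B$. Then $\Gamma^\star$ is exactly the multivariate least-squares coefficient of $A_i$ on $X_i^B$: each coordinate of $A_i$ is $\pm\bar Y_{i\cdot}(p,q)/e_p^B$, and each row of $\Gamma^\star$ is the corresponding scaled OLS slope.

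\textbf{Step 3 (part (ii)).} For any $\Gamma$, the standard least-squares decomposition gives
\[
\mathcal S_I(A-\Gamma X^B)=\mathcal S_I(A-\Gamma^\star X^B)+(\Gamma-\Gamma^\star)\mathcal S_I(X^B)(\Gamma-\Gamma^\star)^\top\succeq \mathcal S_I(A-\Gamma^\star X^B),
\]
because the cross term vanishes by the normal equations. By Step 1, $\Sigma$, $\Sigma_{\mathrm F}$, $\Sigma_{\mathrm L}$ and $\Sigma_{\mathrm{AF}}$ all have buyer scores of the form $A_i-\Gamma X_i^B$. By Step 2, $\Sigma_{\mathrm{AL}}$ uses $\Gamma^\star$. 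Combining this with the seller analogue yields $\Sigma_{\mathrm{AL}}\preceq$ each of the other four matrices, with no further assumptions.

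\textbf{Step 4 (part (i)).} Let $\Gamma_{\mathrm F}$ have rows $\pm\gamma_{\mathrm F}^{B\top}/e_p^B$. Expanding the buyer part of the difference gives
\[
\mathcal S_I(A)-\mathcal S_I(A-\Gamma_{\mathrm F}X^B)=\Gamma^\star S\Gamma_{\mathrm F}^\top+\Gamma_{\mathrm F}S\Gamma^{\star\top}-\Gamma_{\mathrm F}S\Gamma_{\mathrm F}^\top,\qquad S=\mathcal S_I(X^B).
\]
This uses $\mathrm{cov}(A,X^B)=\Gamma^\star S$. Entry $((p,q),(p',q'))$ of $\Gamma^\star S\Gamma_{\mathrm F}^\top$ equals $\pm\gamma^{B\top}_{pq}S\gamma^B_{\mathrm F}/(e_p^Be_{p'}^B)$. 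The hypothesis $(\gamma^B_{pq}-\gamma^B_{\mathrm F})^\top S\gamma^B_{\mathrm F}=0$ lets us replace $\gamma^B_{pq}$ by $\gamma^B_{\mathrm F}$ in that entry, so $\Gamma^\star S\Gamma_{\mathrm F}^\top=\Gamma_{\mathrm F}S\Gamma_{\mathrm F}^\top$. The difference therefore equals $\Gamma_{\mathrm F}S\Gamma_{\mathrm F}^\top\succeq0$. Adding the seller side gives $\Sigma_{\mathrm{AF}}\preceq\Sigma$.

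\textbf{Main obstacle.} I expect the main risk to be the bookkeeping in Steps 1 and 2. One must check that the ANOVA-decomposed population slopes are indeed blockwise, so that the buyer score of the AL regression depends only on $\gamma^B_{pq}$. One must also check that the sign and $1/e_p^B$ scalings in $A_i$ are carried consistently into $\Gamma^\star$ and $\Gamma_{\mathrm F}$; otherwise the OLS-optimality argument in Step 3 would compare scores with mismatched scalings. Once those identities are verified, the Loewner comparisons are routine.
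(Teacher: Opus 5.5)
Your proposal is correct and follows essentially the same route as the paper. Your $\Gamma$ is the transpose of the paper's $K_B$: the paper likewise writes every buyer score as $A_i-K_B^\top X_i^B$, proves part (ii) with the finite-population Pythagorean identity around the least-squares coefficient $K_B^\star$, and your Step 4 reproduces its part (i) identity $\mathcal S_I(A)-\mathcal S_I(A^{\mathrm{AF}})=h_Bv_Bv_B^\top$ with $h_B=(\gamma_{\mathrm F}^B)^\top\mathcal S_I(X^B)\gamma_{\mathrm F}^B$. The paper's extra remark that a possibly singular $X^{BS}$ block is harmless is already covered by your Step 1, since any multiple of that component has zero row and column averages.
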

Proposition~\ref{prop:anova-adjustment-efficiency}(i) gives a sufficient
condition under which the ANOVA-decomposed Fisher adjustment is no-harm. It holds, in particular, when
$\gamma_{11}^B=\gamma_{10}^B=\gamma_{01}^B=\gamma_{00}^B$
and
$\gamma_{11}^S=\gamma_{10}^S=\gamma_{01}^S=\gamma_{00}^S$.
Proposition~\ref{prop:anova-adjustment-efficiency}(ii) shows that the ANOVA-decomposed Lin adjustment is asymptotically
no-harm without additional restrictions, although its fully interacted
specification estimates more parameters and may therefore incur greater
finite-sample variability.
\begin{remark}[Side-specific covariates]
\label{rem:side-specific-covariates}

For the effects in \eqref{eq:F-main}, if $X_{ij}=X_i^B$, Fisher's regression
leaves the estimators of $\tau_{S,\mathrm{spill}}$ and $\tau_{\mathrm{int}}$
unchanged; only those of $\tau_{\mathrm{tot}}$ and
$\tau_{B,\mathrm{spill}}$ can change, paralleling
\citet{zhao2022splitplot}. For Lin's regression, the seller-side first-order
component is unchanged, $B_j^{\mathrm L}=B_j$, while the buyer-side component
is residualized, yielding $\Sigma_{\mathrm L}\preceq\Sigma$.
The case $X_{ij}=X_j^S$ is symmetric. We give more details in Proposition~\ref{prop:side-specific-covariates} in the Supplementary Materials.

\end{remark}

\subsection{Improved variance estimation for scalar parameters after regression adjustment}
\label{subsec:adjusted-optimized-scalar}

The contrast-specific improvement in Section~\ref{subsec:scalar-inference}
extends to Fisher's and Lin's adjustments, including their ANOVA-decomposed versions,
for which the same argument applies after replacing the original covariates by
$X_{ij}^{\mathrm A}$. For \(m\in\{\mathrm F,\mathrm L\}\), first residualize the observed outcome
using the estimated common slope from Fisher's regression or the estimated cell-specific slopes
from Lin's regression. The coefficient vector \(\widehat{\boldsymbol\mu}_m\) is exactly the
vector of exposure-cell means of this feasible residualized outcome. We can therefore apply the
buyer- and seller-side projection construction in \eqref{eq:Vopt-sample} to the residualized
outcome and obtain an adjusted optimized variance estimator
\(\widehat V_{\mathrm{opt},\beta}^m\) for the estimator
\(\beta^\top\widehat{\boldsymbol\mu}_m\).

Section~\ref{app:adjusted-optimized-scalar} in the Supplementary Materials gives the formal construction
and result. Under the same conditions used for the corresponding adjusted vector CLT and
two-way covariance theorem, \(\widehat V_{\mathrm{opt},\beta}^m\) consistently estimates a
conservative target for \(\beta^\top\Sigma_m\beta\). This target is no larger than the first-order target of the usual two-way clustered variance from the same adjusted
regression, so the resulting normal interval is asymptotically valid and weakly sharper at the
level of conservative population targets. This comparison does not impose a finite-sample
ordering between the two reported variance estimates.

\section{Simulation}\label{sec:simulation}

This section studies the finite-sample behavior of the inference procedures developed above.
The simulations follow \citet{sudijono2026regression}: for each
data-generating process and treatment fraction, we first fix a finite population of potential
outcomes and then repeatedly draw buyer and seller assignments independently from their
complete-randomization distributions.
Unless stated otherwise, each setting uses $I=200$, $J=150$, and $5{,}000$
independent assignment draws. We consider treatment fractions
$e_1^B=e_1^S\in\{0.1,0.2,0.5\}$.  The corresponding treated-side counts are
$(I_1,J_1)=(20,15),(40,30),(100,75)$. 

\subsection{Data-generating processes}

We consider two data-generating processes. The first is a Gaussian factor model.
Let $U_i,V_j\in\mathbb R^2$ be centered buyer- and seller-side Gaussian factors and let
$E_{ij,pq}$ be independent cell-specific Gaussian arrays that are double-centered. Potential
outcomes are generated as
\[
Y_{ij}(p,q)
=
\mu_{pq}+(\ell^B_{pq})^\top U_i+(\ell^S_{pq})^\top V_j
+\sigma_{pq}E_{ij,pq},
\qquad p,q\in\{0,1\}.
\]
In $(00,01;10,11)$ cell order, the loading vectors are
\[
\ell^B=
\begin{pmatrix}
(0.4,1.0)&(-0.6,1.4)\\
(2.0,-0.3)&(1.3,0.7)
\end{pmatrix},\qquad
\ell^S=
\begin{pmatrix}
(1.1,-0.4)&(2.0,0.3)\\
(-0.2,1.5)&(0.6,1.8)
\end{pmatrix},
\]
and $(\sigma_{00},\sigma_{10},\sigma_{01},\sigma_{11})=(0.5,0.8,0.8,1.2)$.
We set
$(\mu_{00},\mu_{10},\mu_{01},\mu_{11})=(0,1,1,4)$, so the finite-population values of the total,
buyer spillover, seller spillover, and interaction effect are $4$, $1$, $1$, and $2$, respectively, up to
numerical precision.  The cell-specific loadings make the leading covariance of the four
exposure means full rank, while retaining substantial asymmetry across sides and exposure
states. For regression adjustment, the four fixed covariates are noisy proxies for the two
coordinates of $U_i$ and $V_j$, with independent Gaussian measurement noise of standard
deviation $0.65$.

The second design is the strategic marketplace design of \citet{sudijono2026regression}. We draw and fix
$m_{ij}\sim\mathrm{Exp}(1)$, $r_i^c\sim\mathrm{Unif}(0,0.2)$, and
$r_j^a\sim\mathrm{Unif}(0,0.2)$. To place the design on the bounded-moment scale used in
our asymptotic theory, we normalize the strategic component by $I+J$ and set
\[
Y_{ij}(w)=\frac{\{m_{ij}+\eta w_{ij}\}}{I+J}
\left\{
Jr_i^c\{\bar m_{i\cdot}+\eta\bar w_{i\cdot}\}
+Ir_j^a\{\bar m_{\cdot j}+\eta\bar w_{\cdot j}\}
\right\}+H_{ij}(w),
\qquad \eta=5.
\]
Here $H_{ij}(w)$ is a mean-zero cell-specific buyer-seller background component with scale
$0.25$ and the same full-rank loading structure as the Gaussian design.  This modest component
prevents asymptotic degeneracy without changing the strategic mechanism that drives the
marketplace outcome.  We use the fixed pre-treatment covariates
$X_{ij}=(\widehat m_{ij}r_i^c,\widehat m_{ij}r_j^a)^\top$, where
$\widehat m_{ij}=m_{ij}(1+\xi_{ij})$ and $\xi_{ij}\sim N(0,0.1^2)$.
Under two-sided experiments, $w_{ij}=pq$, $\bar w_{i\cdot}=e_1^S p$, and
$\bar w_{\cdot j}=e_1^B q$, so the four potential outcomes are constructed separately for
each treatment fraction.

For every generated finite population, we audit the assumptions used by the theory. In
particular, we record the scaled minimum eigenvalues
$\lambda_{\min}(\Sigma)/(I^{-1}+J^{-1})$,
$\lambda_{\min}(\Sigma_{\mathrm F})/(I^{-1}+J^{-1})$, and
$\lambda_{\min}(\Sigma_{\mathrm L})/(I^{-1}+J^{-1})$, the ranks of these matrices, the minimum
eigenvalue of the covariate moment matrix, and the largest cell-specific second and fourth
moments. All reported designs have rank four before and after Fisher and Lin residualization.

\subsection{Comparison of variance estimators for the buyer-spillover effect}

We first focus on the buyer-spillover estimand
$\tau_{B,\mathrm{spill}}$. We compare the two-way clustered covariance
from the saturated regression, the optimized estimator in
Section~\ref{subsec:scalar-inference}, and the Liu-, Sudijono-, and Masoero-type conservative
benchmarks. Table~\ref{tab:scalar-variance-comparison} reports coverage and average interval
length for both designs and all three treatment fractions.

\begin{table}[!htb]
\centering
\caption{Scalar coverage and confidence-interval length for the buyer-spillover estimand.}
\label{tab:scalar-variance-comparison}
\small
\begin{tabular}{lrrrrrr}
\toprule
& \multicolumn{2}{c}{Fraction $0.1$}
& \multicolumn{2}{c}{Fraction $0.2$}
& \multicolumn{2}{c}{Fraction $0.5$} \\
\cmidrule(lr){2-3}\cmidrule(lr){4-5}\cmidrule(lr){6-7}
Method & CP(\%) & CI length & CP(\%) & CI length & CP(\%) & CI length \\
\midrule
\multicolumn{7}{l}{\textit{Panel A: Gaussian asymmetric design}} \\
Two-way & 96.5 & 1.959 & 98.3 & 1.434 & 99.3 & 1.370 \\
Optimized & 95.2 & 1.848 & 96.3 & 1.238 & 97.2 & 1.138 \\
Liu & 95.4 & 1.868 & 96.7 & 1.258 & 97.6 & 1.169 \\
Sudijono & 95.6 & 1.894 & 97.0 & 1.293 & 98.2 & 1.223 \\
Masoero & 99.2 & 2.464 & 98.8 & 1.536 & 98.5 & 1.253 \\
\addlinespace
\multicolumn{7}{l}{\textit{Panel B: Strategic marketplace design}} \\
Two-way & 96.7 & 0.475 & 98.4 & 0.386 & 99.1 & 0.361 \\
Optimized & 95.2 & 0.441 & 95.9 & 0.332 & 97.3 & 0.301 \\
Liu & 95.2 & 0.444 & 96.2 & 0.336 & 97.5 & 0.306 \\
Sudijono & 95.8 & 0.453 & 96.8 & 0.348 & 98.1 & 0.321 \\
Masoero & 98.9 & 0.582 & 98.8 & 0.409 & 98.3 & 0.326 \\
\bottomrule
\end{tabular}
\par\smallskip
\begin{minipage}{0.94\textwidth}
\footnotesize\emph{Notes:} CP is empirical coverage of the nominal 95\%
confidence interval. ``Two-way'' uses
$\beta^\top\widehat\Omega_{\mathrm{2w}}\beta$, where
$\widehat\Omega_{\mathrm{2w}}$ is the conventional covariance in
\eqref{eq:twoway-cov-vector}, including the buyer--seller intersection
subtraction. Each setting uses $5{,}000$ independent assignment draws.
\end{minipage}
\end{table}

The ``Two-way'' row is exactly the conventional covariance in
\eqref{eq:twoway-cov-vector}, evaluated as
\(\beta^\top\widehat\Omega_{\mathrm{2w}}\beta\) for the buyer-spillover
contrast. Thus, it includes the buyer- and seller-cluster score components and subtracts the
buyer-seller intersection component; it is neither the additive analogue nor the optimized
bound. Its relatively long intervals reflect the conservative \((1,1)\) first-order target
characterized in Section~\ref{app:optimized-variance-family}. The intersection correction is
of smaller order and therefore does not generally eliminate this first-order conservative gap.

The optimized procedure has the shortest intervals in every setting, followed closely by the
Liu-type construction; the Sudijono- and Masoero-type bounds are more conservative. The two-way procedure also overcovers, especially under balanced assignment. Coverage for the
optimized procedure is near or above the nominal level throughout, including the deliberately
difficult treatment fraction of \(0.1\). These rankings are consistent with
Proposition~\ref{prop:scalar-variance-comparison}: preserving jointly observed within-state
variation improves precision while retaining conservative coverage.

\subsection{Regression adjustment}

We next compare three regression specifications: the unadjusted saturated regression,
Fisher's common-slope regression in \eqref{eq:reg-fisher-adjustment}, and Lin's fully
interacted regression in \eqref{eq:reg-lin-adjustment}.  Crucially, every point estimator is
paired with the two-way clustered covariance matrix from the same regression.  Thus,
this experiment directly evaluates Theorems~\ref{thm:fisher-2w-conservative} and
\ref{thm:lin-2w-conservative}; it does not use a separately residualized optimized bound.
We report bias, empirical standard deviation, RMSE, interval length, and coverage.

\begin{table}[htbp]
\centering
\caption{Regression adjustment for the buyer-spillover estimand at treatment fraction $0.2$.}
\label{tab:regadj-combined-0_2}
\small
\begin{tabular}{lrrrrr}
\toprule
Method & Bias ($\times 100$) & SD & RMSE & CI length & CP(\%) \\
\midrule
\multicolumn{6}{l}{\textit{Panel A: Gaussian asymmetric design}} \\
Unadjusted & 0.949 & 0.293 & 0.293 & 1.434 & 98.3 \\
Fisher & 0.739 & 0.272 & 0.272 & 1.373 & 98.8 \\
Lin & 0.271 & 0.175 & 0.175 & 0.826 & 98.2 \\
\addlinespace
\multicolumn{6}{l}{\textit{Panel B: Strategic marketplace design}} \\
Unadjusted & -0.049 & 0.080 & 0.080 & 0.386 & 98.4 \\
Fisher & -0.058 & 0.079 & 0.079 & 0.383 & 98.4 \\
Lin & -0.024 & 0.079 & 0.079 & 0.381 & 98.3 \\
\bottomrule
\end{tabular}
\par\smallskip
\begin{minipage}{0.90\textwidth}
\footnotesize\emph{Notes:} Each method uses the two-way clustered
covariance from the same regression. Results are based on $5{,}000$
independent assignment draws.
\end{minipage}
\end{table}

In the Gaussian design, Lin's adjustment has the smallest dispersion, RMSE, and interval length,
whereas Fisher's delivers a more modest gain. In the marketplace design, all three specifications
perform similarly because the available covariates add little predictive power. Bias is
negligible and coverage is conservative in both panels, consistent with
Theorems~\ref{thm:fisher-2w-conservative} and~\ref{thm:lin-2w-conservative}. The same
qualitative pattern holds at the other treatment fractions. The only mild anomaly is slight
undercoverage for Lin in the most unbalanced Gaussian setting; its deviation from nominal is
within two Monte Carlo standard errors and is therefore compatible with simulation noise.

\subsection{Joint inference}

To evaluate the vector results, we form the three-dimensional parameter consisting of the
total effect, buyer-spillover effect, and seller-spillover effect. These contrasts are linearly
independent, and the interaction effect is their linear combination.  We report coverage of the
95\% Wald ellipsoid based on the two-way clustered covariance from the unadjusted,
Fisher, and Lin regressions. For point estimation, we report the Euclidean norm of the bias
vector, total standard deviation (the square root of the trace of the Monte Carlo covariance),
and vector RMSE. For inference, we report joint coverage and the geometric mean of the three
full ellipsoid-axis lengths, which is the multivariate analogue of confidence-interval length.
Table~\ref{tab:joint-wald-simulation-0_2} shows the results at treatment fraction \(0.2\).
Balanced-assignment and sample-size results are in the replication package.

\begin{table}[!htb]
\centering
\caption{Joint estimation and 95\% Wald inference at treatment fraction $0.2$.}
\label{tab:joint-wald-simulation-0_2}
\footnotesize
\begin{tabular}{lrrrrr}
\toprule
Method & Bias norm ($\times 100$) & Total SD & RMSE & CP(\%) & GM axis length \\
\midrule
\multicolumn{6}{l}{\textit{Panel A: Gaussian asymmetric design}} \\
Unadjusted & 1.512 & 0.579 & 0.579 & 96.1 & 1.933 \\
Fisher & 1.118 & 0.478 & 0.478 & 96.8 & 1.713 \\
Lin & 0.495 & 0.350 & 0.350 & 94.5 & 1.117 \\
\addlinespace
\multicolumn{6}{l}{\textit{Panel B: Strategic marketplace design}} \\
Unadjusted & 0.186 & 0.179 & 0.179 & 95.7 & 0.578 \\
Fisher & 0.185 & 0.175 & 0.175 & 95.7 & 0.572 \\
Lin & 0.197 & 0.171 & 0.171 & 95.6 & 0.558 \\
\bottomrule
\end{tabular}
\par\smallskip
\begin{minipage}{0.96\textwidth}
\scriptsize\emph{Notes:} The three coordinates are the total,
buyer-spillover, and seller-spillover effects. Bias norm is the Euclidean norm
of the Monte Carlo mean error; Total SD is the square root of the trace of the
Monte Carlo covariance; RMSE is the root mean squared Euclidean error. GM axis
length is the average geometric mean of the three full axis lengths of the
95\% Wald ellipsoid, equal in draw $r$ to
$2\sqrt{\chi^2_{3,0.95}}\det(\widehat\Sigma_r)^{1/6}$.
Each method uses the two-way clustered covariance from the same
regression. Results use $5{,}000$ independent assignment draws.
\end{minipage}
\end{table}

In the Gaussian design, Lin has the smallest total dispersion, vector RMSE, and ellipsoid
axis length, with Fisher between Lin and the unadjusted specification. Lin's mild
undercoverage is within two Monte Carlo standard errors of the nominal level. In the
marketplace design, the three methods have similar point-estimation error and joint-region
size, and all have conservative coverage. All estimated contrast covariance matrices are
positive definite. The balanced designs are more conservative, consistent with the
nonvanishing positive-semidefinite gap characterized by the theory.

\section{Application to a Household Financial Network}
\label{sec:real-data}

We illustrate our inference methods using the randomized savings-account experiment conducted near
Pokhara, Nepal \citep{prina2015banking}, and the associated dyadic network data analyzed by
\citet{comola2021treatment} and \citet{gao2024endogenous}. The analysis file contains 915 households across 19 villages,
and the experiment offered access to a formal savings account to a random subset of these
households. Following the analysis in
\citet{liu2025randomization}, we classify a household as a buyer if it experienced a death or
livestock shock at baseline and as a seller otherwise. This partition is determined before
treatment and gives an economic interpretation to the two sides: shocked households are more
likely to demand informal financial support, whereas non-shocked households are potential
providers of that support.

The outcome is a directed measure of change in the financial network. Let
$G_{ij}^{(t)}$ indicate a reported financial link from household $i$ to household $j$ at survey
wave $t\in\{0,1\}$. We semi-row-standardize each adjacency matrix, so every non-isolated row
sums to one and every isolated row remains zero, and define
\(
Y_{ij}^{\mathrm{obs}}
=
G_{ij}^{(1)}-G_{ij}^{(0)}.
\)
The local-interference interpretation is that the change in a directed relationship depends on
the savings-account offers made to the two households in that dyad, but not on offers made to
other households. The four exposure conditions are therefore indexed by whether the shocked
household and the non-shocked household were offered an account.

\subsection{Sample construction and estimation}
\label{subsec:real-data-design}

Following \citet{liu2025randomization}, we reinterpret the household assignment after the
baseline buyer-seller partition as a two-sided complete-randomization design. We pool buyers
and sellers across villages and condition on their observed global treatment margins. Three
villages contain no buyers under this partition and hence contribute no buyer-seller dyads.
The resulting pooled sample spans 16 villages: the buyer side contains \(I=76\) households, of
which \(I_1=43\) were offered an account, and the seller side contains \(J=684\) households,
of which \(J_1=338\) were offered an account. We regard the observed side-level assignments as
one realization of independent complete randomizations with these fixed margins, as in
Assumption~\ref{ass:smrd-randomization}.

The observed financial network is block diagonal by village. We retain the complete directed
buyer-seller dyads recorded within villages and define cross-village dyads as structural zeros,
which yields one complete \(76\times684\) buyer-seller outcome matrix. The same convention is
used for the pair covariates in the regression-adjusted analysis. This pooled construction is an
empirical reinterpretation designed to illustrate the two-sided methods. We therefore apply
\(\widehat{\mathbf Y}\), the two-way covariance, and
\(\widehat V_{\mathrm{opt},\beta}\) directly to this single market, without village-specific
estimation or weighting.

\subsection{Estimated effects and variance comparison}
\label{subsec:real-data-results}

All four unadjusted point estimates are close to
zero, every optimized confidence interval contains zero, and the joint null for the total,
buyer-spillover, and seller-spillover effects is not rejected
(\(p=0.881\)). The estimated joint covariance matrix is positive definite.
The replication package provides detailed results, variance decompositions,
and analyses of two additional demographic outcomes.

We next use the baseline absolute differences in household size and in the number of children
under 16 as pair-level covariates.

\begin{table}[!htb]
\centering
\caption{Regression-adjusted estimates with two-way clustered standard errors}
\label{tab:real-data-regression-adjustment}
\footnotesize
\begin{tabular}{llrrr}
\toprule
Specification & Estimand & Estimate ($\times 10^4$)
& Standard error ($\times 10^4$) & $p$-value \\
\midrule
Unadjusted & Total effect & -0.92 & 2.93 & 0.753 \\
 & Buyer spillover & -1.97 & 2.84 & 0.488 \\
 & Seller spillover & -2.38 & 3.37 & 0.479 \\
 & Interaction effect & 3.43 & 4.28 & 0.422 \\
\addlinespace
Fisher & Total effect & -0.93 & 2.93 & 0.752 \\
 & Buyer spillover & -1.98 & 2.83 & 0.485 \\
 & Seller spillover & -2.39 & 3.37 & 0.478 \\
 & Interaction effect & 3.44 & 4.27 & 0.420 \\
\addlinespace
Lin & Total effect & -0.91 & 2.91 & 0.753 \\
 & Buyer spillover & -1.96 & 2.82 & 0.486 \\
 & Seller spillover & -2.32 & 3.44 & 0.500 \\
 & Interaction effect & 3.37 & 4.27 & 0.429 \\
\bottomrule
\end{tabular}
\par\smallskip
\begin{minipage}{0.96\textwidth}
\scriptsize\emph{Notes:}
Estimates and standard errors are multiplied by $10^4$.
Fisher uses common slopes and Lin uses exposure-cell-specific slopes for two
baseline pair covariates: absolute differences in household size and in the
number of children under 16. Cross-village dyads and their pair covariates are
defined as structural zeros. Standard errors use the two-way clustered
covariance from the same pooled regression.
\end{minipage}
\end{table}

Table~\ref{tab:real-data-regression-adjustment} reports Fisher's common-slope and Lin's
cell-specific-slope regressions, pairing each point
estimator with the two-way covariance matrix from the same regression, as required by
Section~\ref{sec:regression-adjustment}. The three specifications produce very similar point
estimates and standard errors: relative to the unadjusted specification, the adjusted standard
errors change by less than one percent except for Lin's seller-spillover standard error, which
increases by about \(2.2\%\). This similarity may reflect the limited explanatory power of the
two baseline pair covariates for the sparse network-change outcome: only \(3{,}512\) of the
\(51{,}984\) pooled dyads are observed within villages, and only 66 record a nonzero change.
Consequently, covariate residualization changes the clustered score variation only modestly.
The slight increase under Lin is also compatible with our theory, which
establishes asymptotic validity and conservativeness for each adjusted specification but does not
impose a finite-sample variance ordering across the three regressions. None of the four effects is
statistically distinguishable from zero under any specification.

\bibliographystyle{main}
\bibliography{main}

\setcounter{section}{0}
\setcounter{subsection}{0}
\setcounter{equation}{0}
\setcounter{figure}{0}
\setcounter{table}{0}

\setcounter{theorem}{0}
\setcounter{lemma}{0}
\setcounter{prop}{0}
\setcounter{condition}{0}
\setcounter{remark}{0}
\setcounter{definition}{0}
\setcounter{assumption}{0}
\setcounter{corollary}{0}
\setcounter{coro}{0}

\renewcommand{\thesection}{S\arabic{section}}
\renewcommand{\thesubsection}{S\arabic{section}.\arabic{subsection}}
\renewcommand{\theequation}{S\arabic{equation}}
\renewcommand{\thefigure}{S\arabic{figure}}
\renewcommand{\thetable}{S\arabic{table}}

\renewcommand{\thetheorem}{S\arabic{theorem}}
\renewcommand{\thelemma}{S\arabic{lemma}}
\renewcommand{\theprop}{S\arabic{prop}}
\renewcommand{\thecondition}{S\arabic{condition}}
\renewcommand{\theremark}{S\arabic{remark}}
\renewcommand{\thedefinition}{S\arabic{definition}}
\renewcommand{\theassumption}{S\arabic{assumption}}
\renewcommand{\thecorollary}{S\arabic{corollary}}
\renewcommand{\thecoro}{S\arabic{coro}}

\makeatletter
\renewcommand{\theHsection}{supp.section.\arabic{section}}
\renewcommand{\theHsubsection}{supp.subsection.\arabic{section}.\arabic{subsection}}
\renewcommand{\theHequation}{supp.equation.\arabic{equation}}
\renewcommand{\theHfigure}{supp.figure.\arabic{figure}}
\renewcommand{\theHtable}{supp.table.\arabic{table}}

\providecommand{\theHtheorem}{}
\providecommand{\theHlemma}{}
\providecommand{\theHprop}{}
\providecommand{\theHcondition}{}
\providecommand{\theHremark}{}
\providecommand{\theHdefinition}{}
\providecommand{\theHassumption}{}
\providecommand{\theHcorollary}{}
\providecommand{\theHcoro}{}
\renewcommand{\theHtheorem}{supp.theorem.\arabic{theorem}}
\renewcommand{\theHlemma}{supp.lemma.\arabic{lemma}}
\renewcommand{\theHprop}{supp.prop.\arabic{prop}}
\renewcommand{\theHcondition}{supp.condition.\arabic{condition}}
\renewcommand{\theHremark}{supp.remark.\arabic{remark}}
\renewcommand{\theHdefinition}{supp.definition.\arabic{definition}}
\renewcommand{\theHassumption}{supp.assumption.\arabic{assumption}}
\renewcommand{\theHcorollary}{supp.corollary.\arabic{corollary}}
\renewcommand{\theHcoro}{supp.coro.\arabic{coro}}
\makeatother

\clearpage

\renewcommand{\thepage}{S\arabic{page}} 
\setcounter{page}{1}

\tolerance=2500
\emergencystretch=2em

\begin{center}
\Huge
Supplementary Materials
\end{center}

These Supplementary Materials are organized into four sections.
Section~\ref{app:additional-results} compares conservative variance targets.
Section~\ref{app:main-proofs} proves the results for unadjusted estimation and inference.
Section~\ref{app:technical-identities} collects supporting sampling, regression, and cluster-score results.
Section~\ref{app:regression-adjustment-proofs} presents regression-adjustment results and proofs.

Sections, equations, and results with an S prefix belong to these Supplementary Materials;
references without an S prefix point to the main text.
We use the notation and randomization assumptions of the main article. In
particular, $\rho_{I,J}=I^{-1}+J^{-1}$; $\|\cdot\|$ denotes the Euclidean norm
for vectors and the Frobenius norm for matrices, and $\|\cdot\|_{\mathrm{op}}$
denotes the operator norm. For a scalar finite population $C_1,\ldots,C_N$,
write $S_C^2=\mathcal S_N(C)$ and $S_C=(S_C^2)^{1/2}$.

\section{Variance comparisons}\label{app:additional-results}
This section compares first-order conservative variance targets. We characterize
the two-parameter variance family underlying the optimized estimator and illustrate
its relation to the alternative targets using the buyer-spillover effect.
\subsection{Conservative variance family}
\label{app:optimized-variance-family}
The main text introduces $V_{\mathrm{opt},\beta}^\circ$ for inference on a scalar parameter.
Here we show that it is the optimal member of a two-parameter family of
conservative variance targets that includes the first-order two-way
cluster-robust target.

For $t_B,t_S>0$, define
\(
V_w^\circ(t_B,t_S;\beta)
=
V_{B,w}^\circ(t_B;\beta)+V_{S,w}^\circ(t_S;\beta),
\)
where
\begin{align*}
V_{B,w}^\circ(t_B;\beta)
&=
\frac{e_1^B(e_0^B+t_Be_1^B)}{I}S_{a,1}^2
+
\frac{e_0^B(e_1^B+t_B^{-1}e_0^B)}{I}S_{a,0}^2,\\
V_{S,w}^\circ(t_S;\beta)
&=
\frac{e_1^S(e_0^S+t_Se_1^S)}{J}S_{b,1}^2
+
\frac{e_0^S(e_1^S+t_S^{-1}e_0^S)}{J}S_{b,0}^2.
\end{align*}
Every member is conservative for the asymptotic variance of $\widehat\tau_\beta$. Indeed,
\begin{align}
V_w^\circ(t_B,t_S;\beta)-\sigma_\beta^2
={}&
\frac{1}{It_B}
S_{\,t_Be_1^Ba(1)+e_0^Ba(0)}^2
+
\frac{1}{Jt_S}
S_{\,t_Se_1^Sb(1)+e_0^Sb(0)}^2
\ge0.
\label{eq:Vw-conservative-identity}
\end{align}
At $(t_B,t_S)=(1,1)$, we have
\(
V_w^\circ(1,1;\beta)
=
\beta^\top\Omega_{\mathrm{2w}}^\circ\beta,
\)
so this member is the deterministic first-order target of the two-way cluster-robust
variance estimator.

The terms depending on $t_B$ have the form
$t_B(e_1^BS_{a,1})^2+t_B^{-1}(e_0^BS_{a,0})^2$; the seller expression is analogous.
Thus, when the relevant standard deviations are positive, the unique interior minimizers are
\(t_B^\star=e_0^BS_{a,0}/(e_1^BS_{a,1})\), \(t_S^\star=e_0^SS_{b,0}/(e_1^SS_{b,1})\),
and
\begin{equation}\label{eq:Vw-infimum}
\inf_{t_B,t_S>0}V_w^\circ(t_B,t_S;\beta)
=
\frac{e_1^Be_0^B}{I}(S_{a,1}+S_{a,0})^2
+
\frac{e_1^Se_0^S}{J}(S_{b,1}+S_{b,0})^2
=
V_{\mathrm{opt},\beta}^\circ.
\end{equation}
If one standard deviation on a side is zero and the other is positive, the infimum is approached
as the corresponding tuning parameter tends to zero or infinity. If both are zero, every
positive tuning value is minimizing. Hence the closed form remains valid in all boundary cases.

\begin{corollary}[Exactness of variance targets for a scalar parameter]
\label{cor:scalar-variance-exactness}

For a fixed contrast
$\beta=(\beta_{11},\beta_{10},\beta_{01},\beta_{00})^\top$,
write
\(
\widetilde a_i(p)=a_i(p)-\bar a(p)\), \(
\widetilde b_j(q)=b_j(q)-\bar b(q).
\)
Then the following results hold.

\begin{enumerate}

\item 
We have
\(
\beta^\top\Omega_{\mathrm{2w}}^\circ\beta
=
\sigma_\beta^2
\)
if and only if for all $i,j$,
\[
e_1^B\widetilde a_i(1)+e_0^B\widetilde a_i(0)=0,
\quad
e_1^S\widetilde b_j(1)+e_0^S\widetilde b_j(0)=0.
\]

\item We have
\(
V_{\mathrm{opt},\beta}^\circ
=
\sigma_\beta^2
\)
if and only if there exist constants
$c_1^B,c_0^B,c_1^S,c_0^S\ge 0$ with $c_1^B+c_0^B=c_1^S+c_0^S=1$ such that for all $i,j$, 
\[
c_1^B\widetilde a_i(1)+c_0^B\widetilde a_i(0)=0,\quad c_1^S\widetilde b_j(1)+c_0^S\widetilde b_j(0)=0
.\]
\end{enumerate}
\end{corollary}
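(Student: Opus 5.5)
The plan is to read off both parts from exact identities for the two sides separately. Both conservative targets exceed $\sigma_\beta^2$ by a sum of a buyer-side and a seller-side nonnegative term. So equality holds if and only if each side's gap vanishes, and we treat the buyer side only; the seller side is identical with $(a,e^B,I)$ replaced by $(b,e^S,J)$. Throughout, we use that for any fixed scalars $u,v$ the finite-population variance of $u\,a(1)+v\,a(0)$ is
\[
(I-1)^{-1}\sum_{i=1}^I\{u\,\widetilde a_i(1)+v\,\widetilde a_i(0)\}^2,
\]
so it vanishes if and only if $u\,\widetilde a_i(1)+v\,\widetilde a_i(0)=0$ for every $i$.

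For part (i), we invoke the identity \eqref{eq:Vw-conservative-identity} at $(t_B,t_S)=(1,1)$, together with $V_w^\circ(1,1;\beta)=\beta^\top\Omega_{\mathrm{2w}}^\circ\beta$. The gap $\beta^\top\Omega_{\mathrm{2w}}^\circ\beta-\sigma_\beta^2$ then equals
\[
I^{-1}S^2_{e_1^Ba(1)+e_0^Ba(0)}+J^{-1}S^2_{e_1^Sb(1)+e_0^Sb(0)}.
\]
Both terms are nonnegative. Hence the gap is zero if and only if each finite-population variance is zero. By the remark above, this is exactly $e_1^B\widetilde a_i(1)+e_0^B\widetilde a_i(0)=0$ for all $i$, and analogously for every $j$ on the seller side.

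For part (ii), we work directly with the Cauchy--Schwarz step behind \eqref{eq:Vopt-population}. Since $\beta^\top(A_i-\bar A)=\widetilde a_i(1)-\widetilde a_i(0)$, we have
\[
S_{a,\Delta}^2=S_{a,1}^2+S_{a,0}^2-2\,\mathrm{cov}\{a(1),a(0)\}.
\]
Therefore $(S_{a,1}+S_{a,0})^2-S_{a,\Delta}^2=2\bigl[S_{a,1}S_{a,0}+\mathrm{cov}\{a(1),a(0)\}\bigr]\ge0$. By \eqref{eq:scalar-leading-variance} and \eqref{eq:Vopt-population}, $V_{\mathrm{opt},\beta}^\circ=\sigma_\beta^2$ holds if and only if $\mathrm{cov}\{a(1),a(0)\}=-S_{a,1}S_{a,0}$, together with the seller analogue. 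This uses $e_1^Be_0^B>0$, which follows from Condition~\ref{cond:vector-clt}(i).

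It remains to show that this covariance equality is equivalent to the existence of $c_1^B,c_0^B\ge0$ with $c_1^B+c_0^B=1$ and $c_1^B\widetilde a_i(1)+c_0^B\widetilde a_i(0)=0$ for all $i$. This case analysis is the only delicate point.

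Forward direction. If $S_{a,1}=0$, then $\widetilde a(1)\equiv0$ and we take $(c_1^B,c_0^B)=(1,0)$; symmetrically if $S_{a,0}=0$. If both standard deviations are positive, equality in Cauchy--Schwarz forces $\widetilde a(1)/S_{a,1}=-\widetilde a(0)/S_{a,0}$. This gives $S_{a,0}\widetilde a_i(1)+S_{a,1}\widetilde a_i(0)=0$, so we take $c_1^B=S_{a,0}/(S_{a,1}+S_{a,0})$ and $c_0^B=S_{a,1}/(S_{a,1}+S_{a,0})$.

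Converse. Suppose such $c_1^B,c_0^B$ exist. If $c_0^B=0$, then $c_1^B=1$ and $\widetilde a(1)\equiv0$, so both sides of the covariance equality vanish. Otherwise $\widetilde a(0)=-(c_1^B/c_0^B)\widetilde a(1)$ is a nonpositive multiple of $\widetilde a(1)$. Then $\mathrm{cov}\{a(1),a(0)\}=-(c_1^B/c_0^B)S_{a,1}^2=-S_{a,1}S_{a,0}$.

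As a consistency check, the coefficients in the forward direction are proportional to $(t_B^\star e_1^B,e_0^B)$ with $t_B^\star=e_0^BS_{a,0}/(e_1^BS_{a,1})$. So part (ii) is the statement that the gap term in \eqref{eq:Vw-conservative-identity} vanishes at the minimizer in \eqref{eq:Vw-infimum}, and part (i) is the special case $(c_1^B,c_0^B)=(e_1^B,e_0^B)$.
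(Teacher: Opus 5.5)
Your proposal is correct and follows the paper's proof essentially step for step. Part (i) reads the gap off the conservative-family identity at $(t_B,t_S)=(1,1)$. Part (ii) expands $S_{a,\Delta}^2$, reduces exactness to equality in the Cauchy--Schwarz bound $\mathrm{cov}\{a(1),a(0)\}\ge -S_{a,1}S_{a,0}$ on each side, and uses the same case split and the same coefficients $c_1^B=S_{a,0}/(S_{a,1}+S_{a,0})$, $c_0^B=S_{a,1}/(S_{a,1}+S_{a,0})$; your converse and your note that $e_1^Be_0^B>0$ are only slightly more explicit than the paper's.
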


The proof of Corollary~\ref{cor:scalar-variance-exactness} is in Section~\ref{app:proof-exactness}.

The sample analogue, used in the numerical comparisons, is
\begin{align}
\widehat V_w(t_B,t_S;\beta)
={}&
\frac{e_1^B(e_0^B+t_Be_1^B)}{I}s_{a,1}^2
+\frac{e_0^B(e_1^B+t_B^{-1}e_0^B)}{I}s_{a,0}^2\notag\\
&+
\frac{e_1^S(e_0^S+t_Se_1^S)}{J}s_{b,1}^2
+\frac{e_0^S(e_1^S+t_S^{-1}e_0^S)}{J}s_{b,0}^2.
\label{eq:Vw-sample}
\end{align}
The same minimization gives
$\inf_{t_B,t_S>0}\widehat V_w(t_B,t_S;\beta)
=\widehat V_{\mathrm{opt},\beta}$, including the boundary cases above.
For fixed positive $t_B,t_S$,
Lemma~\ref{lem:estimated-projection-variances} gives
$\widehat V_w(t_B,t_S;\beta)=V_w^\circ(t_B,t_S;\beta)
+o_{\mathbb P}(I^{-1}+J^{-1})$.

\subsection{Illustration via buyer-spillover effect estimation}\label{app:buyer-spillover-illustration}

We use the buyer-spillover effect as a simple illustration of the variance
comparisons developed above. Its estimand is
\(
\tau_{B,\mathrm{spill}}
=
\bar Y(1,0)-\bar Y(0,0),
\)
corresponding to
$\beta=\beta^{B,\mathrm{spill}}=(0,1,0,-1)^\top$
in the cell ordering $(11),(10),(01),(00)$.
Table~\ref{tab:bspill-var-targets} compares its leading design variance with
several first-order conservative variance targets. The finite-population variances use
denominator $I-1$ for buyer populations and $J-1$ for seller populations.
\begin{table}[!htb]
\centering
\caption{Leading design variance and first-order conservative targets
for the buyer-spillover contrast.}
\label{tab:bspill-var-targets}

\begingroup
\scriptsize
\setlength{\tabcolsep}{4pt}
\renewcommand{\arraystretch}{1.35}

\resizebox{\textwidth}{!}{%
\begin{tabular}{@{}lll@{}}
\toprule
Method & Notation & First-order population target \\
\midrule

Ground truth
&
\(\sigma_{\beta}^2\)
&
\(\displaystyle
e_1^Be_0^B I^{-1}
S^2_{Y^B_{10}/e_1^B+Y^B_{00}/e_0^B}
+
e_1^Se_0^S J^{-1}
S^2_{(Y^S_{10}-Y^S_{00})/e_0^S}
\)
\\
\addlinespace[0.35em]

Ours: joint inference
&
\(\beta^\top\Omega_{\mathrm{2w}}^\circ\beta\)
&
\(\displaystyle
e_1^B I^{-1}S^2_{Y^B_{10}/e_1^B}
+
e_0^B I^{-1}S^2_{Y^B_{00}/e_0^B}
+
e_0^S J^{-1}
S^2_{(Y^S_{10}-Y^S_{00})/e_0^S}
\)
\\
\addlinespace[0.35em]

Ours: optimized variance
&
\(\displaystyle
V_{\mathrm{opt},\beta}^\circ
\)
&
\(\displaystyle
e_1^Be_0^B I^{-1}
(
S_{Y^B_{10}/e_1^B}
+
S_{Y^B_{00}/e_0^B}
)^2
+
e_1^Se_0^S J^{-1}
S^2_{(Y^S_{10}-Y^S_{00})/e_0^S}
\)
\\
\addlinespace[0.35em]

\citet{liu2025randomization}
&
\(\displaystyle
V_{\mathrm{Liu},\beta}^\circ
\)
&
\(\displaystyle
e_1^B I^{-1}S^2_{Y^B_{10}/e_1^B}
+
e_0^B I^{-1}S^2_{Y^B_{00}/e_0^B}
+
e_1^Se_0^S J^{-1}
S^2_{(Y^S_{10}-Y^S_{00})/e_0^S}
\)
\\
\addlinespace[0.35em]

\citet{masoero2026multiple}
&
\(V_{\mathrm{Mas},\beta}^\circ\)
&
\(\displaystyle
2[
e_1^Be_0^B I^{-1}S^2_{Y^B_{10}/e_1^B}
+
e_1^Se_0^S J^{-1}S^2_{Y^S_{10}/e_0^S}
+
e_1^Be_0^B I^{-1}S^2_{Y^B_{00}/e_0^B}
+
e_1^Se_0^S J^{-1}S^2_{Y^S_{00}/e_0^S}
]
\)
\\
\addlinespace[0.35em]

\citet{sudijono2026regression}
&
\(V_{\mathrm{Sud},\beta}^\circ\)
&
\(\displaystyle
[
\{
e_1^Be_0^B I^{-1}S^2_{Y^B_{10}/e_1^B}
+
e_1^Se_0^S J^{-1}S^2_{Y^S_{10}/e_0^S}
\}^{1/2}
+
\{
e_1^Be_0^B I^{-1}S^2_{Y^B_{00}/e_0^B}
+
e_1^Se_0^S J^{-1}S^2_{Y^S_{00}/e_0^S}
\}^{1/2}
]^2
\)
\\

\bottomrule
\end{tabular}%
}


\endgroup
\legend{All entries report first-order population targets; the
\(O((IJ)^{-1})\) buyer-seller interaction components in the cell-mean
variances are omitted.
Write \(Y^B_{10,i}=\bar Y_{i\cdot}(1,0)\),
\(Y^B_{00,i}=\bar Y_{i\cdot}(0,0)\),
\(Y^S_{10,j}=\bar Y_{\cdot j}(1,0)\), and
\(Y^S_{00,j}=\bar Y_{\cdot j}(0,0)\).
The family \(V_w^\circ(t_B,t_S;\beta)\) is defined in
Section~\ref{app:optimized-variance-family}.
The first-order two-way target satisfies
\(
\beta^\top\Omega_{\mathrm{2w}}^\circ\beta
=
V_w^\circ(1,1;\beta)
\).
For the buyer-spillover contrast,
\(
V_{\mathrm{Liu},\beta}^\circ
=
V_w^\circ(1,\infty;\beta)
\),
whereas
\(
V_{\mathrm{opt},\beta}^\circ
=
\inf_{t_B,t_S>0}V_w^\circ(t_B,t_S;\beta)
\).}
\end{table}
For the buyer-spillover contrast, the first-order population targets satisfy
\[
\sigma_\beta^2
\le
V_{\mathrm{opt},\beta}^\circ
\le
V_{\mathrm{Liu},\beta}^\circ
\le
\beta^\top\Omega_{\mathrm{2w}}^\circ\beta.
\]
Thus, the optimized target is at least as sharp as the Liu target, which in
turn is at least as sharp as the first-order two-way cluster-robust target.

\section{Proofs for unadjusted estimation and inference}\label{app:main-proofs}

This section proves the vector CLT, conservativeness of the two-way covariance estimator,
and the optimized variance results for scalar parameters in the unadjusted setting.
The proofs use the supporting results in Section~\ref{app:technical-identities};
in particular, Section~\ref{app:projection-variance-consistency}
establishes consistency of the estimated projection variances.

\subsection{Exact decomposition and variance of the interaction remainder}\label{subsec:exactdecomp}\label{app:auxiliary-results}

For $p,q\in\{0,1\}$, write
\[
U_{ip}=\mathbf 1\{W_i^B=p\}-e_p^B,
\qquad
V_{jq}=\mathbf 1\{W_j^S=q\}-e_q^S.
\]
The fixed treatment counts imply
$\sum_{i=1}^I U_{ip}=0$ for $p\in\{0,1\}$ and
$\sum_{j=1}^J V_{jq}=0$ for $q\in\{0,1\}$. For any fixed array $\{A_{ij}\}_{i\in[I],j\in[J]}$, define its doubly
centered version by
\(
\widetilde A_{ij}
=
A_{ij}-\bar A_{i\cdot}-\bar A_{\cdot j}+\bar A_{\cdot\cdot}.
\)

Lemma~\ref{lem:decomp-yhat} below decomposes the moment estimator into the main effects and interaction of the $U_{ip}$'s and $V_{jq}$'s.
\begin{lemma}\label{lem:decomp-yhat}
For each $(p,q)\in\{0,1\}^2$,
\begin{align}
\widehat Y(p,q)-\bar Y(p,q)
={}&
\frac{1}{I_p}\sum_{i=1}^I\bar Y_{i\cdot}(p,q)U_{ip}
+
\frac{1}{J_q}\sum_{j=1}^J\bar Y_{\cdot j}(p,q)V_{jq}
\notag\\
&+
\frac{1}{I_pJ_q}\sum_{i=1}^I\sum_{j=1}^J
Y_{ij}(p,q)U_{ip}V_{jq}.
\label{eq:exact-cell-decomposition}
\end{align}
The last term is unchanged if $Y_{ij}(p,q)$ is replaced by its doubly centered version
$\widetilde Y_{ij}(p,q)$.
\end{lemma}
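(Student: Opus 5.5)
The proof is purely algebraic. We expand the indicators in the cell mean and use the centering identities $\sum_i U_{ip}=0$ and $\sum_j V_{jq}=0$.

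Write $\mathbf 1\{W_i^B=p\}=e_p^B+U_{ip}$ and $\mathbf 1\{W_j^S=q\}=e_q^S+V_{jq}$. Substituting into the definition of $\widehat Y(p,q)$ and multiplying out gives four terms:
\[
\widehat Y(p,q)=\frac{1}{I_pJ_q}\sum_{i,j}Y_{ij}(p,q)\bigl\{e_p^Be_q^S+e_q^SU_{ip}+e_p^BV_{jq}+U_{ip}V_{jq}\bigr\}.
\]
Here we use Assumption~\ref{ass:local-interference}: on the event $\{W_i^B=p,W_j^S=q\}$ we have $Y_{ij}^{\mathrm{obs}}=Y_{ij}(p,q)$, so $Y^{\mathrm{obs}}_{ij}G_{pq,ij}=Y_{ij}(p,q)G_{pq,ij}$.

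We then identify each term using $I_p=e_p^BI$ and $J_q=e_q^SJ$.

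The constant term is $e_p^Be_q^S(IJ)^{-1}\cdot IJ\,\bar Y(p,q)/(e_p^Be_q^S)=\bar Y(p,q)$.

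The $U_{ip}$ term is $e_q^S(I_pJ_q)^{-1}\sum_i U_{ip}\sum_j Y_{ij}(p,q)=(I_p)^{-1}\sum_i\bar Y_{i\cdot}(p,q)U_{ip}$, because $e_q^SJ/J_q=1$.

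Symmetrically, the $V_{jq}$ term equals $J_q^{-1}\sum_j\bar Y_{\cdot j}(p,q)V_{jq}$.

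The last term is exactly the interaction sum in \eqref{eq:exact-cell-decomposition}. Moving $\bar Y(p,q)$ to the left-hand side gives the displayed identity.

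For the invariance claim, substitute $Y_{ij}(p,q)=\widetilde Y_{ij}(p,q)+\bar Y_{i\cdot}(p,q)+\bar Y_{\cdot j}(p,q)-\bar Y(p,q)$ into the interaction sum.
\begin{itemize}
\item The row-mean piece contributes $\sum_i\bar Y_{i\cdot}(p,q)U_{ip}\sum_jV_{jq}=0$.
\item The column-mean piece vanishes because $\sum_iU_{ip}=0$.
\item The grand-mean piece vanishes by either identity.
\end{itemize}
So only $\widetilde Y_{ij}(p,q)$ survives.

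There is no real obstacle. The only points needing care are the normalizations $e_q^SJ=J_q$ and $e_p^BI=I_p$, and the fact that the fixed treatment counts make $\sum_iU_{ip}=\sum_jV_{jq}=0$ hold deterministically rather than just in expectation.
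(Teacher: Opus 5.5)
Your proposal is correct and follows the same route as the paper: substitute $\mathbf 1\{W_i^B=p\}=e_p^B+U_{ip}$ and $\mathbf 1\{W_j^S=q\}=e_q^S+V_{jq}$, expand into constant, buyer, seller, and interaction terms, and simplify using $I_p=e_p^BI$ and $J_q=e_q^SJ$. You then obtain the double-centering invariance from the deterministic zero-sum identities $\sum_i U_{ip}=\sum_j V_{jq}=0$, exactly as the paper does, and you also make explicit the use of local interference to replace $Y_{ij}^{\mathrm{obs}}$ by $Y_{ij}(p,q)$ within each cell.
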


\begin{proof}
Substitute
$\mathbf 1\{W_i^B=p\}=e_p^B+U_{ip}$ and
$\mathbf 1\{W_j^S=q\}=e_q^S+V_{jq}$ into the definition of
$\widehat Y(p,q)$. Expanding the product gives a constant term, a buyer term, a seller term,
and a buyer-seller interaction term. Since $I_p=Ie_p^B$ and $J_q=Je_q^S$, the constant term
is $\bar Y(p,q)$ and the two linear terms reduce to the row- and column-average terms in
\eqref{eq:exact-cell-decomposition}. Finally, the zero-sum identities for $U_{ip}$ and
$V_{jq}$ imply
\(\sum_{i,j}Y_{ij}(p,q)U_{ip}V_{jq} = \sum_{i,j}\widetilde Y_{ij}(p,q)U_{ip}V_{jq}\).
\end{proof}

Lemma \ref{lem:bilinear-remainder} below gives the variance of the interaction of the $U_{ip}$'s and $V_{jq}$'s.
\begin{lemma}\label{lem:bilinear-remainder}
Let $C=(c_{ij})$ be a fixed scalar array satisfying
$\sum_j c_{ij}=0$ for every $i$ and $\sum_i c_{ij}=0$ for every $j$, and define
\(
R_C
=
(IJ)^{-1}
\sum_{i=1}^I\sum_{j=1}^J
c_{ij}U_{i1}V_{j1}.
\)
Then
\begin{equation}\label{eq:bilinear-remainder-variance}
\var(R_C)
=
\frac{(e_1^Be_0^B)(e_1^Se_0^S)}
{IJ(I-1)(J-1)}
\sum_{i=1}^I\sum_{j=1}^J c_{ij}^2.
\end{equation}
Furthermore, if $(IJ)^{-1}\sum_{i,j}c_{ij}^2=O(1)$, then
$R_C=O_{\mathbb P}\{(IJ)^{-1/2}\}$.
\end{lemma}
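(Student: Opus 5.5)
Because $W^B$ and $W^S$ are independent, I will compute $\var(R_C)=\E(R_C^2)$ by factoring the expectation into a buyer part and a seller part, and then use the standard covariance structure of complete randomization.

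\emph{Mean zero.} Since $\E(U_{i1})=0$ and $U$ is independent of $V$, we have $\E(R_C)=0$. Hence
\[
\var(R_C)=(IJ)^{-2}\sum_{i,i'}\sum_{j,j'}c_{ij}c_{i'j'}\,\E(U_{i1}U_{i'1})\,\E(V_{j1}V_{j'1}).
\]

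\emph{Second moments.} Under complete randomization, $\E(U_{i1}^2)=e_1^Be_0^B$. For $i\neq i'$, $\E(U_{i1}U_{i'1})=-e_1^Be_0^B/(I-1)$; this follows from $\sum_i U_{i1}=0$ and exchangeability. So we can write $\E(U_{i1}U_{i'1})=\frac{e_1^Be_0^B I}{I-1}\bigl(\mathbf 1\{i=i'\}-I^{-1}\bigr)$, and the seller side is analogous.

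\emph{Double sum.} Substitute these expressions into the quadruple sum and expand the product $(\mathbf 1\{i=i'\}-I^{-1})(\mathbf 1\{j=j'\}-J^{-1})$. Every term carrying an $I^{-1}$ or $J^{-1}$ factor contains a sum such as $\sum_{i'}c_{i'j'}$ or $\sum_{j'}c_{i'j'}$. These vanish by the double-centering of $C$. Only the diagonal term $\sum_{i,j}c_{ij}^2$ remains, with prefactor
\[
(IJ)^{-2}\cdot\frac{e_1^Be_0^B I}{I-1}\cdot\frac{e_1^Se_0^S J}{J-1},
\]
which is exactly \eqref{eq:bilinear-remainder-variance}.

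\emph{Stochastic order.} If $(IJ)^{-1}\sum_{i,j}c_{ij}^2=O(1)$, the variance becomes $O\{1/((I-1)(J-1))\}=O\{(IJ)^{-1}\}$. Chebyshev's inequality then gives $R_C=O_{\mathbb P}\{(IJ)^{-1/2}\}$.

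The only delicate step is the cross-term cancellation, which relies on $C$ being centered in both rows and columns. Lemma~\ref{lem:decomp-yhat} ensures this in its application, after replacing $Y_{ij}(p,q)$ by $\widetilde Y_{ij}(p,q)$.
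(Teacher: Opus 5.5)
Your proof is correct and matches the paper's argument. Your entrywise covariance $\frac{e_1^Be_0^B I}{I-1}\bigl(\mathbf 1\{i=i'\}-I^{-1}\bigr)$ is exactly the paper's $\E(\mathbf U_1\mathbf U_1^\top)=e_1^Be_0^B I(I-1)^{-1}P_I$, and your cancellation of the cross terms by double-centering is the index form of the paper's identity $P_I C P_J=C$ inside its trace computation, with the same Chebyshev step for the stochastic order.
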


\begin{proof}
Let
\(
\mathbf U_1=(U_{11},\ldots,U_{I1})^\top,
\) and \(
\mathbf V_1=(V_{11},\ldots,V_{J1})^\top.
\)
Let
$P_I=\mathrm I_I-I^{-1}\mathbf 1_I\mathbf 1_I^\top$
and define $P_J$ analogously. Under complete randomization,
\(\E(\mathbf U_1\mathbf U_1^\top) = e_1^Be_0^B I (I-1)^{-1}P_I\), \(\E(\mathbf V_1\mathbf V_1^\top) = e_1^Se_0^S J (J-1)^{-1}P_J\).
Buyer- and seller-side assignments are independent. Moreover, the zero row and
column sums of $C$ imply $P_I C P_J=C$. Therefore,
\begin{align*}
\var(R_C)
=
\frac{1}{I^2J^2}
\operatorname{tr}
\left\{
C^\top
\E(\mathbf U_1\mathbf U_1^\top)
C
\E(\mathbf V_1\mathbf V_1^\top)
\right\}=
\frac{(e_1^Be_0^B)(e_1^Se_0^S)}
{IJ(I-1)(J-1)}
\|C\|_{\mathrm F}^2,
\end{align*}
which proves the result.
\end{proof}

Lemma~\ref{lem:cell-variance-decomposition} below decomposes the variance of each cell-mean estimator into buyer-side, seller-side, and interaction components.
\begin{lemma}\label{lem:cell-variance-decomposition}
Define
\(R_{pq}^2=(I-1)^{-1}\sum_{i=1}^I \{\bar Y_{i\cdot}(p,q)-\bar Y(p,q)\}^2\) and \(C_{pq}^2=(J-1)^{-1}\sum_{j=1}^J \{\bar Y_{\cdot j}(p,q)-\bar Y(p,q)\}^2\).
Then
\begin{equation}\label{eq:cell-mean-variance-decomposition}
\var\{\widehat Y(p,q)\}
=
\frac{e_{1-p}^B}{e_p^BI}R_{pq}^2
+
\frac{e_{1-q}^S}{e_q^SJ}C_{pq}^2
+V_{pq}^{BS},
\end{equation}
where $V_{pq}^{BS}\ge 0$ is the variance of the interaction term in
\eqref{eq:exact-cell-decomposition}.
\end{lemma}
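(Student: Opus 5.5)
The plan is to take variances of both sides of the exact decomposition in Lemma~\ref{lem:decomp-yhat}. Write the three terms as $T_B=I_p^{-1}\sum_i \bar Y_{i\cdot}(p,q)U_{ip}$, $T_S=J_q^{-1}\sum_j \bar Y_{\cdot j}(p,q)V_{jq}$, and $T_{BS}=(I_pJ_q)^{-1}\sum_{i,j}\widetilde Y_{ij}(p,q)U_{ip}V_{jq}$. By Lemma~\ref{lem:decomp-yhat} we may use the doubly centered $\widetilde Y_{ij}(p,q)$ in $T_{BS}$. The identity \eqref{eq:cell-mean-variance-decomposition} will then follow from two facts: the variances of $T_B$ and $T_S$ equal the two displayed terms, and the three terms are mutually uncorrelated. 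Given these, $V_{pq}^{BS}=\var(T_{BS})\ge 0$.

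\textbf{Pairwise uncorrelatedness.} Under complete randomization, $\E(U_{ip})=\E(V_{jq})=0$, and $W^B$ is independent of $W^S$ by Assumption~\ref{ass:smrd-randomization}. Hence $\mathrm{cov}(T_B,T_S)=\E(T_B)\E(T_S)=0$. For $\mathrm{cov}(T_B,T_{BS})$, condition on $W^B$. Then
\[
\E(T_BT_{BS}\mid W^B)=T_B\,(I_pJ_q)^{-1}\sum_{i,j}\widetilde Y_{ij}(p,q)U_{ip}\E(V_{jq})=0,
\]
so this covariance vanishes. The same argument conditioning on $W^S$ gives $\mathrm{cov}(T_S,T_{BS})=0$. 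Note that we need neither the double centering nor any within-side covariance structure here; the mean-zero property of $V_{jq}$ (respectively $U_{ip}$) together with cross-side independence is enough.

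\textbf{Variances of the main terms.} Since $\sum_i U_{ip}=0$, we may replace $\bar Y_{i\cdot}(p,q)$ by $\bar Y_{i\cdot}(p,q)-\bar Y(p,q)$ in $T_B$. The term $T_B$ is then $(I_p)^{-1}$ times a sum of a centered population over the $p$-arm of a complete randomization, minus its mean. We use the classical completely randomized sampling variance of the sample mean of $I_p$ units drawn from $I$:
\[
\var\Bigl(I_p^{-1}\sum_{i:W_i^B=p}x_i\Bigr)=\Bigl(\frac{1}{I_p}-\frac1I\Bigr)S_x^2 .
\]
Because $T_B=I_p^{-1}\sum_i(\bar Y_{i\cdot}-\bar Y)\mathbf 1\{W_i^B=p\}$ (the $e_p^B$ part cancels by centering), its variance is $(1/I_p-1/I)R_{pq}^2=\{e_{1-p}^B/(e_p^BI)\}R_{pq}^2$. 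The seller term is symmetric and gives $\{e_{1-q}^S/(e_q^SJ)\}C_{pq}^2$.

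Each step is a routine bookkeeping check. The only step needing some care is the conditioning argument showing that the interaction term is uncorrelated with the main terms, and that argument is short. If desired, an explicit formula for $V_{pq}^{BS}$ then follows from Lemma~\ref{lem:bilinear-remainder}: $U_{i0}=-U_{i1}$ and $V_{j0}=-V_{j1}$, so $T_{BS}$ is a rescaled $R_C$ with $C=\widetilde Y(p,q)$.
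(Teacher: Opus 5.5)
Your proposal is correct and takes the same approach as the paper. You take variances in the exact decomposition of Lemma~\ref{lem:decomp-yhat}, use independence of the two randomizations to kill the buyer--seller covariance, use the conditional-mean-zero property of the interaction term given either side's assignment to kill the other two covariances, and evaluate the main terms with the standard complete-randomization formula $(1/I_p-1/I)R_{pq}^2=\{e_{1-p}^B/(e_p^BI)\}R_{pq}^2$ and its seller analogue.
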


\begin{proof}
The buyer, seller, and interaction terms in
\eqref{eq:exact-cell-decomposition} have mean zero and are mutually
uncorrelated: the buyer and seller terms are uncorrelated by independence
of the two randomizations, while the interaction term has conditional mean
zero given either side's assignment. The first two variance formulas are the
usual complete-randomization formulas \citep{li2017general}, while the last
variance is nonnegative by definition.
\end{proof}

\subsection{Proof of Theorem~\ref{thm:vector-clt}}
\label{app:point-clt-proofs}
The proof proceeds in three steps. We first obtain the vector-valued H\'ajek
projection and the covariance of its first-order term. We then apply a
stratified permutation CLT to the combined buyer- and seller-side first-order
term, and finally show that the standardized interaction remainder is
asymptotically negligible. The second step uses the following lemma.
\begin{lemma}[Stratified permutational CLT; Theorem~2.1 and Remark~2.2(4) of \citet{tuvaandorj2024combinatorial}]
\label{lem:stratified-permutation-clt}
For each stratum $s\in[K]$, let $\pi_s$ be an independent
uniform permutation of $[n_s]$, and let
$\{a_{ij}^{(s)}:i,j\in[n_s]\}$ be a deterministic array satisfying
\(\sum_{i=1}^{n_s} a_{ij}^{(s)} = \sum_{j=1}^{n_s} a_{ij}^{(s)} = 0\), for all \(i,j\in[n_s]\).
Define
\(
T
=
\sum_{s=1}^K
\sum_{i=1}^{n_s}
a_{i,\pi_s(i)}^{(s)}
\)
and
\(
\sigma^2
=
\sum_{s=1}^K
(n_s-1)^{-1}
\sum_{i,j=1}^{n_s}
\bigl\{a_{ij}^{(s)}\bigr\}^2.
\)
Suppose $\sigma^2>0$ and, for some $\delta>0$,
\[
\frac{1}{\sigma^{2+\delta}}
\sum_{s=1}^K
\frac{1}{n_s}
\sum_{i,j=1}^{n_s}
\left|a_{ij}^{(s)}\right|^{2+\delta}
\longrightarrow 0.
\]
Then
\(
{T}/{\sigma}
\Rightarrow N(0,1).
\)
\end{lemma}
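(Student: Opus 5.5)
The statement is the stratified combinatorial CLT. I would prove it with Stein's method of exchangeable pairs, after a truncation that reduces the $(2+\delta)$-moment condition to a bounded-entry case.

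\emph{Step 1 (exact variance).} For a doubly centered array, Hoeffding's classical computation gives $\E\,T_s=0$ and $\var(T_s)=(n_s-1)^{-1}\sum_{i,j}\{a^{(s)}_{ij}\}^2$, where $T_s=\sum_i a^{(s)}_{i,\pi_s(i)}$. Since the $\pi_s$ are independent, $\var(T)=\sigma^2$. Without loss of generality I normalize so that $\sigma=1$.

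\emph{Step 2 (truncation).} Fix $\varepsilon>0$. Replace $a^{(s)}_{ij}$ by $a^{(s)}_{ij}\mathbf 1\{|a^{(s)}_{ij}|\le\varepsilon\}$ and double-center the result; call the new array $\check a^{(s)}$ and the new statistic $\check T$. The difference $a-\check a$ is again doubly centered, so Step 1 applies to it. This gives
\[
\E(T-\check T)^2\le\sum_s\frac{C}{n_s-1}\sum_{i,j}\{a^{(s)}_{ij}\}^2\mathbf 1\{|a^{(s)}_{ij}|>\varepsilon\}
\le C\varepsilon^{-\delta}\sum_s\frac{1}{n_s}\sum_{i,j}|a^{(s)}_{ij}|^{2+\delta}.
\]
The last bound follows because the centering operations are $L^2$ contractions and $n_s/(n_s-1)\le 2$. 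By the Lyapunov-type hypothesis the right-hand side tends to zero. Hence $T-\check T=o_{\mathbb P}(1)$ and $\var(\check T)\to1$. Double centering at most quadruples the sup norm, so the entries of $\check a$ are bounded by $4\varepsilon$.

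\emph{Step 3 (exchangeable pair with stratum-dependent selection).} Let $N'=\sum_s(n_s-1)$. Choose a stratum $s$ with probability $p_s=(n_s-1)/N'$, choose a uniformly random pair $i\neq k$ in that stratum, and compose $\pi_s$ with the transposition $(i\,k)$. This yields $\check T'$. Using the zero row and column sums, the conditional drift within stratum $s$ is $-2\check T_s/(n_s-1)$. The weights $p_s$ are chosen precisely so that the stratum-dependent rates cancel:
\[
\E(\check T'-\check T\mid\pi)=-\lambda\check T,\qquad \lambda=2/N'.
\]
Stein's bound for a linear exchangeable pair then gives, for smooth test functions $h$,
\[
|\E h(\check T)-\E h(Z)|\le C\Bigl[\bigl\{\var\bigl(\E[(\check T'-\check T)^2\mid\pi]\bigr)\bigr\}^{1/2}/\lambda+\E|\check T'-\check T|^3/\lambda\Bigr],
\]
where $Z\sim N(0,1)$.

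\emph{Step 4 (bounding the two Stein terms).} The increment satisfies $|\check T'-\check T|\le 16\varepsilon$. Hence the cubic term is at most $16\varepsilon\cdot\E(\check T'-\check T)^2/\lambda=16\varepsilon\cdot2\var(\check T)=O(\varepsilon)$.

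The conditional second moment decomposes as a sum over strata of $p_s$ times a degree-two statistic in $\pi_s$. Its variance splits into within-stratum variances by independence. Each within-stratum variance I would bound by Bolthausen-type permutation moment computations, which should give a bound of order $\lambda^2\max|\check a|^2\sum_s(n_s-1)^{-1}\sum_{i,j}\check a_{ij}^2$. After dividing by $\lambda$, this term is $O(\varepsilon)$. I expect this conditional-variance calculation to be the main obstacle. The first thing I would try is to write $\E[(\check T'-\check T)^2\mid\pi]$ explicitly as $\sum_i c_{i\pi(i)}+\sum_{i\neq k}d_{i\pi(i),k\pi(k)}$ and use the standard covariance formulas for such sums under a uniform permutation.

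\emph{Step 5 (conclusion).} Combining Steps 2--4, $\limsup_n|\E h(T)-\E h(Z)|\le C\varepsilon$ for every $\varepsilon>0$. Letting $\varepsilon\downarrow0$ gives $T/\sigma\Rightarrow N(0,1)$.
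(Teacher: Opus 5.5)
Your proposal is correct in outline and takes a genuinely different route from the paper. The paper gives no argument for this lemma; it imports it directly from Theorem~2.1 and Remark~2.2(4) of \citet{tuvaandorj2024combinatorial}. You instead give a self-contained Stein's-method proof, and both of your key devices are sound. First, with $\sigma=1$, truncating and re-double-centering gives $\E(T-\check T)^2\le 2\varepsilon^{-\delta}\sum_s n_s^{-1}\sum_{i,j}|a^{(s)}_{ij}|^{2+\delta}$, so the stated Lyapunov ratio is exactly what controls the truncation error; this explains the $1/n_s$ weighting in the hypothesis. Second, choosing stratum $s$ with probability $(n_s-1)/N'$ turns the stratum-specific drifts $-2\check T_s/(n_s-1)$ into the single linear condition $\E(\check T'-\check T\mid\pi)=-(2/N')\check T$. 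Any number of strata of any sizes is then handled at once, without requiring individual strata to be asymptotically normal or negligible.

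The price is the conditional-variance step you flag, and it carries real content. Write $\Delta_{ik}=\check a_{i\pi(k)}+\check a_{k\pi(i)}-\check a_{i\pi(i)}-\check a_{k\pi(k)}$ and $Q_s=\{n_s(n_s-1)\}^{-1}\sum_{i\neq k}\Delta_{ik}^2$. The crude bound $\var(Q_s)\le \max\Delta^2\,\E Q_s$ loses a factor $n_s$, so you genuinely need $\var(Q_s)\lesssim M^2\|\check a^{(s)}\|_{\mathrm F}^2/(n_s-1)^3$ with $M=4\varepsilon$. This bound does hold. Within a stratum of size $n$, the zero margins give $\sum_{i\neq k}\Delta_{ik}^2=2\|\check a\|_{\mathrm F}^2+2(n+1)\sum_i\check a_{i\pi(i)}^2+2\sum_{i\neq k}\check a_{i\pi(k)}\check a_{k\pi(i)}+2\check T_s^2$, and each random piece has variance $O(nM^2\|\check a\|_{\mathrm F}^2)$. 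A convenient tool for all three pieces is the Poincar\'e inequality $\var f(\pi)\le\frac{n-1}{4}\E\{f(\pi\circ\tau)-f(\pi)\}^2$ for the random-transposition walk, whose spectral gap is $2/(n-1)$. For example, it gives $\var(\check T_s^2)\le16M^2\|\check a\|_{\mathrm F}^2$ in one line, and the quadratic piece needs only a little more bookkeeping.

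One small fix: since $\var(\check T)\neq1$, your Stein bound should also contain the term $|1-\var(\check T)|$, or you should rescale $\check T$. This term is $o(1)$ by Step~2.

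In short, the citation buys brevity and inherits the reference's generality. Your argument is self-contained and explains the form of the Lyapunov condition, at the cost of one degree-two permutation-moment computation.
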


\begin{proof}[Proof of Theorem \ref{thm:vector-clt}]
Recall the centered assignment indicators
\(
U_{ip}\) and \(V_{jq}\) for \(p,q\in\{0,1\}\) in Section~\ref{subsec:exactdecomp}.
Then
\(
U_{i0}=-U_{i1},
\) and \(
V_{j0}=-V_{j1}.
\)
By \eqref{eq:exact-cell-decomposition}, collecting the four equations in the ordering
\(
(1,1),(1,0),(0,1),(0,0),
\)
we obtain the vector decomposition
\begin{equation}\label{eq:vector-hoeffding-proof}
\widehat{\mathbf Y}-\bar{\mathbf Y}
=
L_B+L_S+R_{BS},
\end{equation}
where the buyer, seller, and interaction terms are
\(L_B = I^{-1}\sum_{i=1}^I A_iU_{i1}\), \(L_S = J^{-1}\sum_{j=1}^J B_jV_{j1}\), and \(R_{BS} = (IJ)^{-1}\sum_{i=1}^I\sum_{j=1}^J D_{ij}U_{i1}V_{j1}\), respectively.
Here $A_i$ and $B_j$ are defined before \eqref{eq:Sigma-vector}, and
\[
D_{ij}
=
\left(
\frac{Y_{ij}(1,1)}{e_1^Be_1^S},
-\frac{Y_{ij}(1,0)}{e_1^Be_0^S},
-\frac{Y_{ij}(0,1)}{e_0^Be_1^S},
\frac{Y_{ij}(0,0)}{e_0^Be_0^S}
\right)^\top .
\]
Since $\sum_i U_{i1}=0$ and $\sum_j V_{j1}=0$, we may replace $D_{ij}$ by its componentwise
doubly centered version $\widetilde D_{ij}$ in the definition of $R_{BS}$.

First consider the leading term $L_B+L_S$. Its covariance matrix is exactly $\Sigma$:
\(\var(L_B+L_S) = \var(L_B)+\var(L_S) = \Sigma\),
because the buyer-side and seller-side assignments are independent. Indeed, by the complete
randomization variance formula,
\(\var(L_B) = e_1^B e_0^B (I-1)^{-1} \cdot I^{-1}\sum_{i=1}^I (A_i-\bar A)(A_i-\bar A)^\top\),
and similarly
\(\var(L_S) = e_1^S e_0^S (J-1)^{-1} \cdot J^{-1}\sum_{j=1}^J (B_j-\bar B)(B_j-\bar B)^\top\).

We next prove a multivariate CLT for $L_B+L_S$. Fix a nonzero
$t\in\mathbb R^4$ and let
\(\lambda_t=\Sigma^{-1/2}t\), \(x_i=\lambda_t^\top(A_i-\bar A)\), \(y_j=\lambda_t^\top(B_j-\bar B)\).
Because $\sum_iU_{i1}=\sum_jV_{j1}=0$,
\(\lambda_t^\top(L_B+L_S) = I^{-1}\sum_{i=1}^I x_iU_{i1} + J^{-1}\sum_{j=1}^J y_jV_{j1}\).
Define the centered treatment-label vectors
\[
c_k^B
=
\begin{cases}
 e_0^B, & 1\le k\le I_1,\\
 -e_1^B, & I_1<k\le I,
\end{cases}
\qquad
c_\ell^S
=
\begin{cases}
 e_0^S, & 1\le \ell\le J_1,\\
 -e_1^S, & J_1<\ell\le J.
\end{cases}
\]
Let $\pi_B$ and $\pi_S$ be independent uniform permutations of $[I]$ and
$[J]$, respectively. Then
\((U_{11},\ldots,U_{I1}) \overset d= (c_{\pi_B(1)}^B,\ldots,c_{\pi_B(I)}^B)\), \((V_{11},\ldots,V_{J1}) \overset d= (c_{\pi_S(1)}^S,\ldots,c_{\pi_S(J)}^S)\).
Indeed, each buyer assignment with $I_1$ treated units is generated by exactly
$I_1!I_0!$ permutations of the buyer labels, and the seller side is analogous.

For the two strata, define
\(a_{ik}^B=x_i c_k^B I^{-1}\), \(i,k\in[I]\), \(a_{j\ell}^S=y_j c_\ell^S J^{-1}\), \(j,\ell\in[J]\).
It follows that
\begin{equation}\label{eq:two-stratum-permutation-representation}
\lambda_t^\top(L_B+L_S)
\overset d=
\sum_{i=1}^I a_{i,\pi_B(i)}^B
+
\sum_{j=1}^J a_{j,\pi_S(j)}^S.
\end{equation}
Moreover,
\(\sum_i x_i=\sum_j y_j=0\), \(\sum_k c_k^B=\sum_\ell c_\ell^S=0\),
so every row and column sum of each array in
\eqref{eq:two-stratum-permutation-representation} is zero. Thus the leading
term is a two-stratum linear permutation statistic in the sense of
\citet{tian2025stratified} and satisfies the centering condition in
Lemma~\ref{lem:stratified-permutation-clt}.

The variance in Lemma~\ref{lem:stratified-permutation-clt} is
\begin{align*}
\sigma_t^2
&=
\frac{1}{I-1}\sum_{i,k=1}^I(a_{ik}^B)^2
+
\frac{1}{J-1}\sum_{j,\ell=1}^J(a_{j\ell}^S)^2\\
&=
\frac{e_1^Be_0^B}{I(I-1)}\sum_{i=1}^Ix_i^2
+
\frac{e_1^Se_0^S}{J(J-1)}\sum_{j=1}^Jy_j^2\\
&=
\lambda_t^\top\Sigma\lambda_t
=
t^\top t,
\end{align*}
where we used
$\sum_k(c_k^B)^2=Ie_1^Be_0^B$ and
$\sum_\ell(c_\ell^S)^2=Je_1^Se_0^S$. In particular,
$\sigma_t^2=t^\top t>0$.

It remains to verify the Lyapunov condition in
Lemma~\ref{lem:stratified-permutation-clt}.
By Jensen's inequality, Condition~\ref{cond:vector-clt}(iii) implies
\(I^{-1}\sum_{i=1}^I | \bar Y_{i\cdot}(p,q)-\bar Y(p,q) |^{2+\delta} \le C_Y\), \(J^{-1}\sum_{j=1}^J | \bar Y_{\cdot j}(p,q)-\bar Y(p,q) |^{2+\delta} \le C_Y\)
for every $(p,q)$. Together with Condition~\ref{cond:vector-clt}(i) and
the fixed dimension, this gives
\(I^{-1}\sum_{i=1}^I\|A_i-\bar A\|^{2+\delta} + J^{-1}\sum_{j=1}^J\|B_j-\bar B\|^{2+\delta} \le C\).
Since the treatment labels are bounded by one in absolute value, it follows that
\(I^{-1}\sum_{i,k=1}^I|a_{ik}^B|^{2+\delta} \le C I^{-1-\delta}\|\lambda_t\|^{2+\delta}\), \(J^{-1}\sum_{j,\ell=1}^J|a_{j\ell}^S|^{2+\delta} \le C J^{-1-\delta}\|\lambda_t\|^{2+\delta}\).
Condition~\ref{cond:vector-clt}(ii) further implies
\(\|\lambda_t\|^{2+\delta} \le C\rho_{I,J}^{-1-\delta/2}\|t\|^{2+\delta}\).
Since $\sigma_t^{2+\delta}=\|t\|^{2+\delta}$ and
$I^{-1-\delta}+J^{-1-\delta}\le\rho_{I,J}^{1+\delta}$, we obtain
\begin{align*}
\frac{1}{\sigma_t^{2+\delta}}
\left
\{
\frac1I\sum_{i,k=1}^I|a_{ik}^B|^{2+\delta}
+
\frac1J\sum_{j,\ell=1}^J|a_{j\ell}^S|^{2+\delta}
\right\}
&\le C\rho_{I,J}^{\delta/2}
\longrightarrow0.
\end{align*}
Therefore, Lemma~\ref{lem:stratified-permutation-clt} yields
\(\{t^\top t\}^{-1/2}\lambda_t^\top(L_B+L_S) \Rightarrow N(0,1)\).
Equivalently,
\(t^\top\Sigma^{-1/2}(L_B+L_S) \Rightarrow N(0,t^\top t)\).
By the Cram\'er--Wold device,
\begin{equation}\label{eq:leading-vector-clt}
\Sigma^{-1/2}(L_B+L_S)
\Rightarrow
N(0,\mathrm I_4).
\end{equation}

It remains to show that the interaction remainder is negligible. For any vector
$\lambda\in\mathbb R^4$,
\(\lambda^\top R_{BS} = (IJ)^{-1} \sum_{i=1}^I\sum_{j=1}^J \lambda^\top \widetilde D_{ij}U_{i1}V_{j1}\),
where $\widetilde D_{ij}$ is the componentwise doubly centered version of $D_{ij}$.

Applying Lemma~\ref{lem:bilinear-remainder} to the scalar coefficients
$\lambda^\top\widetilde D_{ij}$ gives, for a constant $C$ independent of $I,J$,
\(\var(\lambda^\top R_{BS}) \le C (IJ)^{-1} \{ (IJ)^{-1} \sum_{i=1}^I\sum_{j=1}^J (\lambda^\top\widetilde D_{ij})^2 \}\).
By Cauchy--Schwarz,
\((IJ)^{-1} \sum_{i=1}^I\sum_{j=1}^J (\lambda^\top\widetilde D_{ij})^2 \le \|\lambda\|^2 (IJ)^{-1} \sum_{i=1}^I\sum_{j=1}^J \|\widetilde D_{ij}\|^2\).
We now show that the last factor is bounded under
Conditions~\ref{cond:vector-clt}(i) and~\ref{cond:vector-clt}(iii).
For each $(p,q)$, define the doubly centered potential outcome
\(\widetilde Y_{ij}(p,q) = Y_{ij}(p,q) - \bar Y_{i\cdot}(p,q) - \bar Y_{\cdot j}(p,q) + \bar Y(p,q)\).
Double-centering is an orthogonal projection and therefore does not increase the Frobenius
norm. Condition~\ref{cond:vector-clt}(iii) implies, by the power-mean inequality,
\((IJ)^{-1} \sum_{i=1}^I\sum_{j=1}^J \{Y_{ij}(p,q)-\bar Y(p,q)\}^2 \le C\).
Hence
\((IJ)^{-1} \sum_{i=1}^I\sum_{j=1}^J \widetilde Y_{ij}(p,q)^2 \le C\).
Since each component of $\widetilde D_{ij}$ is a signed and rescaled version of
$\widetilde Y_{ij}(p,q)$, Condition~\ref{cond:vector-clt}(i) implies
\[
\frac{1}{IJ}
\sum_{i=1}^I\sum_{j=1}^J
\|\widetilde D_{ij}\|^2
\le
\sum_{p=0}^1\sum_{q=0}^1
\frac{1}{(e_p^B e_q^S)^2}
\frac{1}{IJ}
\sum_{i=1}^I\sum_{j=1}^J
\widetilde Y_{ij}(p,q)^2
\le C.
\]
Therefore,
\(\var(\lambda^\top R_{BS}) = O(\|\lambda\|^2/(IJ))\).

Taking $\lambda=\lambda_t=\Sigma^{-1/2}t$, we obtain
\(\var\{\lambda_t^\top R_{BS}\} = O(\|\lambda_t\|^2/(IJ))\).
By Condition~\ref{cond:vector-clt}(ii),
\(\|\lambda_t\|^2 = t^\top\Sigma^{-1}t \le \|t\|^2/\lambda_{\min}(\Sigma) \le C\|t\|^2/(I^{-1}+J^{-1})\).
Thus,
\(\var\{\lambda_t^\top R_{BS}\} = O( \{IJ(I^{-1}+J^{-1})\}^{-1} ) = O( (I+J)^{-1} ) \to0\).
Hence
\(t^\top\Sigma^{-1/2}R_{BS} = \lambda_t^\top R_{BS} =o_{\mathbb P}(1)\).
Combining \eqref{eq:leading-vector-clt} with the preceding relation for every fixed $t$ and applying
Slutsky's theorem completes the proof.
\end{proof}

\subsection{Two-way cluster-robust covariance}\label{app:proof-two-way}\label{app:variance-inference}

The following lemma connects the buyer- and seller-level score contributions
from the saturated regression to the corresponding design-based first-order
projections. Specifically, these cluster-level contributions are scaled and
centered sample analogues of the buyer- and seller-side projections.
\begin{lemma}\label{lem:twoway-score-decomposition}
For a fixed scalar contrast $\beta$, let
\(\widehat\psi_i^B(\beta)=\beta^\top(Z^\top Z)^{-1}Z_i^\top\widehat u_i\), \(\widehat\psi_j^S(\beta)=\beta^\top(Z^\top Z)^{-1}Z_{\cdot j}^\top\widehat u_{\cdot j}\), \(\delta_{ij}(\beta)=\beta^\top(Z^\top Z)^{-1}z_{ij}\widehat u_{ij}\).
Then
\[
\beta^\top\widehat\Omega_{\mathrm{2w}}\beta
=
\sum_{i=1}^I \widehat\psi_i^B(\beta)^2
+
\sum_{j=1}^J \widehat\psi_j^S(\beta)^2
-
\sum_{i=1}^I\sum_{j=1}^J \delta_{ij}(\beta)^2.
\]
Moreover, if $W_i^B=p$ and $W_j^S=q$, then
\(
\widehat\psi_i^B(\beta)={I}^{-1}\{\widehat a_i(p)-\widehat{\bar a}(p)\},
\) and \(\widehat\psi_j^S(\beta)={J}^{-1}\{\widehat b_j(q)-\widehat{\bar b}(q)\}\).
Consequently,
\(\sum_{i:W_i^B=p}\widehat\psi_i^B(\beta)^2 =I^{-2}(I_p-1)s_{a,p}^2\), \(\sum_{j:W_j^S=q}\widehat\psi_j^S(\beta)^2 =J^{-2}(J_q-1)s_{b,q}^2\).
The corresponding coordinatewise identities give the matrix decomposition used in the proof
of Proposition~\ref{prop:twoway-vector-conservative}.
\end{lemma}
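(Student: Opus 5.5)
The decomposition of $\beta^\top\widehat\Omega_{\mathrm{2w}}\beta$ is pure algebra: expand the sandwich in \eqref{eq:twoway-cov-vector} and pull $\beta^\top(Z^\top Z)^{-1}$ inside each of $\Gamma_B$, $\Gamma_S$, and $\Gamma_{BS}$. For example, $\beta^\top(Z^\top Z)^{-1}Z_i^\top\widehat u_i\widehat u_i^\top Z_i(Z^\top Z)^{-1}\beta=\widehat\psi_i^B(\beta)^2$, and the seller and pair terms are handled the same way.

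For the closed forms, I will use that $Z$ consists of the disjoint cell indicators. Hence $Z^\top Z=\mathrm{diag}(I_1J_1,I_1J_0,I_0J_1,I_0J_0)$, and
\[
\beta^\top(Z^\top Z)^{-1}Z_i^\top\widehat u_i=\sum_{q}\frac{\beta_{pq}}{I_pJ_q}\sum_{j:W_j^S=q}\widehat u_{ij}.
\]
Here $p=W_i^B$ and $\widehat u_{ij}=Y_{ij}^{\mathrm{obs}}-\widehat Y(p,q)$.

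The inner sum is $J_q\{\widehat{\bar Y}_{i\cdot}(p,q)-\widehat Y(p,q)\}$. So $\widehat\psi_i^B(\beta)=I_p^{-1}\sum_q\beta_{pq}\{\widehat{\bar Y}_{i\cdot}(p,q)-\widehat Y(p,q)\}$.

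Next, $\widehat Y(p,q)=I_p^{-1}\sum_{i':W_{i'}^B=p}\widehat{\bar Y}_{i'\cdot}(p,q)$, and $I_p^{-1}=I^{-1}(e_p^B)^{-1}$. Using the definition of $\widehat a_i(p)$, this gives $\widehat\psi_i^B(\beta)=I^{-1}\{\widehat a_i(p)-\widehat{\bar a}(p)\}$.

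The seller identity follows symmetrically with $J_q^{-1}=J^{-1}(e_q^S)^{-1}$. One point to check is the sign convention in $B_j^{(q)}$ versus the definition of $\widehat b_j(q)$. It does not matter for the squares, but I will define $\widehat b_j(q)=(e_q^S)^{-1}\sum_p\beta_{pq}\widehat{\bar Y}_{\cdot j}(p,q)$ to match.

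Summing the squared identities within each treatment state gives $\sum_{i:W_i^B=p}\widehat\psi_i^B(\beta)^2=I^{-2}(I_p-1)s_{a,p}^2$ directly from the definition of $s_{a,p}^2$, and likewise $\sum_{j:W_j^S=q}\widehat\psi_j^S(\beta)^2=J^{-2}(J_q-1)s_{b,q}^2$.

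The matrix statement follows by taking $\beta$ to be the coordinate vectors and polarizing.

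The only real care needed is bookkeeping: matching the $e_p^B$ normalization and the within-state centering. No probabilistic step is involved.
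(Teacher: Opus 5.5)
Your proposal is correct and follows the paper's proof essentially step for step. Both arguments expand the sandwich, use the diagonal $Z^\top Z$ and the saturated-regression residual identity $\widehat u_{ij}=Y_{ij}^{\mathrm{obs}}-\widehat Y(p,q)$, rewrite the within-cell residual sums through $\widehat{\bar Y}_{i\cdot}(p,q)$ and $\widehat Y(p,q)=I_p^{-1}\sum_{i:W_i^B=p}\widehat{\bar Y}_{i\cdot}(p,q)$, and polarize for the matrix version. Your sign concern is moot, since $B_j^{(0)}$ carries no minus signs (the minus appears only in $B_j=B_j^{(1)}-B_j^{(0)}$), so your definition of $\widehat b_j(q)$ is exactly the paper's.
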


The detailed algebra proving Lemma~\ref{lem:twoway-score-decomposition} is given in Section~\ref{app:cluster-score-decomposition}.

For buyers, define
\(A_i^{(1)} = \bigl( \bar Y_{i\cdot}(1,1)/e_1^B,\allowbreak \bar Y_{i\cdot}(1,0)/e_1^B,\allowbreak 0,0 \bigr)^\top\), \(A_i^{(0)} = \bigl( 0,0,\allowbreak \bar Y_{i\cdot}(0,1)/e_0^B,\allowbreak \bar Y_{i\cdot}(0,0)/e_0^B \bigr)^\top\).
For sellers, define
\(B_j^{(1)} = \bigl( \bar Y_{\cdot j}(1,1)/e_1^S,\allowbreak 0,\allowbreak \bar Y_{\cdot j}(0,1)/e_1^S,\allowbreak 0 \bigr)^\top\), \(B_j^{(0)} = \bigl( 0,\allowbreak \bar Y_{\cdot j}(1,0)/e_0^S,\allowbreak 0,\allowbreak \bar Y_{\cdot j}(0,0)/e_0^S \bigr)^\top\).
Thus $A_i=A_i^{(1)}-A_i^{(0)}$ and
$B_j=B_j^{(1)}-B_j^{(0)}$. The deterministic first-order target of the two-way cluster-robust covariance is
\begin{equation}\label{eq:Omega2w-pop-target}
\Omega_{\mathrm{2w}}^\circ
=
\frac{e_1^B}{I}\mathcal S_I(A^{(1)})
+
\frac{e_0^B}{I}\mathcal S_I(A^{(0)})
+
\frac{e_1^S}{J}\mathcal S_J(B^{(1)})
+
\frac{e_0^S}{J}\mathcal S_J(B^{(0)}).
\end{equation}

For $W_i^B=1$, let
$\widehat A_i^{(1)}=(\widehat{\bar Y}_{i\cdot}(1,1)/e_1^B,
\widehat{\bar Y}_{i\cdot}(1,0)/e_1^B,0,0)^\top$; for $W_i^B=0$, let
$\widehat A_i^{(0)}=(0,0,\widehat{\bar Y}_{i\cdot}(0,1)/e_0^B,
\widehat{\bar Y}_{i\cdot}(0,0)/e_0^B)^\top$.
Let $\widehat{\mathcal S}_{I,p}(A^{(p)})$ be their sample covariance matrix within buyer
state $p$. Define $\widehat B_j^{(q)}$ and
$\widehat{\mathcal S}_{J,q}(B^{(q)})$ analogously by replacing column means with their
buyer-sample analogues.

\subsubsection*{Proof of Proposition~\ref{prop:twoway-vector-conservative}}

\begin{proof}
The saturated-regression cluster-score algebra in
Lemma~\ref{lem:twoway-score-decomposition} gives
\begin{align*}
\widehat\Omega_{\mathrm{2w}}
={}&
\sum_{p=0}^1\frac{I_p-1}{I^2}
\widehat{\mathcal S}_{I,p}(A^{(p)})
+
\sum_{q=0}^1\frac{J_q-1}{J^2}
\widehat{\mathcal S}_{J,q}(B^{(q)})
-
\widehat\Omega_{BS},
\end{align*}
where the hatted covariance matrices are computed from the observed
buyer- and seller-side projection vectors defined above, and
$\widehat\Omega_{BS}$ is the buyer--seller intersection correction.
Let $\|\cdot\|_{\mathrm{op}}$ denote the operator norm.
Lemma~\ref{lem:estimated-projection-variances}, applied coordinatewise and then
by polarization, gives
\(\| \widehat{\mathcal S}_{I,p}(A^{(p)})-\mathcal S_I(A^{(p)}) \|_{\mathrm{op}} =o_{\mathbb P}(1)\), \(\| \widehat{\mathcal S}_{J,q}(B^{(q)})-\mathcal S_J(B^{(q)}) \|_{\mathrm{op}} =o_{\mathbb P}(1)\).
To control the intersection correction, note that both $Z^\top Z$ and
$\Gamma_{BS}$ are diagonal in the exposure-cell ordering. Hence
\(\widehat\Omega_{BS} := (Z^\top Z)^{-1}\Gamma_{BS}(Z^\top Z)^{-1} = \operatorname{diag}_{(p,q)} \{ (I_pJ_q)^{-2} \sum_{i:W_i^B=p}\sum_{j:W_j^S=q}\widehat u_{ij}^2 \}\).
For each $(p,q)$, the cell mean $\widehat Y(p,q)$ minimizes the within-cell
sum of squares, so
\((I_pJ_q)^{-1} \sum_{i:W_i^B=p}\sum_{j:W_j^S=q}\widehat u_{ij}^2 \le (I_pJ_q)^{-1} \sum_{i,j}G_{pq,ij} \{Y_{ij}(p,q)-\bar Y(p,q)\}^2 = O_{\mathbb P}(1)\).
Indeed, the expectation of the last sample average equals the corresponding
finite-population average, which is bounded by
Condition~\ref{cond:vector-clt}(iii) via the power-mean inequality. The overlap condition therefore gives
\begin{equation}\label{eq:intersection-negligible}
\|\widehat\Omega_{BS}\|_{\mathrm{op}}
=
O_{\mathbb P}\{(IJ)^{-1}\}
=
o_{\mathbb P}(\rho_{I,J}).
\end{equation}
Since
\(I^{-2}(I_p-1)=e_p^B I^{-1}+O(I^{-2})\), \(J^{-2}(J_q-1)=e_q^S J^{-1}+O(J^{-2})\),
we conclude that
\(\| \widehat\Omega_{\mathrm{2w}}-\Omega_{\mathrm{2w}}^\circ \|_{\mathrm{op}} =o_{\mathbb P}(\rho_{I,J})\).

For any two vector finite populations $X,Y$ of the same size and
$e_0+e_1=1$,
\[
e_1\mathcal S(X)+e_0\mathcal S(Y)
-
e_1e_0\mathcal S(X-Y)
=
\mathcal S(e_1X+e_0Y)
\succeq0.
\]
By the definition of $\Sigma$,
\(\Sigma = e_1^Be_0^B I^{-1} \mathcal S_I(A^{(1)}-A^{(0)}) + e_1^Se_0^S J^{-1} \mathcal S_J(B^{(1)}-B^{(0)})\).
Applying the preceding identity on the two sides gives
\[
\Omega_{\mathrm{2w}}^\circ-\Sigma
=
\frac{1}{I}
\mathcal S_I(e_1^BA^{(1)}+e_0^BA^{(0)})
+
\frac{1}{J}
\mathcal S_J(e_1^SB^{(1)}+e_0^SB^{(0)})
\succeq0.
\]
Combining this identity with the established expansion
$\widehat\Omega_{\mathrm{2w}}=\Omega_{\mathrm{2w}}^\circ
+o_{\mathbb P}(\rho_{I,J})$ proves \eqref{eq:Omega2w-conservative}.
\end{proof}

\subsection{Proof of Proposition~\ref{prop:optimized-scalar-variance}}

\begin{proof}
By Lemma~\ref{lem:estimated-projection-variances}, for $p,q\in\{0,1\}$,
\(s_{a,p}^2-S_{a,p}^2=o_{\mathbb P}(1)\), \(s_{b,q}^2-S_{b,q}^2=o_{\mathbb P}(1)\).
The moment and overlap conditions also imply that the population variances are $O(1)$ and
the sample variances are $O_{\mathbb P}(1)$. Since $x,y\ge0$ imply
$|\sqrt{x}-\sqrt{y}|\le\sqrt{|x-y|}$, we have
\(s_{a,p}-S_{a,p}=o_{\mathbb P}(1)\), \(s_{b,q}-S_{b,q}=o_{\mathbb P}(1)\).
Substitution into \eqref{eq:Vopt-sample} yields
\(\widehat V_{\mathrm{opt},\beta} - V_{\mathrm{opt},\beta}^\circ = o_{\mathbb P}(\rho_{I,J})\).

Because $\beta^\top A_i=a_i(1)-a_i(0)$ and
$\beta^\top B_j=b_j(1)-b_j(0)$,
\(\sigma_\beta^2 = e_1^Be_0^B I^{-1}S_{a,\Delta}^2 + e_1^Se_0^S J^{-1}S_{b,\Delta}^2\).
The triangle inequality for finite-population standard deviations gives
\(S_{a,\Delta}\le S_{a,1}+S_{a,0}\), \(S_{b,\Delta}\le S_{b,1}+S_{b,0}\).
Therefore $\sigma_\beta^2\le V_{\mathrm{opt},\beta}^\circ$.
For the upper bound, the definition of $\Omega_{\mathrm{2w}}^\circ$ gives
\begin{align*}
\beta^\top\Omega_{\mathrm{2w}}^\circ\beta
-V_{\mathrm{opt},\beta}^\circ
=
\frac{1}{I}\bigl(e_1^B S_{a,1}-e_0^B S_{a,0}\bigr)^2+\frac{1}{J}\bigl(e_1^S S_{b,1}-e_0^S S_{b,0}\bigr)^2
\ge0.
\end{align*}
This proves both inequalities in the proposition.
\end{proof}

\subsection{Proof of Corollary~\ref{cor:scalar-variance-exactness}}\label{app:proof-exactness}
\begin{proof}
For part (i), setting $(t_B,t_S)=(1,1)$ in
\eqref{eq:Vw-conservative-identity} gives
\(\beta^\top\Omega_{\mathrm{2w}}^\circ\beta-\sigma_\beta^2 = I^{-1} S_{e_1^Ba(1)+e_0^Ba(0)}^2 + J^{-1} S_{e_1^Sb(1)+e_0^Sb(0)}^2\).
Both terms on the right-hand side are nonnegative. Hence the gap is zero
if and only if both finite-population variances vanish. Since
\(e_1^Ba_i(1)+e_0^Ba_i(0) - \{e_1^B\bar a(1)+e_0^B\bar a(0)\} = e_1^B\widetilde a_i(1)+e_0^B\widetilde a_i(0)\),
and analogously on the seller side, this is equivalent to the conditions
in part (i).

For part (ii), write
\(S_{a,10} = (I-1)^{-1} \sum_{i=1}^I \widetilde a_i(1)\widetilde a_i(0)\), \(S_{b,10} = (J-1)^{-1} \sum_{j=1}^J \widetilde b_j(1)\widetilde b_j(0)\).
Using
\(S_{a,\Delta}^2
=
S_{a,1}^2+S_{a,0}^2-2S_{a,10},
\)
and its seller-side analogue, we obtain
\begin{align*}
V_{\mathrm{opt},\beta}^\circ-\sigma_\beta^2
=
\frac{2e_1^Be_0^B}{I}
\left(
S_{a,1}S_{a,0}+S_{a,10}
\right)+
\frac{2e_1^Se_0^S}{J}
\left(
S_{b,1}S_{b,0}+S_{b,10}
\right).
\end{align*}
By the Cauchy--Schwarz inequality,
\(
S_{a,10}\ge -S_{a,1}S_{a,0}
\) and \(
S_{b,10}\ge -S_{b,1}S_{b,0},
\)
so the gap is zero if and only if equality holds on both sides.

Consider the buyer side. If $S_{a,1}=0$, then
$\widetilde a_i(1)=0$ for all $i$, and we may take
$(c_1^B,c_0^B)=(1,0)$. If $S_{a,0}=0$, we may take
$(c_1^B,c_0^B)=(0,1)$. If both standard deviations are positive, equality
in Cauchy--Schwarz holds if and only if
\(\widetilde a_i(0) = -(S_{a,0}/S_{a,1})\widetilde a_i(1)\), for all \(i\).
Equivalently, taking
\(c_1^B = S_{a,0}/(S_{a,1}+S_{a,0})\), \(c_0^B = S_{a,1}/(S_{a,1}+S_{a,0})\),
gives
\(
c_1^B\widetilde a_i(1)+c_0^B\widetilde a_i(0)=0
\) for all $i$.
Conversely, the existence of nonnegative $c_1^B,c_0^B$ summing to one
and satisfying this identity implies either that one centered projection
is identically zero or that the two are negatively proportional, and hence
equality holds in the Cauchy--Schwarz bound.

The seller-side argument is identical. This proves part (ii).
\end{proof}

\subsection{Comparison with exposure-by-exposure bounds}
\label{app:compare-variance-bounds}

\subsubsection*{Proof of Proposition~\ref{prop:scalar-variance-comparison}}

\begin{proof}
Index the four exposure cells by $g=(p,q)$ and set
$w_{pq}=|\beta_{pq}|$ and $v_{pq}=\var\{\widehat Y(p,q)\}$. Direct expansion gives
\[
V_{\mathrm{Mas},\beta}^\circ-V_{\mathrm{Sud},\beta}^\circ
=
\frac12\sum_{g\ne h}w_gw_h(\sqrt{v_g}-\sqrt{v_h})^2
\ge0.
\]

For the first inequality, use $R_{pq}$ and $C_{pq}$ from
Lemma~\ref{lem:cell-variance-decomposition} and put
\(x_{pq} = \{e_{1-p}^B/(e_p^BI)\}^{1/2}R_{pq}\), \(y_{pq} = \{e_{1-q}^S/(e_q^SJ)\}^{1/2}C_{pq}\).
By \eqref{eq:cell-mean-variance-decomposition},
$x_{pq}^2+y_{pq}^2\le v_{pq}$. The triangle inequality for finite-population
standard deviations gives
\(\sqrt{e_1^Be_0^B I^{-1}}(S_{a,1}+S_{a,0}) \le \sum_{p,q}w_{pq}x_{pq}\),
and, analogously,
\(\sqrt{e_1^Se_0^S J^{-1}}(S_{b,1}+S_{b,0}) \le \sum_{p,q}w_{pq}y_{pq}\).
Therefore, Minkowski's inequality in $\mathbb R^2$ yields
\begin{align*}
V_{\mathrm{opt},\beta}^\circ
\le
\left(\sum_{p,q}w_{pq}x_{pq}\right)^2
+
\left(\sum_{p,q}w_{pq}y_{pq}\right)^2\le
\left\{
\sum_{p,q}w_{pq}\sqrt{x_{pq}^2+y_{pq}^2}
\right\}^2
\le
V_{\mathrm{Sud},\beta}^\circ.
\end{align*}
Combining the two inequalities proves the proposition.
\end{proof}

The second inequality is strict whenever two active cells have unequal $v_g$. The first is
strict whenever at least one of the triangle, Minkowski, or interaction-variance inequalities
used above is strict.

\section{Supporting sampling and regression results}\label{app:technical-identities}
\subsection{Regression reparameterization and residual identities}

Lemma~\ref{lem:ols-reparameterization} below shows that OLS coefficients and
cluster-robust covariance matrices transform equivariantly under an invertible
reparameterization.
\begin{lemma}
\label{lem:ols-reparameterization}
Let $R$ be an $n\times k$ full-column-rank design matrix and let
$H\in\mathbb R^{k\times k}$ be nonsingular. Set $\widetilde R=RH$.
Let $\widehat\beta$ and $\widehat{\widetilde\beta}$ be the OLS coefficient
vectors from regressing $Y$ on $R$ and $\widetilde R$, respectively. Then
\(
\widehat{\widetilde\beta}=H^{-1}\widehat\beta,
\)
and the two regressions have identical fitted values and residuals.

Let
\(\widehat M_R = \sum_{m=1}^M c_m \sum_{g\in\mathcal G_m} R_g^\top\widehat e_g\widehat e_g^\top R_g\),
where $\mathcal G_m$ are fixed collections of clusters and $c_m$ are fixed
scalars, and define the sandwich covariance estimator
\(
\widehat\Omega_R
=
(R^\top R)^{-1}\widehat M_R(R^\top R)^{-1}.
\)
Define $\widehat\Omega_{\widetilde R}$ analogously. Then
\(\widehat\Omega_{\widetilde R} = H^{-1}\widehat\Omega_RH^{-\top}\).
Consequently, for any fixed matrix $A$ with $k$ columns,
\(A\widehat\beta = AH\widehat{\widetilde\beta}\), \(A\widehat\Omega_RA^\top = AH\widehat\Omega_{\widetilde R}H^\top A^\top\).
\end{lemma}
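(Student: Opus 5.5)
The plan is to prove the three claims of Lemma~\ref{lem:ols-reparameterization} in order: the coefficient identity, the invariance of fitted values and residuals, and then the transformation of the sandwich estimator. Everything uses only $\widetilde R=RH$ and the nonsingularity of $H$.

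\textbf{Step 1: coefficients.} First note that $\widetilde R$ has full column rank, since $\widetilde R v=0$ implies $Hv=0$ and hence $v=0$. Therefore $\widetilde R^\top\widetilde R=H^\top R^\top RH$ is invertible. Computing directly,
\[
\widehat{\widetilde\beta}=(H^\top R^\top R H)^{-1}H^\top R^\top Y=H^{-1}(R^\top R)^{-1}H^{-\top}H^\top R^\top Y=H^{-1}\widehat\beta .
\]
This gives $\widetilde R\widehat{\widetilde\beta}=RHH^{-1}\widehat\beta=R\widehat\beta$, so the fitted values coincide. The residuals $\widehat e=Y-R\widehat\beta$ are therefore identical in the two regressions. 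An alternative route is to observe that both regressions project $Y$ onto the same column space $\mathrm{col}(R)=\mathrm{col}(\widetilde R)$.

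\textbf{Step 2: cluster meat.} For each cluster $g$, the rows of $\widetilde R$ belonging to $g$ form $\widetilde R_g=R_gH$, and the cluster residual vectors $\widehat e_g$ are unchanged by Step~1. Hence each summand becomes $\widetilde R_g^\top\widehat e_g\widehat e_g^\top\widetilde R_g=H^\top R_g^\top\widehat e_g\widehat e_g^\top R_gH$. Because the collections $\mathcal G_m$ and the scalars $c_m$ are fixed and do not depend on the parameterization, summing over clusters gives $\widehat M_{\widetilde R}=H^\top\widehat M_RH$.

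\textbf{Step 3: sandwich.} Combining the bread $(\widetilde R^\top\widetilde R)^{-1}=H^{-1}(R^\top R)^{-1}H^{-\top}$ with Step~2,
\[
\widehat\Omega_{\widetilde R}=H^{-1}(R^\top R)^{-1}H^{-\top}H^\top\widehat M_RH H^{-1}(R^\top R)^{-1}H^{-\top}=H^{-1}\widehat\Omega_RH^{-\top}.
\]

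\textbf{Step 4: consequences.} Inverting these identities gives $\widehat\beta=H\widehat{\widetilde\beta}$ and $\widehat\Omega_R=H\widehat\Omega_{\widetilde R}H^\top$. Premultiplying by $A$, and postmultiplying by $A^\top$ in the covariance case, yields the two stated identities.

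There is no real obstacle; the argument is purely algebraic. The one point to state explicitly is why the meat transforms as claimed. This requires that the residuals are invariant, which Step~1 supplies, and that the cluster partition and the weights $c_m$ are fixed and do not depend on the design matrix. The two-way estimator $\Gamma_B+\Gamma_S-\Gamma_{BS}$ fits this template with $c_m\in\{1,1,-1\}$ and buyer, seller, and singleton-pair clusters, which is what justifies Remark~\ref{rem:factor-based-point}.
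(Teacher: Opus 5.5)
Your proposal is correct and follows the paper's proof essentially step for step: the same closed-form computation $\widehat{\widetilde\beta}=(H^\top R^\top RH)^{-1}H^\top R^\top Y=H^{-1}\widehat\beta$, the per-cluster identity $\widetilde R_g^\top\widehat e_g=H^\top R_g^\top\widehat e_g$ giving $\widehat M_{\widetilde R}=H^\top\widehat M_RH$, and substitution into the sandwich. Your explicit check that $\widetilde R$ has full column rank, so that $\widetilde R^\top\widetilde R$ is invertible, is a small addition that the paper leaves implicit.
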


\begin{proof}
Since $\widetilde R=RH$,
\(\widehat{\widetilde\beta} = (H^\top R^\top RH)^{-1}H^\top R^\top Y = H^{-1}\widehat\beta\).
Hence
$\widetilde R\widehat{\widetilde\beta}=R\widehat\beta$, so the two
regressions have identical fitted values and residuals. For every cluster $g$,
$\widetilde R_g^\top\widehat e_g
=H^\top R_g^\top\widehat e_g$, and therefore
$\widehat M_{\widetilde R}=H^\top\widehat M_RH$. Also,
\((\widetilde R^\top\widetilde R)^{-1} = H^{-1}(R^\top R)^{-1}H^{-\top}\).
Substitution gives
$\widehat\Omega_{\widetilde R}
=H^{-1}\widehat\Omega_RH^{-\top}$.
The final identities follow immediately.
\end{proof}

Corollary~\ref{prop:factor-reparameterization} below applies
Lemma~\ref{lem:ols-reparameterization} to show that the saturated
cell-indicator regression and the factor-based regression are equivalent reparameterizations of the same regression.
\begin{corollary}
\label{prop:factor-reparameterization}
Let $Z$ be the $IJ\times4$ design matrix of the saturated cell-indicator
regression in \eqref{eq:reg-Gij-sec31}, with columns ordered as
$(11),(10),(01),(00)$, and let $Z_{\mathrm{fac}}$ be the design matrix of
\(Y_{ij}^{\mathrm{obs}} = \gamma_0+\gamma_BW_i^B+\gamma_SW_j^S +\gamma_{BS}W_i^BW_j^S+u_{ij}\).
Define
\[
H=
\begin{pmatrix}
1&1&1&1\\
1&1&0&0\\
1&0&1&0\\
1&0&0&0
\end{pmatrix},
\qquad
\Lambda=H^{-1}
=
\begin{pmatrix}
0&0&0&1\\
0&1&0&-1\\
0&0&1&-1\\
1&-1&-1&1
\end{pmatrix}.
\]
Then $Z_{\mathrm{fac}}=ZH$. Hence, whenever all four exposure cells are nonempty,
\(\widehat{\boldsymbol\gamma} = \Lambda\widehat{\boldsymbol\mu}\), \(\widehat\Omega_{\gamma,\mathrm{2w}} = \Lambda\widehat\Omega_{\mathrm{2w}}\Lambda^\top\).
In particular,
\(
\widehat\gamma_0=\widehat Y(0,0)\), \(
\widehat\gamma_B=\widehat\tau_{B,\mathrm{spill}}\), \(
\widehat\gamma_S=\widehat\tau_{S,\mathrm{spill}}\), \(
\widehat\gamma_{BS}=\widehat\tau_{\mathrm{int}},
\)
and
\(
\widehat\tau_{\mathrm{tot}}
=
\widehat\gamma_B+\widehat\gamma_S+\widehat\gamma_{BS}.
\)
Moreover, for any fixed matrix $F$ with four columns,
\(F\widehat{\boldsymbol\mu} = FH\widehat{\boldsymbol\gamma}\), \(F\widehat\Omega_{\mathrm{2w}}F^\top = FH\widehat\Omega_{\gamma,\mathrm{2w}}H^\top F^\top\).
Thus the two parameterizations give identical fitted values, residuals,
and Wald inference.
\end{corollary}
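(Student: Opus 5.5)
The plan is to apply Lemma~\ref{lem:ols-reparameterization} with $R=Z$ and the displayed $H$. Everything then follows by substitution, once two things are checked: the design identity $Z_{\mathrm{fac}}=ZH$, and full column rank of $Z$.

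First I would verify $Z_{\mathrm{fac}}=ZH$ row by row. For pair $(i,j)$, the row of $Z$ is $z_{ij}^\top=(G_{11,ij},G_{10,ij},G_{01,ij},G_{00,ij})$, which is a standard basis vector because exactly one cell indicator equals one. Multiplying by $H$ picks out the corresponding row of $H$:
\begin{itemize}
\item cell $(1,1)$ gives $(1,1,1,1)$;
\item cell $(1,0)$ gives $(1,1,0,0)$;
\item cell $(0,1)$ gives $(1,0,1,0)$;
\item cell $(0,0)$ gives $(1,0,0,0)$.
\end{itemize}
In each case this equals $(1,W_i^B,W_j^S,W_i^BW_j^S)$, the row of $Z_{\mathrm{fac}}$. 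A direct multiplication confirms $H\Lambda=\mathrm I_4$, so $H$ is nonsingular with inverse $\Lambda$.

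Next, full column rank. The columns of $Z$ are indicators of disjoint cells, so $Z^\top Z=\operatorname{diag}(I_1J_1,I_1J_0,I_0J_1,I_0J_0)$. This matrix is invertible exactly when all four cells are nonempty, which is the stated hypothesis (and is guaranteed under Condition~\ref{cond:vector-clt}(i)).

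Lemma~\ref{lem:ols-reparameterization} then gives $\widehat{\boldsymbol\gamma}=H^{-1}\widehat{\boldsymbol\mu}=\Lambda\widehat{\boldsymbol\mu}$, together with identical fitted values and residuals. The two-way clustered estimator \eqref{eq:twoway-cov-vector} has the form $\widehat M_R$ used in the lemma. It takes three cluster collections:
\begin{itemize}
\item buyers, with $c=1$;
\item sellers, with $c=1$;
\item singleton pairs, with $c=-1$.
\end{itemize}
The lemma therefore yields $\widehat\Omega_{\gamma,\mathrm{2w}}=\Lambda\widehat\Omega_{\mathrm{2w}}\Lambda^\top$.

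Reading off the rows of $\Lambda$ applied to $\widehat{\mathbf Y}=\widehat{\boldsymbol\mu}$ gives the coefficient identities. The first row gives $\widehat\gamma_0=\widehat Y(0,0)$, and the second and third rows give the two spillover contrasts. The fourth row, $(1,-1,-1,1)$, gives the interaction contrast. Summing $\widehat\gamma_B+\widehat\gamma_S+\widehat\gamma_{BS}$ yields $\widehat Y(1,1)-\widehat Y(0,0)=\widehat\tau_{\mathrm{tot}}$.

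The last statements follow from the lemma's final identities with $A=F$, using $\widehat{\boldsymbol\mu}=H\widehat{\boldsymbol\gamma}$. Because the Wald statistic for $F\bar{\mathbf Y}$ depends only on $F\widehat{\boldsymbol\mu}$ and $F\widehat\Omega_{\mathrm{2w}}F^\top$, Wald inference is identical in the two parameterizations.

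There is no real obstacle here. The only point needing care is confirming that the intersection correction $-\Gamma_{BS}$ fits the lemma's signed-cluster form, which it does with $c_m=-1$ and singleton clusters.
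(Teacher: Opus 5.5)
Your proposal is correct and follows the paper's proof. The paper likewise verifies $(z_{ij}^{\mathrm{fac}})^\top = z_{ij}^\top H$ row by row and then invokes Lemma~\ref{lem:ols-reparameterization}. Your extra checks make explicit what the paper leaves implicit: that $H\Lambda=\mathrm I_4$, that $Z^\top Z$ is invertible exactly when all four cells are nonempty, and that $\widehat\Omega_{\mathrm{2w}}$ matches the lemma's signed-cluster form with $c_m=-1$ on singleton pairs.
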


\begin{proof}
For each pair $(i,j)$,
$(z_{ij}^{\mathrm{fac}})^\top=z_{ij}^\top H$, so $Z_{\mathrm{fac}}=ZH$.
The result follows from Lemma~\ref{lem:ols-reparameterization}.
\end{proof}

\subsection{Law of large numbers for exposure-cell averages}
Lemma~\ref{lem:cell-average-lln} below shows that, under the two-sided
randomization, the average of a fixed array within any exposure cell is close
to its finite-population average, with mean-square error of order
$\rho_{I,J}$.
\begin{lemma}\label{lem:cell-average-lln}
Let $\{A_{ij}\}_{i\in[I],j\in[J]}$ be a fixed finite-population array, where each
$A_{ij}$ is either a scalar, a fixed-dimensional vector, or a fixed-dimensional matrix. 
Define
\(\bar A_{i\cdot} = J^{-1}\sum_{j=1}^J A_{ij}\), \(\bar A_{\cdot j} = I^{-1}\sum_{i=1}^I A_{ij}\), \(\bar A_{\cdot\cdot} = (IJ)^{-1}\sum_{i=1}^I\sum_{j=1}^J A_{ij}\).
For each $(p,q)\in\{0,1\}^2$, define \(
\widehat A_{pq}
=
(I_pJ_q)^{-1}
\sum_{i=1}^I\sum_{j=1}^J
A_{ij}G_{pq,ij}.
\)
Suppose that, as $I,J\to\infty$,
\(
\min\{e_0^B,e_1^B,e_0^S,e_1^S\}\ge \underline e>0
\)
for some constant $\underline e$, and
\((IJ)^{-1}\sum_{i=1}^I\sum_{j=1}^J \|A_{ij}-\bar A_{\cdot\cdot}\|^2=O(1)\).
Then, for each fixed $(p,q)\in\{0,1\}^2$,
\(\E\| \widehat A_{pq}-\bar A_{\cdot\cdot} \|^2 = O(\rho_{I,J})\).
Consequently,
\(
\|
\widehat A_{pq}-\bar A_{\cdot\cdot}
\|
=
O_{\mathbb P}(\rho_{I,J}^{1/2}).
\)
\end{lemma}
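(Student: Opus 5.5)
The plan is to reduce to the scalar case coordinatewise, use the exact decomposition of Lemma~\ref{lem:decomp-yhat}, and compute the second moment term by term. Since the claim is about a fixed-dimensional array, the squared norm is the sum of the squared norms of the coordinates. It therefore suffices to treat a scalar array $A_{ij}$; write $\dot A_{ij}=A_{ij}-\bar A_{\cdot\cdot}$. The cell average is linear in the array and reproduces constants exactly, so $\widehat A_{pq}-\bar A_{\cdot\cdot}$ equals the cell average of $\dot A_{ij}$. This lets us assume $\bar A_{\cdot\cdot}=0$ throughout.

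Applying \eqref{eq:exact-cell-decomposition} with $Y_{ij}(p,q)$ replaced by $A_{ij}$ gives
\[
\widehat A_{pq}-\bar A_{\cdot\cdot}
=
\frac{1}{I_p}\sum_{i}\bar A_{i\cdot}U_{ip}
+\frac{1}{J_q}\sum_j\bar A_{\cdot j}V_{jq}
+\frac{1}{I_pJ_q}\sum_{i,j}\widetilde A_{ij}U_{ip}V_{jq}.
\]
The three terms are mean zero and mutually uncorrelated, by the argument in the proof of Lemma~\ref{lem:cell-variance-decomposition}. So $\E(\widehat A_{pq}-\bar A_{\cdot\cdot})^2$ is the sum of the three variances.

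\emph{Buyer and seller terms.} By the complete-randomization variance formula, the buyer term has variance
\[
\frac{e^B_{1-p}}{e^B_pI}\,(I-1)^{-1}\sum_i\bar A_{i\cdot}^2 .
\]
By Jensen, $I^{-1}\sum_i\bar A_{i\cdot}^2\le (IJ)^{-1}\sum_{i,j}\dot A_{ij}^2=O(1)$. With overlap this gives $O(I^{-1})$. The seller term is $O(J^{-1})$ symmetrically.

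\emph{Interaction term.} Since $U_{i0}=-U_{i1}$ and $V_{j0}=-V_{j1}$, this term equals $\pm (e_p^Be_q^S)^{-1}R_C$ with $c_{ij}=\widetilde A_{ij}$. That array has zero row and column sums, so Lemma~\ref{lem:bilinear-remainder} applies. Double centering is an orthogonal projection, so $\sum_{i,j}\widetilde A_{ij}^2\le\sum_{i,j}\dot A_{ij}^2$. The variance is therefore $O((IJ)^{-1})=O(\rho_{I,J})$.

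Summing the three pieces gives $O(\rho_{I,J})$ for each coordinate, hence for the vector or matrix norm. The $O_{\mathbb P}$ statement then follows from Markov's inequality.

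I expect no real obstacle; the only care needed is in the interaction term. There one must check that the sign flip for $p=0$ or $q=0$ preserves the doubly centered structure required by Lemma~\ref{lem:bilinear-remainder}, and that replacing $A$ by $\widetilde A$ is legitimate via the zero-sum identities, as stated in Lemma~\ref{lem:decomp-yhat}.
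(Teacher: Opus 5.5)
Your proposal is correct and follows the paper's proof essentially step for step. You use the same three-term decomposition from Lemma~\ref{lem:decomp-yhat}, the complete-randomization variance formula plus Jensen for the buyer and seller terms, Lemma~\ref{lem:bilinear-remainder} on the doubly centered array for the interaction term, and then a coordinatewise extension and Markov. The only cosmetic differences are that you add the three variances using their mutual uncorrelatedness where the paper uses $(a+b+c)^2\le 3(a^2+b^2+c^2)$, and that you spell out the sign flip $U_{i0}=-U_{i1}$, $V_{j0}=-V_{j1}$ that the paper leaves implicit.
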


\begin{proof}
We first prove the result for a scalar array. The vector and matrix cases follow by applying the same argument coordinatewise and using that the dimension is fixed.

Repeat the algebra of Lemma~\ref{lem:decomp-yhat} coordinatewise with the fixed array
$A_{ij}$ in place of $Y_{ij}(p,q)$. For each $(p,q)$,
\(\widehat A_{pq}-\bar A_{\cdot\cdot} = I_p^{-1}\sum_{i=1}^I \bar A_{i\cdot}U_{ip} + J_q^{-1}\sum_{j=1}^J \bar A_{\cdot j}V_{jq} + (I_pJ_q)^{-1}\sum_{i=1}^I\sum_{j=1}^J A_{ij}U_{ip}V_{jq}\).
Define $a_i^B=\bar A_{i\cdot}-\bar A_{\cdot\cdot}$, $a_j^S=\bar A_{\cdot j}-\bar A_{\cdot\cdot}$ and $\widetilde A_{ij}
=
A_{ij}-\bar A_{i\cdot}-\bar A_{\cdot j}+\bar A_{\cdot\cdot}$.
Since $\sum_i U_{ip}=0$ and $\sum_j V_{jq}=0$, we can rewrite this as
\(\widehat A_{pq}-\bar A_{\cdot\cdot} = B_{pq}+S_{pq}+R_{pq}\),
where
$
B_{pq}
=
{I_p^{-1}}\sum_{i=1}^I
a_i^B U_{ip}$, $
S_{pq}
=
{J_q^{-1}}\sum_{j=1}^J
a_j^S V_{jq},
$
and
\(R_{pq} = (I_pJ_q)^{-1} \sum_{i=1}^I\sum_{j=1}^J \widetilde A_{ij}U_{ip}V_{jq}\).

We bound the second moment of the three terms. For the buyer term, the complete-randomization
variance formula gives
\(\E B_{pq}^2 = \var(B_{pq}) \le C I^{-1} \cdot I^{-1}\sum_{i=1}^I (a_i^B)^2\),
where $C$ depends only on the lower bound $\underline e$. By Jensen's inequality,
\(I^{-1}\sum_{i=1}^I (a_i^B)^2 = I^{-1}\sum_{i=1}^I (\bar A_{i\cdot}-\bar A_{\cdot\cdot})^2 \le (IJ)^{-1}\sum_{i=1}^I\sum_{j=1}^J (A_{ij}-\bar A_{\cdot\cdot})^2 =O(1)\).
Hence
\(
\E B_{pq}^2=O(I^{-1}).
\)

Similarly,
\(\E S_{pq}^2 = \var(S_{pq}) \le C J^{-1} \cdot J^{-1}\sum_{j=1}^J (a_j^S)^2 = O(J^{-1})\).

For the interaction term, using the independence of the buyer and seller randomizations and
Lemma~\ref{lem:bilinear-remainder},
\(\E R_{pq}^2 = \var(R_{pq}) \le C (IJ)^{-1} \cdot (IJ)^{-1}\sum_{i=1}^I\sum_{j=1}^J \widetilde A_{ij}^2\).
Double centering does not increase the order of the second moment, so
\((IJ)^{-1}\sum_{i=1}^I\sum_{j=1}^J \widetilde A_{ij}^2 = O(1)\).
Therefore
\(
\E R_{pq}^2=O((IJ)^{-1}).
\)

Combining the three bounds,
\[
\E\left(\widehat A_{pq}-\bar A_{\cdot\cdot}\right)^2
\le
3\E B_{pq}^2+3\E S_{pq}^2+3\E R_{pq}^2
=
O(I^{-1}+J^{-1}+(IJ)^{-1})
=
O(\rho_{I,J}).
\]

For a vector or matrix array, vectorize $A_{ij}$ into a fixed-dimensional vector. The scalar
argument applies to each coordinate. Summing over the fixed number of coordinates gives
\(\E\| \widehat A_{pq}-\bar A_{\cdot\cdot} \|^2 = O(\rho_{I,J})\).
Markov's inequality then gives
\(
\|
\widehat A_{pq}-\bar A_{\cdot\cdot}
\|
=
O_{\mathbb P}(\rho_{I,J}^{1/2}).
\)
\end{proof}

\subsection{Consistency of estimated projection variances}\label{app:projection-variance-consistency}
\begin{lemma}\label{lemma:sample-var-consistency}
For each $N$, let $Y_1,\dots,Y_N$ be a finite population, with population mean
$\bar Y=N^{-1}\sum_{i=1}^N Y_i$ and finite-population variance
$S_Y^2=(N-1)^{-1}\sum_{i=1}^N (Y_i-\bar Y)^2$.
Suppose a simple random sample without replacement of size $n$ is drawn from this population, and let
$Z_i=\mathbf 1\{i\text{ is sampled}\}$,
$\hat{\bar Y}=n^{-1}\sum_{i=1}^N Z_iY_i$, and
$s_Y^2=(n-1)^{-1}\sum_{i=1}^N Z_i(Y_i-\hat{\bar Y})^2$
be the sample variance.
Assume that, as $N\to\infty$,
\(\min(n,N-n)/N\ge c>0\), \(N^{-1}\sum_{i=1}^N(Y_i-\bar Y)^4=O(1)\).
Then $s_Y^2-S_Y^2=o_{\mathbb P}(1)$.
\end{lemma}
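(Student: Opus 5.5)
We will prove the claim by showing that $s_Y^2$ has mean $S_Y^2$ and a variance that tends to zero. The statement is invariant under shifting all $Y_i$ by a constant, so we may assume $\bar Y=0$.

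The first step is an exact algebraic decomposition:
\[
s_Y^2=\frac{n}{n-1}\Bigl\{\frac1n\sum_{i=1}^N Z_iY_i^2-\hat{\bar Y}^2\Bigr\}.
\]
This reduces the problem to two pieces. The first is a sample mean of the fixed population $\{Y_i^2\}$ under simple random sampling without replacement. The second is the square of the sample mean of $\{Y_i\}$.

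For the first piece, the standard complete-randomization (SRSWOR) variance formula gives
\[
\var\Bigl(\frac1n\sum_i Z_iY_i^2\Bigr)=\frac{N-n}{Nn}\,S^2_{Y^2}.
\]
We also have
\[
S^2_{Y^2}\le \frac{N}{N-1}\cdot\frac1N\sum_i Y_i^4=O(1).
\]
Because $n\ge cN\to\infty$, this variance is $O(1/n)\to0$. The expectation equals $N^{-1}\sum_i Y_i^2=\{(N-1)/N\}S_Y^2$. Chebyshev's inequality then yields
\[
\frac1n\sum_i Z_iY_i^2-\frac{N-1}{N}S_Y^2=o_{\mathbb P}(1).
\]

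For the second piece, the same formula gives
\[
\E\hat{\bar Y}^2=\var(\hat{\bar Y})=\frac{N-n}{Nn}S_Y^2.
\]
By the power-mean inequality, $S_Y^2=O(1)$, since it is bounded by the square root of the fourth moment up to a factor $N/(N-1)$. Hence $\hat{\bar Y}^2=O_{\mathbb P}(1/n)=o_{\mathbb P}(1)$.

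Finally, we combine the pieces. We have $n/(n-1)\to1$, $(N-1)/N\to1$, and $S_Y^2=O(1)$. These give
\[
s_Y^2=\{1+o(1)\}\{S_Y^2+o_{\mathbb P}(1)\}=S_Y^2+o_{\mathbb P}(1).
\]
The only point needing care is the bound $S^2_{Y^2}=O(1)$. This is exactly where the fourth-moment assumption enters, and it follows directly once $\bar Y=0$. The condition $\min(n,N-n)/N\ge c$ is used only to ensure $n\to\infty$; the factor $N-n$ in the variance formulas is harmless.
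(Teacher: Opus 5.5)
Your proof is correct, but it takes a different and more elementary route than the paper.

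The paper's argument runs as follows. It sets $m_N=\max_i(Y_i-\bar Y)^2$ and uses the fourth-moment bound to get $m_N=O(N^{1/2})$. It then invokes Proposition~1 of \citet{li2017general}, which gives the ratio consistency $s_Y^2/S_Y^2\to 1$ in probability once $m_N/\{\min(n,N-n)S_Y^2\}\to 0$. That condition is only guaranteed along subsequences where $S_Y^2$ stays bounded away from zero. So the paper splits into subsequences and treats those with $S_Y^2\to 0$ separately, via $\E(s_Y^2)=S_Y^2$ and Markov's inequality.

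You instead write $s_Y^2$ as $n/(n-1)$ times the sample mean of $\{Y_i^2\}$ minus $\hat{\bar Y}^2$, and apply Chebyshev to each piece. The fourth moment enters only through $S^2_{Y^2}=O(1)$. Because you target the additive error rather than the ratio, you need no lower bound on $S_Y^2$ and no subsequence argument. Your proof is self-contained, uses only $n\to\infty$, and gives the rate $s_Y^2-S_Y^2=O_{\mathbb P}(n^{-1/2})$.

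The paper's route buys a generality it does not use here. If you replace your bound on $S^2_{Y^2}$ by the cruder $(N-1)^{-1}\sum_i(Y_i-\bar Y)^4\le m_N S_Y^2$, your own argument yields ratio consistency under the max-type condition alone, with no fourth moment. Under the lemma's stated hypotheses, that extra generality is unnecessary.
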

\begin{proof}
Let $m_N=\max_i(Y_i-\bar Y)^2$. The fourth-moment condition implies
$m_N=O(N^{1/2})$ and $S_Y^2=O(1)$. Along any subsequence on which
$S_Y^2$ is bounded away from zero,
\(\{\min(n,N-n)\}^{-1}m_NS_Y^{-2} = O(N^{-1/2}) \to0\).
Proposition~1 of \citet{li2017general} therefore gives
$s_Y^2/S_Y^2\to_{\mathbb P}1$, and hence
$s_Y^2-S_Y^2=o_{\mathbb P}(1)$.
Along any subsequence on which $S_Y^2\to0$, unbiasedness,
$\E(s_Y^2)=S_Y^2$, and Markov's inequality imply
$s_Y^2=o_{\mathbb P}(1)$.
Every subsequence admits a further subsequence of one of these two types, which proves the result.
\end{proof}

Lemma~\ref{lem:estimated-projection-variances} below shows that the sample
variances of the estimated buyer- and seller-side projections consistently
estimate their finite-population counterparts. The same result holds for the
corresponding fixed-dimensional covariance matrices.
\begin{lemma}
\label{lem:estimated-projection-variances}
Under the conditions of Proposition~\ref{prop:twoway-vector-conservative}, for every
$p,q\in\{0,1\}$,
\(s_{a,p}^2-S_{a,p}^2=o_{\mathbb P}(1)\), \(s_{b,q}^2-S_{b,q}^2=o_{\mathbb P}(1)\).
The same conclusion holds in operator norm for the fixed-dimensional state-specific
projection covariance matrices used in the proof of
Proposition~\ref{prop:twoway-vector-conservative}.
\end{lemma}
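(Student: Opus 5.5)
The difficulty is that $\widehat a_i(p)$ is not the true projection $a_i(p)$. It is built from the estimated row means $\widehat{\bar Y}_{i\cdot}(p,q)$, which average only over sellers with $W_j^S=q$. So $s_{a,p}^2$ is not exactly a sample variance from a simple random sample of the fixed population $\{a_i(p)\}$. I will split the error into a sampling part handled by Lemma~\ref{lemma:sample-var-consistency} and a row-mean estimation error handled by a second-moment bound. I treat the buyer side; the seller side is symmetric, and the matrix statement follows coordinatewise plus polarization because the dimension is fixed.

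\emph{Step 1 (sampling part).} Fix $p$ and consider the oracle quantities $a_i(p)=(e_p^B)^{-1}\sum_q\beta_{pq}\bar Y_{i\cdot}(p,q)$ on the sampled buyers $\{i:W_i^B=p\}$. This is a simple random sample of size $I_p$ from $[I]$, with $I_p/I\ge\underline e$ and $I_0/I\ge\underline e$. By Jensen's inequality and the fourth-moment bound in Proposition~\ref{prop:twoway-vector-conservative}, $I^{-1}\sum_i\{\bar Y_{i\cdot}(p,q)-\bar Y(p,q)\}^4\le C_4$, hence $I^{-1}\sum_i\{a_i(p)-\bar a(p)\}^4=O(1)$. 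Lemma~\ref{lemma:sample-var-consistency} then shows that the oracle sample variance $\tilde s_{a,p}^2$ satisfies $\tilde s_{a,p}^2-S_{a,p}^2=o_{\mathbb P}(1)$.

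\emph{Step 2 (row-mean error).} Write $\widehat a_i(p)=a_i(p)+\epsilon_i$ with
\[
\epsilon_i=(e_p^B)^{-1}\sum_q\beta_{pq}\{\widehat{\bar Y}_{i\cdot}(p,q)-\bar Y_{i\cdot}(p,q)\}.
\]
Since $\widehat{\bar Y}_{i\cdot}(p,q)$ depends only on $W^S$, and $W^S$ is independent of $W^B$, the complete-randomization variance formula gives
\[
\E\bigl[\{\widehat{\bar Y}_{i\cdot}(p,q)-\bar Y_{i\cdot}(p,q)\}^2\bigr]\le C J^{-1}\cdot J^{-1}\sum_j\{Y_{ij}(p,q)-\bar Y_{i\cdot}(p,q)\}^2 .
\]
Averaging over $i$ and using the second moments implied by Condition~\ref{cond:vector-clt}(iii) yields $\E[I^{-1}\sum_i\epsilon_i^2]=O(J^{-1})$. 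The same bound holds after restricting to $i$ with $W_i^B=p$ and conditioning on $W^B$, because the $\epsilon_i$ do not depend on $W^B$. Hence $I_p^{-1}\sum_{i:W_i^B=p}\epsilon_i^2=O_{\mathbb P}(J^{-1})=o_{\mathbb P}(1)$. This step is where the independence of the two randomizations is essential.

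\emph{Step 3 (combine).} A sample variance is a squared seminorm of the centered vector, so by Minkowski's inequality
\[
|s_{a,p}-\tilde s_{a,p}|\le \Bigl\{(I_p-1)^{-1}\sum_{i:W_i^B=p}\epsilon_i^2\Bigr\}^{1/2}=o_{\mathbb P}(1).
\]
Since $\tilde s_{a,p}=O_{\mathbb P}(1)$, it follows that $s_{a,p}^2-\tilde s_{a,p}^2=o_{\mathbb P}(1)$, and together with Step 1 this gives the claim.

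The matrix version uses the same decomposition for each coordinate $\bar Y_{i\cdot}(p,q)/e_p^B$ of $A_i^{(p)}$. Each pair of cross-coordinates is handled by polarization, which gives convergence in operator norm. For the seller side, swap the roles: $\widehat{\bar Y}_{\cdot j}$ averages over treated or control buyers, so the error bound is $O(I^{-1})$. I expect the only delicate point to be Step 2. There one must make sure the moment bound conditions correctly on $W^B$, which selects the rows, while the error depends only on $W^S$; independence resolves this.
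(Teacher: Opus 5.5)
Your proposal is correct and follows the paper's proof essentially step for step. The paper also extends the row-mean estimator to every buyer (its $\check a_i(p)$, your $a_i(p)+\epsilon_i$), bounds the averaged squared error by $O(J^{-1})$ using the seller-side complete-randomization variance formula, applies Lemma~\ref{lemma:sample-var-consistency} to the oracle sample variance, and combines the two with the centered-norm triangle inequality, using coordinatewise arguments and polarization for the matrix statement. One minor point: independence is convenient but not essential in your Step 2, because $\epsilon_i$ depends only on $W^S$ and $\sum_{i:W_i^B=p}\epsilon_i^2\le\sum_{i=1}^I\epsilon_i^2$, so overlap and Markov's inequality already give $I_p^{-1}\sum_{i:W_i^B=p}\epsilon_i^2=O_{\mathbb P}(J^{-1})$.
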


\begin{proof}
We prove the buyer statement; the seller statement follows by interchanging the two sides.
For every buyer, including buyers not assigned to state $p$, define
\(\check Y_{i\cdot}(p,q) = J_q^{-1}\sum_{j:W_j^S=q}Y_{ij}(p,q)\), \(\check a_i(p) = (e_p^B)^{-1}\sum_{q=0}^1\beta_{pq}\check Y_{i\cdot}(p,q)\).
For a buyer with $W_i^B=p$, these quantities equal
$\widehat{\bar Y}_{i\cdot}(p,q)$ and $\widehat a_i(p)$, respectively. Let
$d_i(p)=\check a_i(p)-a_i(p)$. The complete-randomization variance formula on the seller
side and the overlap condition give
\(I^{-1}\sum_{i=1}^I\E_S\{d_i(p)^2\} \le C J^{-1}\sum_{q=0}^1I^{-1}\sum_{i=1}^I (J-1)^{-1}\sum_{j=1}^J \{Y_{ij}(p,q)-\bar Y_{i\cdot}(p,q)\}^2=O(J^{-1})\).
The last equality follows from the bounded second moments implied by the assumed fourth
moments. Because the buyer and seller randomizations are independent,
\(\E\{ I_p^{-1}\sum_{i:W_i^B=p}d_i(p)^2 \} =O(J^{-1})\),
and hence the average in braces is $o_{\mathbb P}(1)$.

Let $\widetilde s_{a,p}$ be the sample standard deviation of the fixed values
$\{a_i(p):W_i^B=p\}$. Jensen's inequality and the fourth-moment assumption give a uniform
centered fourth-moment bound for the finite population $\{a_i(p)\}_{i=1}^I$. Therefore,
Lemma~\ref{lemma:sample-var-consistency} yields
\(
\widetilde s_{a,p}^2-S_{a,p}^2=o_{\mathbb P}(1).
\)
The triangle inequality for the centered Euclidean norm gives
\begin{align*}
|s_{a,p}-\widetilde s_{a,p}|
\le
\left[
\frac{1}{I_p-1}\sum_{i:W_i^B=p}
\{d_i(p)-\bar d_p\}^2
\right]^{1/2}\le
\left[
\frac{1}{I_p-1}\sum_{i:W_i^B=p}d_i(p)^2
\right]^{1/2}
=o_{\mathbb P}(1),
\end{align*}
where $\bar d_p=I_p^{-1}\sum_{i:W_i^B=p}d_i(p)$. Both standard deviations are
$O_{\mathbb P}(1)$, so $s_{a,p}^2-S_{a,p}^2=o_{\mathbb P}(1)$.

For vector projections, apply the same argument to each coordinate. Consistency of each
cross-covariance follows by polarization, and fixed dimension converts entrywise convergence
to operator-norm convergence.
\end{proof}

\subsection{Exact saturated-regression cluster-score decomposition}\label{app:cluster-score-decomposition}
This subsection proves Lemma~\ref{lem:twoway-score-decomposition} in Section~\ref{app:proof-two-way}.

\begin{proof}
Recall
\(\widehat\Omega_{\mathrm{2w}} = (Z^\top Z)^{-1} \bigl( \Gamma_B+\Gamma_S-\Gamma_{BS} \bigr) (Z^\top Z)^{-1}\),
where
\(\Gamma_B=\sum_{i=1}^I Z_i^\top \widehat u_i\widehat u_i^\top Z_i\), \(\Gamma_S=\sum_{j=1}^J Z_{\cdot j}^\top \widehat u_{\cdot j}\widehat u_{\cdot j}^\top Z_{\cdot j}\), and \(\Gamma_{BS}=\sum_{i=1}^I\sum_{j=1}^J z_{ij}z_{ij}^\top \widehat u_{ij}^2\).
Then
\(\beta^\top\widehat\Omega_{\mathrm{2w}}\beta = \sum_{i=1}^I \widehat\psi_i^B(\beta)^2 + \sum_{j=1}^J \widehat\psi_j^S(\beta)^2 - \sum_{i=1}^I\sum_{j=1}^J \delta_{ij}(\beta)^2\),
where
\(\widehat\psi_i^B(\beta)=\beta^\top (Z^\top Z)^{-1}Z_i^\top \widehat u_i\), \(\widehat\psi_j^S(\beta)=\beta^\top (Z^\top Z)^{-1}Z_{\cdot j}^\top \widehat u_{\cdot j}\), and \(\delta_{ij}(\beta)=\beta^\top (Z^\top Z)^{-1}z_{ij}\widehat u_{ij}\).

We first compute the buyer-side contribution. Recall that the row regressor is ordered as
$z_{ij}=(G_{11,ij},G_{10,ij},G_{01,ij},G_{00,ij})^\top$,
and the contrast vector is ordered conformably as
$\beta=(\beta_{11},\beta_{10},\beta_{01},\beta_{00})^\top$.
Under this ordering,
\(Z^\top Z = \operatorname{diag} (I_1J_1,\ I_1J_0,\ I_0J_1,\ I_0J_0)\).

Fix a buyer \(i\) with \(W_i^B=p\). Since only the seller assignment varies within this buyer,
\(Z_i^\top\widehat u_i = \sum_{q=0}^1 \mathbf e_{pq} \sum_{j:W_j^S=q}\widehat u_{ij}\),
where \(\mathbf e_{pq}\) is the canonical basis vector associated with the cell \((p,q)\) in the
ordering
\((1,1),(1,0),(0,1),(0,0)\).
Hence
\(\widehat\psi_i^B(\beta) = \sum_{q=0}^1 \beta_{pq} (I_pJ_q)^{-1} \sum_{j:W_j^S=q}\widehat u_{ij}\).
By the saturated-regression identity in Section~\ref{subsec:plugin-reg-vector}, the fitted value in cell \((p,q)\) equals
\(\widehat Y(p,q)\). Therefore
$\widehat u_{ij}=Y_{ij}^{\mathrm{obs}}-\widehat Y(W_i^B,W_j^S)$.
It follows that, for \(W_i^B=p\),
\begin{align*}
\widehat\psi_i^B(\beta)
=
\sum_{q=0}^1
\frac{\beta_{pq}}{I_pJ_q}
\sum_{j:W_j^S=q}
\left\{
Y_{ij}^{\mathrm{obs}}-\widehat Y(p,q)
\right\} =
\frac{1}{I_p}
\sum_{q=0}^1
\beta_{pq}
\left\{
\widehat{\bar Y}_{i\cdot}(p,q)-\widehat Y(p,q)
\right\},
\end{align*}
where
$\widehat{\bar Y}_{i\cdot}(p,q)
=J_q^{-1}\sum_{j:W_j^S=q}Y_{ij}^{\mathrm{obs}}$.
Define
$\widehat a_i(p)
=(e_p^B)^{-1}\sum_{q=0}^1\beta_{pq}\widehat{\bar Y}_{i\cdot}(p,q)$
and
$\widehat{\bar a}(p)
=I_p^{-1}\sum_{i:W_i^B=p}\widehat a_i(p)$.
Since
$I_p^{-1}\sum_{i:W_i^B=p}\widehat{\bar Y}_{i\cdot}(p,q)
=\widehat Y(p,q)$,
we have
$\widehat{\bar a}(p)
=(e_p^B)^{-1}\sum_{q=0}^1\beta_{pq}\widehat Y(p,q)$.
Using \(e_p^B=I_p/I\), we obtain
\(\widehat\psi_i^B(\beta) = I^{-1} \{ \widehat a_i(p)-\widehat{\bar a}(p) \}\).
Therefore,
\begin{align*}
\sum_{i=1}^I \widehat\psi_i^B(\beta)^2
=
\sum_{p=0}^1
\sum_{i:W_i^B=p}
\frac{1}{I^2}
\left\{
\widehat a_i(p)-\widehat{\bar a}(p)
\right\}^2 =
\frac{I_1-1}{I^2}s_{a,1}^2
+
\frac{I_0-1}{I^2}s_{a,0}^2
=
\widehat V_{B,\mathrm{2w}}(\beta),
\end{align*}
where
$s_{a,p}^2
=(I_p-1)^{-1}
\sum_{i:W_i^B=p}
\{\widehat a_i(p)-\widehat{\bar a}(p)\}^2$.

The seller-side contribution is analogous. Fix a seller \(j\) with \(W_j^S=q\). Then
\(\widehat\psi_j^S(\beta) = \sum_{p=0}^1 \beta_{pq} (I_pJ_q)^{-1} \sum_{i:W_i^B=p} \widehat u_{ij}\).
Using the residual identity again,
\(\widehat\psi_j^S(\beta) = J_q^{-1} \sum_{p=0}^1 \beta_{pq} \{ \widehat{\bar Y}_{\cdot j}(p,q)-\widehat Y(p,q) \}\),
where
$\widehat{\bar Y}_{\cdot j}(p,q)
=I_p^{-1}\sum_{i:W_i^B=p}Y_{ij}^{\mathrm{obs}}$.
Define
$\widehat b_j(q)
=(e_q^S)^{-1}\sum_{p=0}^1\beta_{pq}\widehat{\bar Y}_{\cdot j}(p,q)$
and
$\widehat{\bar b}(q)
=J_q^{-1}\sum_{j:W_j^S=q}\widehat b_j(q)$.
Since
$J_q^{-1}\sum_{j:W_j^S=q}\widehat{\bar Y}_{\cdot j}(p,q)
=\widehat Y(p,q)$,
we have
$\widehat{\bar b}(q)
=(e_q^S)^{-1}\sum_{p=0}^1\beta_{pq}\widehat Y(p,q)$.
Using \(e_q^S=J_q/J\), we obtain
\(\widehat\psi_j^S(\beta) = J^{-1} \{ \widehat b_j(q)-\widehat{\bar b}(q) \}\).
Therefore,
\begin{align*}
\sum_{j=1}^J \widehat\psi_j^S(\beta)^2
=
\sum_{q=0}^1
\sum_{j:W_j^S=q}
\frac{1}{J^2}
\left\{
\widehat b_j(q)-\widehat{\bar b}(q)
\right\}^2 =
\frac{J_1-1}{J^2}s_{b,1}^2
+
\frac{J_0-1}{J^2}s_{b,0}^2
=
\widehat V_{S,\mathrm{2w}}(\beta),
\end{align*}
where
$s_{b,q}^2
=(J_q-1)^{-1}
\sum_{j:W_j^S=q}
\{\widehat b_j(q)-\widehat{\bar b}(q)\}^2$.

Finally, consider the intersection correction. If
\((W_i^B,W_j^S)=(p,q)\), then \(z_{ij}=\mathbf e_{pq}\) and hence
\(\delta_{ij}(\beta) = \beta_{pq} (I_pJ_q)^{-1}\widehat u_{ij} = \beta_{pq} (I_pJ_q)^{-1} \{ Y_{ij}^{\mathrm{obs}}-\widehat Y(p,q) \}\).
Therefore,
\begin{align*}
\sum_{i=1}^I\sum_{j=1}^J \delta_{ij}(\beta)^2
=
\sum_{p=0}^1\sum_{q=0}^1
\frac{\beta_{pq}^2}{I_p^2J_q^2}
\sum_{i:W_i^B=p}\sum_{j:W_j^S=q}
\left\{
Y_{ij}^{\mathrm{obs}}-\widehat Y(p,q)
\right\}^2 =
\widehat V_{BS,\mathrm{2w}}(\beta).
\end{align*}
Combining the buyer-side, seller-side, and intersection terms gives
\(\beta^\top\widehat\Omega_{\mathrm{2w}}\beta = \widehat V_{B,\mathrm{2w}}(\beta) + \widehat V_{S,\mathrm{2w}}(\beta) - \widehat V_{BS,\mathrm{2w}}(\beta)\),
as claimed. Applying the scalar identities to the coordinate vectors and their pairwise sums
gives the corresponding matrix identity by polarization.
\end{proof}
\subsection{Cluster-score implementation}
\label{app:cluster-score-implementation}
\label{subsubsec:cluster-score-aggregation}

The optimized variance estimator in \eqref{eq:Vopt-sample} can be implemented by a
minor adjustment to the usual two-way clustered variance estimator. The point estimator and
the saturated regression remain unchanged. The only change is how we aggregate the buyer-
and seller-cluster scores in the variance estimator.

For $i\in[I]$, $j\in [J]$, and a fixed scalar contrast \(\beta^\top\widehat{\mathbf Y}\), define the buyer- and seller-cluster
scores from the saturated regression \eqref{eq:reg-Gij-sec31} by
$
\widehat\psi_i^B(\beta)
=
\beta^\top (Z^\top Z)^{-1}Z_i^\top \widehat u_i,
$ and $
\widehat\psi_j^S(\beta)
=
\beta^\top (Z^\top Z)^{-1}Z_{\cdot j}^\top \widehat u_{\cdot j}.$ 
By Lemma~\ref{lem:twoway-score-decomposition} and \eqref{eq:intersection-negligible}, we have
\begin{align}\label{eq:beta2w_psiIJ}
    \beta^\top\widehat \Omega_{\mathrm{2w}}
\beta =
\sum_{i=1}^I
\{\widehat\psi_i^B(\beta)\}^2
+
\sum_{j=1}^J
\{\widehat\psi_j^S(\beta)\}^2+o_{\mathbb P}(I^{-1}+J^{-1}).
\end{align}
We define the first two terms on the right-hand side of \eqref{eq:beta2w_psiIJ} as $\widehat V_{\mathrm{add}}(\beta)$. 

Now separate these cluster-score contributions by exposure state. For $p,q\in\{0,1\},$ define
$
Q_{B,p}(\beta)
=
\sum_{i:W_i^B=p}
\{\widehat\psi_i^B(\beta)\}^2,
$ and $
Q_{S,q}(\beta)
=
\sum_{j:W_j^S=q}
\{\widehat\psi_j^S(\beta)\}^2,
$
then
\begin{align}\label{eq:add_Q}
\widehat V_{\mathrm{add}}(\beta)
=
Q_{B,1}(\beta)+Q_{B,0}(\beta)
+
Q_{S,1}(\beta)+Q_{S,0}(\beta)=\beta^\top\widehat\Omega_{\mathrm{2w}}\beta
+
o_{\mathbb P}(\rho_{I,J}).
\end{align}
Our optimized variance estimator is obtained by replacing this unit-weight aggregation with a
state-specific norm aggregation:
\begin{align}\label{eq:Vscore-sec33}
\widehat V_{\mathrm{score}}(\beta)
=
\left\{
\sqrt{e_0^BQ_{B,1}(\beta)}
+
\sqrt{e_1^BQ_{B,0}(\beta)}
\right\}^2
+
\left\{
\sqrt{e_0^SQ_{S,1}(\beta)}
+
\sqrt{e_1^SQ_{S,0}(\beta)}
\right\}^2 .
\end{align}

\begin{prop}[Cluster-score aggregation representation of \(\widehat V_{\mathrm{opt},\beta}\)]
\label{prop:cluster-score-aggregation-vopt}
Suppose the conditions of Proposition \ref{prop:twoway-vector-conservative} hold. For any fixed scalar contrast \(\beta^\top\widehat{\mathbf Y}\), the score-aggregated variance satisfies
\[
\widehat V_{\mathrm{score}}(\beta)
=
\widehat V_{\mathrm{opt},\beta}
+
O_{\mathbb P}(I^{-2}+J^{-2}).
\]
\end{prop}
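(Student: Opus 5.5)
The plan is to compute the difference term by term from the exact identities in Lemma~\ref{lem:twoway-score-decomposition}. No asymptotic approximation is needed except the stochastic boundedness of the sample standard deviations.

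\textbf{Step 1: exact expression for the score aggregation.} Lemma~\ref{lem:twoway-score-decomposition} gives exactly $Q_{B,p}(\beta)=I^{-2}(I_p-1)s_{a,p}^2$ and $Q_{S,q}(\beta)=J^{-2}(J_q-1)s_{b,q}^2$. Since $I_1=Ie_1^B$, we have
\[
e_0^BQ_{B,1}(\beta)=\frac{e_1^Be_0^B}{I}\Bigl(1-\frac{1}{I_1}\Bigr)s_{a,1}^2,
\qquad
e_1^BQ_{B,0}(\beta)=\frac{e_1^Be_0^B}{I}\Bigl(1-\frac{1}{I_0}\Bigr)s_{a,0}^2 .
\]
Hence the buyer part of \eqref{eq:Vscore-sec33} equals exactly
\[
\frac{e_1^Be_0^B}{I}\bigl(\kappa_1 s_{a,1}+\kappa_0 s_{a,0}\bigr)^2,
\qquad
\kappa_p=(1-I_p^{-1})^{1/2}\in[0,1].
\]
The seller part is the same with $J_q$ and $s_{b,q}$ in place of $I_p$ and $s_{a,p}$. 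This matches the corresponding term of \eqref{eq:Vopt-sample} except for the factors $\kappa_p$.

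\textbf{Step 2: bound the buyer-side discrepancy.} Write $x=s_{a,1}+s_{a,0}$ and $y=\kappa_1s_{a,1}+\kappa_0s_{a,0}$. Then $0\le x-y\le (1-\kappa_1)s_{a,1}+(1-\kappa_0)s_{a,0}$. Using $1-\sqrt{1-u}\le u$ for $u\in[0,1]$ gives $1-\kappa_p\le I_p^{-1}\le (\underline e I)^{-1}$ by Condition~\ref{cond:vector-clt}(i). It follows that
\[
0\le x^2-y^2=(x-y)(x+y)\le 2x(x-y)\le \frac{2}{\underline e I}\,x^2 .
\]
So the buyer discrepancy is at most $(e_1^Be_0^B/I)\cdot 2x^2/(\underline e I)=O(I^{-2})\,x^2$, and the seller side similarly gives $O(J^{-2})(s_{b,1}+s_{b,0})^2$.

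\textbf{Step 3: stochastic boundedness (main point needing care).} It remains to show $s_{a,p}^2=O_{\mathbb P}(1)$ and $s_{b,q}^2=O_{\mathbb P}(1)$. By Lemma~\ref{lem:estimated-projection-variances}, $s_{a,p}^2-S_{a,p}^2=o_{\mathbb P}(1)$. The population variances satisfy $S_{a,p}^2=O(1)$: $a_i(p)$ is a fixed linear combination of row means scaled by $1/e_p^B\le 1/\underline e$, and Jensen's inequality together with the moment bound in Condition~\ref{cond:vector-clt}(iii) (or the fourth-moment condition of Proposition~\ref{prop:twoway-vector-conservative}) controls their second moments. The seller side is identical.

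Combining Steps 1--3 gives $\widehat V_{\mathrm{score}}(\beta)-\widehat V_{\mathrm{opt},\beta}=O_{\mathbb P}(I^{-2}+J^{-2})$, with the difference in fact nonpositive. The only nontrivial input is Step 3, and it is already supplied by earlier lemmas.
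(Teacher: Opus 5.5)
Your proposal is correct and follows essentially the same route as the paper. Both arguments start from the exact identities $Q_{B,p}(\beta)=I^{-2}(I_p-1)s_{a,p}^2$ and $Q_{S,q}(\beta)=J^{-2}(J_q-1)s_{b,q}^2$ of Lemma~\ref{lem:twoway-score-decomposition}, rewrite the buyer part as $(e_1^Be_0^B/I)\{\sqrt{1-I_1^{-1}}\,s_{a,1}+\sqrt{1-I_0^{-1}}\,s_{a,0}\}^2$ (with the seller analogue), and conclude using $s_{a,p},s_{b,q}=O_{\mathbb P}(1)$ from Lemma~\ref{lem:estimated-projection-variances}. Your Step~2 spells out the paper's $O_{\mathbb P}(I^{-2})$ remainder via $1-\sqrt{1-u}\le u$, and your added observation that $\widehat V_{\mathrm{score}}(\beta)\le\widehat V_{\mathrm{opt},\beta}$ is also correct.
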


Proposition~\ref{prop:cluster-score-aggregation-vopt} gives a direct
regression-style implementation of the optimized variance estimator. After fitting the saturated
regression, the point estimate, residuals, and cluster scores are unchanged. We only keep the
buyer and seller score contributions separately by exposure state and replace the unit-weight
sum in \eqref{eq:add_Q} with the aggregation rule in \eqref{eq:Vscore-sec33}. Thus,
\(\widehat V_{\mathrm{score}}(\beta)\) is an adjusted two-way clustered variance estimator:
it uses the usual cluster scores, but changes the final aggregation rule.

We can rewrite \(\widehat V_{\mathrm{score}}(\beta)\) as
\[
\widehat V_{\mathrm{score}}(\beta)
=
e_1^Be_0^B
\left\{
\sqrt{\frac{Q_{B,1}(\beta)}{e_1^B}}
+
\sqrt{\frac{Q_{B,0}(\beta)}{e_0^B}}
\right\}^2
+
e_1^Se_0^S
\left\{
\sqrt{\frac{Q_{S,1}(\beta)}{e_1^S}}
+
\sqrt{\frac{Q_{S,0}(\beta)}{e_0^S}}
\right\}^2 .
\]

Thus, the optimized estimator uses the usual buyer- and seller-cluster
scores, but rescales and combines the two assignment states within each
side before squaring.

\subsubsection*{Proof of Proposition~\ref{prop:cluster-score-aggregation-vopt}}

\begin{proof}
Lemma~\ref{lem:twoway-score-decomposition} gives
\(Q_{B,p}(\beta)=I^{-2}(I_p-1)s_{a,p}^2\), \(Q_{S,q}(\beta)=J^{-2}(J_q-1)s_{b,q}^2\).
Lemma~\ref{lem:estimated-projection-variances} implies
$s_{a,p}=O_{\mathbb P}(1)$ and $s_{b,q}=O_{\mathbb P}(1)$. Since
$I_p=e_p^BI$ and $J_q=e_q^SJ$,
\begin{align*}
&\left\{
\sqrt{e_0^BQ_{B,1}(\beta)}
+
\sqrt{e_1^BQ_{B,0}(\beta)}
\right\}^2\\
&\qquad=
\frac{e_1^Be_0^B}{I}
\left\{
\sqrt{1-I_1^{-1}}\,s_{a,1}
+
\sqrt{1-I_0^{-1}}\,s_{a,0}
\right\}^2\\
&\qquad=
\frac{e_1^Be_0^B}{I}(s_{a,1}+s_{a,0})^2
+
O_{\mathbb P}(I^{-2}).
\end{align*}
The analogous seller contribution equals
\(e_1^Se_0^S J^{-1}(s_{b,1}+s_{b,0})^2 + O_{\mathbb P}(J^{-2})\).
Combining these two expressions and using \eqref{eq:Vopt-sample} proves
\(
\widehat V_{\mathrm{score}}(\beta)
=
\widehat V_{\mathrm{opt},\beta}
+
O_{\mathbb P}(I^{-2}+J^{-2}).
\)
\end{proof}

\section{Regression adjustment}
\label{app:regression-adjustment-proofs}

\subsection{Asymptotic equivalence of covariance estimators}
\label{app:oracle-sandwich-equivalence}

For a fixed-dimensional regressor $r_{ij}$ and a scalar array $v_{ij}$, define
\(\Gamma_{R,B}(v) = \sum_{i=1}^I (\sum_{j=1}^J r_{ij}v_{ij}) (\sum_{j=1}^J r_{ij}v_{ij})^\top\), \(\Gamma_{R,S}(v) = \sum_{j=1}^J (\sum_{i=1}^I r_{ij}v_{ij}) (\sum_{i=1}^I r_{ij}v_{ij})^\top\), \(\Gamma_{R,BS}(v) = \sum_{i=1}^I\sum_{j=1}^J r_{ij}r_{ij}^\top v_{ij}^2\),
and let
$\Gamma_R(v)=\Gamma_{R,B}(v)+\Gamma_{R,S}(v)-\Gamma_{R,BS}(v)$.
Define $\Gamma_Z(v)$ analogously with $z_{ij}$ in place of $r_{ij}$.

The following lemma shows that the covariance matrix from the augmented
regression with estimated residuals is asymptotically equivalent to its
oracle counterpart based on the exposure indicators alone.
\begin{lemma}
\label{lem:oracle-sandwich-equivalence}
Let $R=(Z,C)$ be an
$(IJ)\times(4+k)$ design matrix, where $k$ is fixed and the row of $R$
corresponding to pair $(i,j)$ is $r_{ij}^\top=(z_{ij}^\top,c_{ij}^\top)$.
Put $H_\mu=(\mathrm I_4,0_{4\times k})$ and define
\(D=(IJ)^{-1}Z^\top Z\), \(M=(IJ)^{-1}Z^\top C\), \(Q=(IJ)^{-1}C^\top C\).
Suppose
\(
\p\{\lambda_{\min}(D)\wedge\lambda_{\min}(Q)\ge c\}\to1
\)
for some $c>0$,
\(
\|D\|_{\mathrm{op}}+\|Q\|_{\mathrm{op}}=O_{\mathbb P}(1),
\) and \(
\|M\|_{\mathrm{op}}=O_{\mathbb P}(\rho_{I,J}^{1/2}).
\)
Let $u_{ij}^\star$ and $\delta_{ij}$ be scalar arrays such that
$\widehat u_{ij}=u_{ij}^\star-\delta_{ij}$ and
\((IJ)^{-1}\sum_{i,j}\|r_{ij}\|^4=O_{\mathbb P}(1)\), \((IJ)^{-1}\sum_{i,j}|u_{ij}^\star|^4=O_{\mathbb P}(1)\), \((IJ)^{-1}\sum_{i,j}|\delta_{ij}|^4=o_{\mathbb P}(1)\).
Then
\begin{align*}
H_\mu(R^\top R)^{-1}\Gamma_R(\widehat u)
(R^\top R)^{-1}H_\mu^\top=
(Z^\top Z)^{-1}\Gamma_Z(u^\star)(Z^\top Z)^{-1}
+o_{\mathbb P}(\rho_{I,J})
\end{align*}
in operator norm.
\end{lemma}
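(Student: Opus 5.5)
The argument has two stages. First, the bread of the augmented regression is replaced by a block-diagonal surrogate. Second, the meat built from $\widehat u$ is replaced by the meat built from $u^\star$.

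\emph{Bread.} Write $R^\top R=IJ\begin{pmatrix}D&M\\M^\top&Q\end{pmatrix}$ and apply block inversion. On the event $\lambda_{\min}(D)\wedge\lambda_{\min}(Q)\ge c$, the Schur complement $D-MQ^{-1}M^\top$ is invertible with bounded inverse. This gives
\[
H_\mu(R^\top R)^{-1}=(IJ)^{-1}\bigl(D^{-1}+O_{\mathbb P}(\rho_{I,J}),\; -D^{-1}MQ^{-1}+O_{\mathbb P}(\rho_{I,J})\bigr),
\]
so the $Z$-block differs from $(IJ)^{-1}D^{-1}$ by $O_{\mathbb P}(\rho_{I,J})$ and the $C$-block is $O_{\mathbb P}(\rho_{I,J}^{1/2})$. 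Set $K=(IJ)^{-1}\Gamma_R(\widehat u)$. The target is then
\[
(IJ)^{-1}\,\widetilde H K \widetilde H^\top,\qquad \widetilde H=(D^{-1}+E_1,\;E_2),
\]
with $\|E_1\|=O_{\mathbb P}(\rho_{I,J})$ and $\|E_2\|=O_{\mathbb P}(\rho_{I,J}^{1/2})$. The claim is that this equals $(IJ)^{-1}D^{-1}K_{ZZ}D^{-1}+o_{\mathbb P}(\rho_{I,J})$.

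\emph{Scale of the meat.} I need $\|K\|=O_{\mathbb P}(IJ\,\rho_{I,J})$, i.e. $\Gamma_R/(IJ)^2=O_{\mathbb P}(\rho_{I,J})$. Each block of $\Gamma_{R,B}/(IJ)^2$ has the form $I^{-1}\cdot I^{-1}\sum_i (J^{-1}\sum_j r_{ij}\widehat u_{ij})(\cdots)^\top$. By Cauchy--Schwarz this is bounded by $I^{-1}\{(IJ)^{-1}\sum\|r_{ij}\|^4\}^{1/2}\{(IJ)^{-1}\sum\widehat u_{ij}^4\}^{1/2}=O_{\mathbb P}(I^{-1})$, using the fourth-moment hypotheses and $\widehat u=u^\star-\delta$. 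The seller term is handled the same way, and $\Gamma_{R,BS}/(IJ)^2=O_{\mathbb P}((IJ)^{-1})$.

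This crude bound is enough for the $E_1$ cross terms, which are $O(\rho_{I,J})\cdot O(\rho_{I,J})=o(\rho_{I,J})$. For the $E_2$ terms it only gives $O(\rho_{I,J}^{1/2})\cdot O(\rho_{I,J})$ after the $(IJ)^{-1}$ scaling, which is still $o(\rho_{I,J})$. Both the $E_2E_2$ and $D^{-1}E_2$ products are therefore of order $\rho_{I,J}^{3/2}$ or $\rho_{I,J}^2$, both $o(\rho_{I,J})$. The bread perturbation is harmless because $K$ is itself $O(\rho_{I,J})$ after normalization, and every perturbation contributes an extra factor of at least $\rho_{I,J}^{1/2}$.

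\emph{Meat with $\widehat u$ versus $u^\star$.} It remains to show
\[
(IJ)^{-2}\{\Gamma_Z(\widehat u)-\Gamma_Z(u^\star)\}=o_{\mathbb P}(\rho_{I,J}).
\]
Expand $\Gamma_{Z,B}(u^\star-\delta)$ into cross terms and a $\delta\delta$ term. Applying the same Cauchy--Schwarz bound with $(IJ)^{-1}\sum\delta_{ij}^4=o_{\mathbb P}(1)$ shows the cross terms are $I^{-1}\cdot O(1)\cdot o(1)^{1/4}$ and the $\delta\delta$ term is $I^{-1}o(1)$. The seller and intersection parts are handled identically.

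I expect this last step to be the main obstacle. The Cauchy--Schwarz chain must actually deliver the $I^{-1}$ factor: the row average $J^{-1}\sum_j z_{ij}\delta_{ij}$ must be bounded in mean square by a pair-level fourth moment through Jensen without losing a factor of $J$. It does, because $\sum_i\|\sum_j z\delta\|^2\le J\sum_{i,j}\|z\|^2\delta^2$. I will check the normalization carefully here.

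Combining the three pieces, and noting that $(IJ)^{-1}D^{-1}=(Z^\top Z)^{-1}$, gives the stated operator-norm equivalence.
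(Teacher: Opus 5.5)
Your proposal is correct and uses the same ingredients as the paper's proof. These are the partitioned-inverse bounds on the dummy block of $(R^\top R)^{-1}$, the Cauchy--Schwarz scale bound $\sum_i\|\sum_j r_{ij}v_{ij}\|^2\le J\sum_{i,j}\|r_{ij}\|^2v_{ij}^2$ with its seller and intersection analogues, and bilinear bounds on the $u^\star$--$\delta$ cross terms. The only difference is the order: the paper first replaces $\widehat u$ by $u^\star$ in the full $(4+k)$-dimensional meat and then collapses the bread, whereas you collapse the bread first and swap residuals only in the $Z$-block, which changes nothing of substance. One small point needs care. The Schur complement's inverse is bounded because $\lambda_{\min}(D-MQ^{-1}M^\top)\ge c-\|M\|_{\mathrm{op}}^2/c$ and $\|M\|_{\mathrm{op}}=o_{\mathbb P}(1)$; the event $\lambda_{\min}(D)\wedge\lambda_{\min}(Q)\ge c$ alone does not give this.
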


\begin{proof}
Write
\(
H_\mu(R^\top R)^{-1}=(A,E),
\)
where $A$ and $E$ are the blocks corresponding to $Z$ and $C$.
The partitioned inverse formula and the assumptions on
$(D,M,Q)$ give
\begin{equation}\label{eq:generic-bread-bounds}
\begin{aligned}
\|A-(Z^\top Z)^{-1}\|_{\mathrm{op}}&=O_{\mathbb P}\left(\frac{\rho_{I,J}}{IJ}\right),
\\
\|E\|_{\mathrm{op}}&=O_{\mathbb P}\left(\frac{\rho_{I,J}^{1/2}}{IJ}\right),
\\
\|(R^\top R)^{-1}\|_{\mathrm{op}}&=O_{\mathbb P}((IJ)^{-1}).
\end{aligned}
\end{equation}
Indeed, the Schur complement satisfies
$D-MQ^{-1}M^\top=D+O_{\mathbb P}(\rho_{I,J})$ and remains nonsingular
with probability approaching one.

We next bound the three clustering components separately. For any scalar array
$v=(v_{ij})$, Cauchy--Schwarz gives
\(\sum_{i=1}^I \|\sum_{j=1}^Jr_{ij}v_{ij}\|^2 \le J\sum_{i,j}\|r_{ij}\|^2v_{ij}^2\), \(\sum_{j=1}^J \|\sum_{i=1}^Ir_{ij}v_{ij}\|^2 \le I\sum_{i,j}\|r_{ij}\|^2v_{ij}^2\).
For $v=u^\star$,
\(\sum_{i,j}\|r_{ij}\|^2|u_{ij}^\star|^2 \le (\sum_{i,j}\|r_{ij}\|^4)^{1/2} (\sum_{i,j}|u_{ij}^\star|^4)^{1/2} =O_{\mathbb P}(IJ)\).
Thus the buyer, seller, and intersection components based on $u^\star$ are,
respectively, $O_{\mathbb P}(IJ^2)$, $O_{\mathbb P}(I^2J)$, and $O_{\mathbb P}(IJ)$; their sum is
$O_{\mathbb P}(I^2J^2\rho_{I,J})$. Replacing $u^\star$ by $\delta$ in the same argument gives
$o_{\mathbb P}(IJ^2)$, $o_{\mathbb P}(I^2J)$, and $o_{\mathbb P}(IJ)$.

For two arrays $v,w$, let $\Gamma_{R,B}(v,w)$ denote the buyer component with
outer product
$(\sum_jr_{ij}v_{ij})(\sum_jr_{ij}w_{ij})^\top$, and define the seller and
intersection analogues. For each clustering component,
\(\|\Gamma_{R,B}(v,w)\|_{\mathrm{op}} \le \{\sum_i\|\sum_jr_{ij}v_{ij}\|^2\}^{1/2} \{\sum_i\|\sum_jr_{ij}w_{ij}\|^2\}^{1/2}\),
with analogous bounds for sellers and intersections. Applying these inequalities
to $(v,w)=(u^\star,\delta)$ and $(\delta,u^\star)$ shows that the buyer,
seller, and intersection cross components are respectively
$o_{\mathbb P}(IJ^2)$, $o_{\mathbb P}(I^2J)$, and $o_{\mathbb P}(IJ)$. Since
\(\Gamma_R(u^\star-\delta)-\Gamma_R(u^\star) = \Gamma_R(\delta)-\Gamma_R(u^\star,\delta) -\Gamma_R(\delta,u^\star)\),
where every term follows the same buyer-plus-seller-minus-intersection rule,
we obtain
\begin{equation}\label{eq:generic-residual-replacement}
\|\Gamma_R(\widehat u)-\Gamma_R(u^\star)\|_{\mathrm{op}}
=o_{\mathbb P}(I^2J^2\rho_{I,J}).
\end{equation}
Together with \eqref{eq:generic-bread-bounds}, this implies that replacing
$\widehat u$ by $u^\star$ in the full sandwich changes its dummy block by
$o_{\mathbb P}(\rho_{I,J})$.

Finally, partition the oracle middle matrix as
\[
\Gamma_R(u^\star)
=
\begin{pmatrix}
\Gamma_Z(u^\star)&\Gamma_{ZC}(u^\star)\\
\Gamma_{CZ}(u^\star)&\Gamma_C(u^\star)
\end{pmatrix}.
\]
The preceding componentwise bounds imply that every block is
$O_{\mathbb P}(I^2J^2\rho_{I,J})$. Expanding the dummy block gives
\begin{align*}
(A,E)\Gamma_R(u^\star)(A,E)^\top
=
A\Gamma_Z(u^\star)A^\top
+A\Gamma_{ZC}(u^\star)E^\top+E\Gamma_{CZ}(u^\star)A^\top
+E\Gamma_C(u^\star)E^\top.
\end{align*}
By \eqref{eq:generic-bread-bounds}, the last three terms are
$O_{\mathbb P}(\rho_{I,J}^{3/2})+O_{\mathbb P}(\rho_{I,J}^2)=o_{\mathbb P}(\rho_{I,J})$. The first term differs from
$(Z^\top Z)^{-1}\Gamma_Z(u^\star)(Z^\top Z)^{-1}$ by
$O_{\mathbb P}(\rho_{I,J}^2)=o_{\mathbb P}(\rho_{I,J})$. Combining this with
\eqref{eq:generic-residual-replacement} proves the result.
\end{proof}

\subsection{Two-way cluster-robust covariance from Fisher's regression}
\label{app:fisher-sandwich-definition}

To define the two-way clustered variance of Fisher's regression, we introduce the OLS residual from \eqref{eq:reg-fisher-adjustment} 
\(\widehat u_{ij}^{\mathrm F} = Y_{ij}^{\mathrm{obs}} - \sum_{p=0}^1\sum_{q=0}^1 G_{pq,ij}\widehat\mu_{\mathrm F,pq} - (X_{ij}-\bar X)^\top\widehat\theta_{\mathrm F}\).
Let \(r_{ij}^{\mathrm F}=(G_{11,ij},G_{10,ij},G_{01,ij},G_{00,ij},\dot X_{ij}^\top)^\top\) and
\(R_{\mathrm F}\) be the \(IJ\times(4+d)\) matrix with row
\((r_{ij}^{\mathrm F})^\top\).
Define
\(\Gamma_B^{\mathrm F} = \sum_{i=1}^I (R_{\mathrm F,i})^\top \widehat u_i^{\mathrm F}(\widehat u_i^{\mathrm F})^\top R_{\mathrm F,i}\), \(\Gamma_S^{\mathrm F} = \sum_{j=1}^J (R_{\mathrm F,\cdot j})^\top \widehat u_{\cdot j}^{\mathrm F}(\widehat u_{\cdot j}^{\mathrm F})^\top R_{\mathrm F,\cdot j}\), \(\Gamma_{BS}^{\mathrm F} = \sum_{i=1}^I\sum_{j=1}^J r_{ij}^{\mathrm F}(r_{ij}^{\mathrm F})^\top (\widehat u_{ij}^{\mathrm F})^2\).
The two-way clustered covariance matrix for all coefficients in Fisher's regression is
\[
\widehat V_{\eta,\mathrm{2w}}^{\mathrm F}
=
(R_{\mathrm F}^\top R_{\mathrm F})^{-1}
\left(
\Gamma_B^{\mathrm F}
+
\Gamma_S^{\mathrm F}
-
\Gamma_{BS}^{\mathrm F}
\right)
(R_{\mathrm F}^\top R_{\mathrm F})^{-1}.
\]
Let
\(
H_\mu=(\mathrm I_4,0_{4\times d}).
\)
The two-way clustered covariance matrix for the Fisher-adjusted exposure-mean estimator
$\widehat{\boldsymbol\mu}_{\mathrm F}$ is the dummy-coefficient block
\(
\widehat\Omega_{\mathrm{2w}}^{\mathrm F}
=
H_\mu
\widehat V_{\eta,\mathrm{2w}}^{\mathrm F}
H_\mu^\top .
\)

\subsection{Proof of Theorem~\ref{thm:fisher-adjustment}}

\begin{proof}
We prove the result in three steps.

\paragraph*{Step 1: consistency of the Fisher slope.}
By the Frisch--Waugh--Lovell representation of \eqref{eq:reg-fisher-adjustment}, we have 
\(\widehat\theta_{\mathrm F}
=
\widehat Q_{\mathrm F}^{-1}\widehat q_{\mathrm F},
\) with probability approaching one, where
\begin{align}
\widehat Q_{\mathrm F}
&=
\frac{1}{IJ}
\sum_{p=0}^1\sum_{q=0}^1
\sum_{i=1}^I\sum_{j=1}^J
G_{pq,ij}
\{X_{ij}-\widehat X(p,q)\}
\{X_{ij}-\widehat X(p,q)\}^\top,
\label{eq:Q-fisher-hat}
\\
\widehat q_{\mathrm F}
&=
\frac{1}{IJ}
\sum_{p=0}^1\sum_{q=0}^1
\sum_{i=1}^I\sum_{j=1}^J
G_{pq,ij}
\{X_{ij}-\widehat X(p,q)\}
\{Y_{ij}^{\mathrm{obs}}-\widehat Y(p,q)\}.
\label{eq:q-fisher-hat}
\end{align}
We first show
\(\widehat Q_{\mathrm F}=Q_{X}^{\circ}+o_{\mathbb P}(1)\), \(\widehat q_{\mathrm F}=q_{\mathrm F}^{\circ}+o_{\mathbb P}(1)\).
Apply Lemma~\ref{lem:cell-average-lln} to the fixed arrays
$X_{ij}$, $\dot X_{ij}\dot X_{ij}^\top$, $Y_{ij}(p,q)$, and
$\dot X_{ij}\dot Y_{ij}(p,q)$. Condition~\ref{cond:covariate-regularity} gives the required
second-moment bounds, so for each fixed $(p,q)$,
\(\widehat X(p,q)-\bar X=O_{\mathbb P}(\rho_{I,J}^{1/2})\), \(\widehat Y(p,q)-\bar Y(p,q)=O_{\mathbb P}(\rho_{I,J}^{1/2})\),
and the cell averages of $\dot X_{ij}\dot X_{ij}^\top$ and
$\dot X_{ij}\dot Y_{ij}(p,q)$ converge to their corresponding finite-population averages.
Using the identity
\begin{align*}
&\frac{1}{IJ}\sum_{i,j}G_{pq,ij}
\{X_{ij}-\widehat X(p,q)\}
\{X_{ij}-\widehat X(p,q)\}^\top\\
&\quad=
\frac{1}{IJ}\sum_{i,j}G_{pq,ij}\dot X_{ij}\dot X_{ij}^\top
-
\frac{I_pJ_q}{IJ}
\{\widehat X(p,q)-\bar X\}
\{\widehat X(p,q)-\bar X\}^\top,
\end{align*}
we obtain $\widehat Q_{\mathrm F}=Q_{X}^{\circ}+o_{\mathbb P}(1)$ after summing over the four exposure cells. Similarly,
\begin{align*}   
&\frac{1}{IJ}\sum_{i,j}G_{pq,ij}
\{X_{ij}-\widehat X(p,q)\}
\{Y_{ij}^{\mathrm{obs}}-\widehat Y(p,q)\}\\
&=
\frac{1}{IJ}\sum_{i,j}G_{pq,ij}\dot X_{ij}\dot Y_{ij}(p,q)
-
\frac{I_pJ_q}{IJ}
\{\widehat X(p,q)-\bar X\}
\{\widehat Y(p,q)-\bar Y(p,q)\},
\end{align*}
and therefore $\widehat q_{\mathrm F}=q_{\mathrm F}^{\circ}+o_{\mathbb P}(1)$.
Since $\lambda_{\min}(Q_{X}^{\circ})\ge c_Q>0$, we have
$\widehat Q_{\mathrm F}^{-1}=(Q_{X}^{\circ})^{-1}+o_{\mathbb P}(1)$. Hence
\(\widehat\theta_{\mathrm F} = \widehat Q_{\mathrm F}^{-1}\widehat q_{\mathrm F} = (Q_{X}^{\circ})^{-1}q_{\mathrm F}^{\circ}+o_{\mathbb P}(1) = \theta_{\mathrm F}^\star+o_{\mathbb P}(1)\).

\paragraph*{Step 2: oracle residualized exposure means.}
Define the oracle residualized observed cell mean
\(\widehat Y^{\mathrm F}(p,q) = (I_pJ_q)^{-1} \sum_{i=1}^I\sum_{j=1}^J \{Y_{ij}^{\mathrm{obs}}-\dot X_{ij}^\top\theta_{\mathrm F}^\star\} G_{pq,ij}\),
and collect these four means in
\(\widehat{\mathbf Y}^{\mathrm F} = \bigl( \widehat Y^{\mathrm F}(1,1), \widehat Y^{\mathrm F}(1,0), \widehat Y^{\mathrm F}(0,1), \widehat Y^{\mathrm F}(0,0) \bigr)^\top\).
Because the residualized potential outcomes have finite-population means $\bar Y(p,q)$,
Condition~\ref{cond:fisher-adjustment} and Theorem~\ref{thm:vector-clt} imply
\begin{equation}\label{eq:oracle-fisher-vector-clt}
\Sigma_{\mathrm F}^{-1/2}
\left(
\widehat{\mathbf Y}^{\mathrm F}
-
\bar{\mathbf Y}
\right)
\Rightarrow
N(0,\mathrm I_4).
\end{equation}

\paragraph*{Step 3: replacing the oracle slope by the estimated slope.}
Combining \eqref{eq:mu-fisher-cell-representation} with the definition of
$\widehat Y^{\mathrm F}(p,q)$ gives
\(\widehat\mu_{\mathrm F,pq} - \widehat Y^{\mathrm F}(p,q) = \{\widehat X(p,q)-\bar X\}^\top (\theta_{\mathrm F}^\star-\widehat\theta_{\mathrm F})\).
Therefore
\[
\left\|
\widehat{\boldsymbol\mu}_{\mathrm F}-\widehat{\mathbf Y}^{\mathrm F}
\right\|
\le
2
\max_{p,q}\|\widehat X(p,q)-\bar X\|
\cdot
\|\widehat\theta_{\mathrm F}-\theta_{\mathrm F}^\star\|
=
O_{\mathbb P}(\rho_{I,J}^{1/2})o_{\mathbb P}(1)
=
o_{\mathbb P}(\rho_{I,J}^{1/2}).
\]
Condition~\ref{cond:fisher-adjustment} includes the nondegeneracy condition in
Condition~\ref{cond:vector-clt}, so
$\lambda_{\min}(\Sigma_{\mathrm F})\ge c\rho_{I,J}$ for all sufficiently large $I,J$ and
$\|\Sigma_{\mathrm F}^{-1/2}\|_{\mathrm{op}}=O(\rho_{I,J}^{-1/2})$. Hence
\(\Sigma_{\mathrm F}^{-1/2} ( \widehat{\boldsymbol\mu}_{\mathrm F}-\widehat{\mathbf Y}^{\mathrm F} ) =o_{\mathbb P}(1)\).
The conclusion follows from \eqref{eq:oracle-fisher-vector-clt} and Slutsky's theorem.
\end{proof}

\subsection{Proof of Theorem~\ref{thm:fisher-2w-conservative}}

\begin{proof}
 Define the oracle residualized observed
outcome
\(Y_{ij}^{\mathrm F,\mathrm{obs},\star} = Y_{ij}^{\mathrm{obs}}-\dot X_{ij}^\top\theta_{\mathrm F}^\star\),
its cell mean $\widehat Y^{\mathrm F,\star}(p,q)$, and the corresponding
saturated-regression residual
\(\widehat u_{ij}^{\mathrm F,\star} = Y_{ij}^{\mathrm F,\mathrm{obs},\star} - \widehat Y^{\mathrm F,\star}(p,q)\), \(G_{pq,ij}=1\).
Let $\widehat\Omega_{\mathrm{2w}}(Y^{\mathrm F,\star})$ be the two-way covariance from regressing $Y^{\mathrm F,\mathrm{obs},\star}$ on $Z$.
Conditions~\ref{cond:fisher-adjustment} and~\ref{cond:fisher-variance}
allow Proposition~\ref{prop:twoway-vector-conservative} to be applied to the
residualized potential outcomes. Hence
\begin{equation}\label{eq:oracle-fisher-2w-conservative-proof}
\begin{aligned}
\widehat\Omega_{\mathrm{2w}}(Y^{\mathrm F,\star})-\Sigma_{\mathrm F}
&=
\frac{1}{I}\mathcal S_I\left(
e_1^BA^{\mathrm F,(1)}+e_0^BA^{\mathrm F,(0)}
\right)
\\ &\quad +
\frac{1}{J}\mathcal S_J\left(
e_1^SB^{\mathrm F,(1)}+e_0^SB^{\mathrm F,(0)}
\right)
+o_{\mathbb P}(\rho_{I,J}).
\end{aligned}
\end{equation}

It remains to compare this oracle covariance with the dummy-coefficient block
from Fisher's regression. Apply Lemma~\ref{lem:oracle-sandwich-equivalence}
with $C=\dot X$, $R=R_{\mathrm F}$, and
$u^\star=\widehat u^{\mathrm F,\star}$. The overlap condition gives the
required bounds for $D=(IJ)^{-1}Z^\top Z$, while
\(Q=(IJ)^{-1}\dot X^\top\dot X=Q_{X}^{\circ}\), \((IJ)^{-1}Z^\top\dot X=O_{\mathbb P}(\rho_{I,J}^{1/2})\)
by Condition~\ref{cond:covariate-regularity} and
Lemma~\ref{lem:cell-average-lln}. The same condition implies
\begin{equation}\label{eq:covariate-fourth-moment}
\frac1{IJ}\sum_{i,j}\|\dot X_{ij}\|^4
\le
\frac2{IJ}\sum_{i,j}
\|\dot X_{ij}\dot X_{ij}^\top-Q_{X}^{\circ}\|_{\mathrm F}^2
+2\|Q_{X}^{\circ}\|_{\mathrm F}^2
=O(1),
\end{equation}
where
$\|Q_{X}^{\circ}\|_{\mathrm F}
\le (IJ)^{-1}\sum_{i,j}\|\dot X_{ij}\|^2=O(1)$.
Thus $(IJ)^{-1}\sum_{i,j}\|r_{ij}^{\mathrm F}\|^4=O_{\mathbb P}(1)$.
Moreover, within each exposure cell,
\(\sum G_{pq,ij}|\widehat u_{ij}^{\mathrm F,\star}|^4 \le 16\sum G_{pq,ij} |Y_{ij}^{\mathrm F}(p,q)-\bar Y(p,q)|^4\).
Condition~\ref{cond:fisher-variance}, fixed dimension, and Markov's inequality
therefore give
$(IJ)^{-1}\sum_{i,j}|\widehat u_{ij}^{\mathrm F,\star}|^4=O_{\mathbb P}(1)$.

By the proof of Theorem~\ref{thm:fisher-adjustment},
$\Delta_{\mathrm F}:=\widehat\theta_{\mathrm F}-\theta_{\mathrm F}^\star=o_{\mathbb P}(1)$.
For $G_{pq,ij}=1$, the two residuals satisfy
\(\widehat u_{ij}^{\mathrm F} = \widehat u_{ij}^{\mathrm F,\star}-\widehat\delta_{ij}^{\mathrm F}\), \(\widehat\delta_{ij}^{\mathrm F} = \{X_{ij}-\widehat X(p,q)\}^\top\Delta_{\mathrm F}\).
For each cell, convexity gives
\(\|\widehat X(p,q)-\bar X\|^4 \le (I_pJ_q)^{-1}\sum_{i,j}G_{pq,ij}\|\dot X_{ij}\|^4\).
Consequently,
\begin{equation}\label{eq:within-cell-covariate-fourth-moment}
\frac1{IJ}\sum_{i,j}
\|X_{ij}-\widehat X(W_i^B,W_j^S)\|^4
\le
\frac{16}{IJ}\sum_{i,j}\|\dot X_{ij}\|^4
=O(1),
\end{equation}
where we used $\|a-b\|^4\le8\|a\|^4+8\|b\|^4$ and
\eqref{eq:covariate-fourth-moment}. Hence
\((IJ)^{-1}\sum_{i,j}|\widehat\delta_{ij}^{\mathrm F}|^4 \le \|\Delta_{\mathrm F}\|^4 (IJ)^{-1}\sum_{i,j} \|X_{ij}-\widehat X(W_i^B,W_j^S)\|^4 =o_{\mathbb P}(1)\).
All conditions of Lemma~\ref{lem:oracle-sandwich-equivalence} are therefore
satisfied, yielding
\begin{equation}\label{eq:fisher-sandwich-oracle-equivalence-proof}
\widehat\Omega_{\mathrm{2w}}^{\mathrm F}
=
\widehat\Omega_{\mathrm{2w}}(Y^{\mathrm F,\star})
+o_{\mathbb P}(\rho_{I,J})
\end{equation}
in operator norm. Combining
\eqref{eq:oracle-fisher-2w-conservative-proof} and
\eqref{eq:fisher-sandwich-oracle-equivalence-proof} proves the theorem.
\end{proof}

\subsection{Two-way cluster-robust covariance from Lin's regression}
\label{app:lin-sandwich-definition}

For later use, we define the two-way clustered covariance matrix from Lin's regression itself.
For an observation in exposure cell \((p,q)\), define the OLS residual from
\eqref{eq:reg-lin-adjustment} as
\(
\widehat u_{ij}^{\mathrm L}
=
Y_{ij}^{\mathrm{obs}}
-
\widehat\mu_{\mathrm L,pq}
-
\dot X_{ij}^\top\widehat\theta_{\mathrm L,pq}.
\)
Define the row regressor
\[
r_{ij}^{\mathrm L}
=
\left(
G_{11,ij},G_{10,ij},G_{01,ij},G_{00,ij},
G_{11,ij}\dot X_{ij}^\top,
G_{10,ij}\dot X_{ij}^\top,
G_{01,ij}\dot X_{ij}^\top,
G_{00,ij}\dot X_{ij}^\top
\right)^\top ,
\]
and let \(R_{\mathrm L}\) be the \(IJ\times(4+4d)\) matrix with row
\((r_{ij}^{\mathrm L})^\top\).
Write $R_{\mathrm L}=(Z,C_{\mathrm L})$, where $C_{\mathrm L}$ consists
of the last $4d$ columns, namely the exposure-specific covariate interactions.
Define
\(\Gamma_B^{\mathrm L} = \sum_{i=1}^I (R_{\mathrm L,i})^\top \widehat u_i^{\mathrm L}(\widehat u_i^{\mathrm L})^\top R_{\mathrm L,i}\), \(\Gamma_S^{\mathrm L} = \sum_{j=1}^J (R_{\mathrm L,\cdot j})^\top \widehat u_{\cdot j}^{\mathrm L}(\widehat u_{\cdot j}^{\mathrm L})^\top R_{\mathrm L,\cdot j}\), \(\Gamma_{BS}^{\mathrm L} = \sum_{i=1}^I\sum_{j=1}^J r_{ij}^{\mathrm L}(r_{ij}^{\mathrm L})^\top (\widehat u_{ij}^{\mathrm L})^2\).
The two-way clustered covariance matrix for all coefficients in Lin's regression is
\[
\widehat V_{\eta,\mathrm{2w}}^{\mathrm L}
=
(R_{\mathrm L}^\top R_{\mathrm L})^{-1}
\left(
\Gamma_B^{\mathrm L}
+
\Gamma_S^{\mathrm L}
-
\Gamma_{BS}^{\mathrm L}
\right)
(R_{\mathrm L}^\top R_{\mathrm L})^{-1}.
\]
Let
\(
H_{\mu,\mathrm L}
=
(\mathrm I_4,0_{4\times 4d}).
\)
The two-way clustered covariance matrix for the Lin-adjusted exposure-mean estimator
\(\widehat{\boldsymbol\mu}_{\mathrm L}\) is the dummy-coefficient block
\(
\widehat\Omega_{\mathrm{2w}}^{\mathrm L}
=
H_{\mu,\mathrm L}
\widehat V_{\eta,\mathrm{2w}}^{\mathrm L}
H_{\mu,\mathrm L}^\top .
\)

\subsection{Proof of Theorem~\ref{thm:lin-adjustment}}

\begin{proof}
We prove the result in three steps.

\paragraph*{Step 1: consistency of the cell-specific Lin slopes.}
By the Frisch--Waugh--Lovell representation of \eqref{eq:reg-lin-adjustment}, for each
\((p,q)\in\{0,1\}^2\),
\(\widehat\theta_{\mathrm L,pq} = \widehat Q_{\mathrm L,pq}^{-1}\widehat q_{\mathrm L,pq}\)
with probability approaching one, where
\begin{align}
\widehat Q_{\mathrm L,pq}
&=
\frac{1}{I_pJ_q}
\sum_{i=1}^I\sum_{j=1}^J
G_{pq,ij}
\{X_{ij}-\widehat X(p,q)\}
\{X_{ij}-\widehat X(p,q)\}^\top,
\label{eq:Q-lin-hat}
\\
\widehat q_{\mathrm L,pq}
&=
\frac{1}{I_pJ_q}
\sum_{i=1}^I\sum_{j=1}^J
G_{pq,ij}
\{X_{ij}-\widehat X(p,q)\}
\{Y_{ij}^{\mathrm{obs}}-\widehat Y(p,q)\}.
\label{eq:q-lin-hat}
\end{align}
We show that, for each fixed \((p,q)\),
\(\widehat Q_{\mathrm L,pq}=Q_{X}^{\circ}+o_{\mathbb P}(1)\), \(\widehat q_{\mathrm L,pq}=q_{\mathrm L,pq}^{\circ}+o_{\mathbb P}(1)\).

Apply Lemma~\ref{lem:cell-average-lln} to the fixed arrays
\(X_{ij}\), \(\dot X_{ij}\dot X_{ij}^\top\), \(Y_{ij}(p,q)\), and
\(\dot X_{ij}\dot Y_{ij}(p,q)\). Condition~\ref{cond:covariate-regularity} gives the required
second-moment bounds. Hence
\(\widehat X(p,q)-\bar X=O_{\mathbb P}(\rho_{I,J}^{1/2})\), \(\widehat Y(p,q)-\bar Y(p,q)=O_{\mathbb P}(\rho_{I,J}^{1/2})\),
and the cell averages of \(\dot X_{ij}\dot X_{ij}^\top\) and
\(\dot X_{ij}\dot Y_{ij}(p,q)\) converge to their corresponding finite-population averages.

Using
\(\widehat Q_{\mathrm L,pq} = (I_pJ_q)^{-1} \sum_{i,j} G_{pq,ij}\dot X_{ij}\dot X_{ij}^\top - \{\widehat X(p,q)-\bar X\} \{\widehat X(p,q)-\bar X\}^\top\),
we obtain
\(\widehat Q_{\mathrm L,pq}=Q_{X}^{\circ}+o_{\mathbb P}(1)\).
Similarly,
\(\widehat q_{\mathrm L,pq} = (I_pJ_q)^{-1} \sum_{i,j} G_{pq,ij}\dot X_{ij}\dot Y_{ij}(p,q) - \{\widehat X(p,q)-\bar X\} \{\widehat Y(p,q)-\bar Y(p,q)\}\),
and therefore
\(\widehat q_{\mathrm L,pq}=q_{\mathrm L,pq}^{\circ}+o_{\mathbb P}(1)\).
Since \(\lambda_{\min}(Q_{X}^{\circ})\ge c_Q>0\), we have
\(\widehat Q_{\mathrm L,pq}^{-1} = (Q_{X}^{\circ})^{-1}+o_{\mathbb P}(1)\).
Thus, for every \((p,q)\),
\(\widehat\theta_{\mathrm L,pq} = \widehat Q_{\mathrm L,pq}^{-1}\widehat q_{\mathrm L,pq} = (Q_{X}^{\circ})^{-1}q_{\mathrm L,pq}^{\circ}+o_{\mathbb P}(1) = \theta_{\mathrm L,pq}^\star+o_{\mathbb P}(1)\).
Because there are only four exposure cells,
\(\max_{p,q} \|\widehat\theta_{\mathrm L,pq}-\theta_{\mathrm L,pq}^\star\| = o_{\mathbb P}(1)\).

\paragraph*{Step 2: oracle residualized exposure means.}
Define the oracle residualized observed cell mean
\(\widehat Y^{\mathrm L}(p,q) = (I_pJ_q)^{-1} \sum_{i=1}^I\sum_{j=1}^J \{Y_{ij}^{\mathrm{obs}}-\dot X_{ij}^\top\theta_{\mathrm L,pq}^\star\} G_{pq,ij}\),
and collect these four means in
\(\widehat{\mathbf Y}^{\mathrm L} = \bigl( \widehat Y^{\mathrm L}(1,1), \widehat Y^{\mathrm L}(1,0), \widehat Y^{\mathrm L}(0,1), \widehat Y^{\mathrm L}(0,0) \bigr)^\top\).
Because the residualized potential outcomes \(Y_{ij}^{\mathrm L}(p,q)\) have finite-population
means \(\bar Y(p,q)\), Condition~\ref{cond:lin-adjustment} and
Theorem~\ref{thm:vector-clt} imply
\begin{equation}\label{eq:oracle-lin-vector-clt}
\Sigma_{\mathrm L}^{-1/2}
\left(
\widehat{\mathbf Y}^{\mathrm L}
-
\bar{\mathbf Y}
\right)
\Rightarrow
N(0,\mathrm I_4).
\end{equation}

\paragraph*{Step 3: replacing the oracle slopes by the estimated slopes.}
Combining \eqref{eq:mu-lin-cell-representation} with the definition of
\(\widehat Y^{\mathrm L}(p,q)\) gives
\(\widehat\mu_{\mathrm L,pq} - \widehat Y^{\mathrm L}(p,q) = \{\widehat X(p,q)-\bar X\}^\top ( \theta_{\mathrm L,pq}^\star-\widehat\theta_{\mathrm L,pq} )\).
Therefore,
\[
\left\|
\widehat{\boldsymbol\mu}_{\mathrm L}-\widehat{\mathbf Y}^{\mathrm L}
\right\|
\le
2
\max_{p,q}\|\widehat X(p,q)-\bar X\|
\cdot
\max_{p,q}\|\widehat\theta_{\mathrm L,pq}-\theta_{\mathrm L,pq}^\star\|
=
O_{\mathbb P}(\rho_{I,J}^{1/2})o_{\mathbb P}(1)
=
o_{\mathbb P}(\rho_{I,J}^{1/2}).
\]
Condition~\ref{cond:lin-adjustment} includes the nondegeneracy condition in
Condition~\ref{cond:vector-clt}, so
\(
\|\Sigma_{\mathrm L}^{-1/2}\|_{\mathrm{op}}
=
O(\rho_{I,J}^{-1/2}).
\)
Hence
\(\Sigma_{\mathrm L}^{-1/2} ( \widehat{\boldsymbol\mu}_{\mathrm L}-\widehat{\mathbf Y}^{\mathrm L} ) = o_{\mathbb P}(1)\).
The conclusion follows from \eqref{eq:oracle-lin-vector-clt} and Slutsky's theorem.
\end{proof}

\subsection{Proof of Theorem~\ref{thm:lin-2w-conservative}}

\begin{proof}
Define
\(Y_{ij}^{\mathrm L,\mathrm{obs},\star} = Y_{ij}^{\mathrm{obs}} - \sum_{p,q}G_{pq,ij}\dot X_{ij}^\top\theta_{\mathrm L,pq}^\star\),
its cell mean $\widehat Y^{\mathrm L,\star}(p,q)$, and the corresponding
saturated-regression residual
\(\widehat u_{ij}^{\mathrm L,\star} = Y_{ij}^{\mathrm L,\mathrm{obs},\star} - \widehat Y^{\mathrm L,\star}(p,q)\), \(G_{pq,ij}=1\).
Let $\widehat\Omega_{\mathrm{2w}}(Y^{\mathrm L,\star})$ be the two-way covariance from regressing $Y^{\mathrm L,\mathrm{obs},\star}$ on $Z$.
Conditions~\ref{cond:lin-adjustment} and~\ref{cond:lin-variance} imply
\begin{equation}\label{eq:oracle-lin-2w-conservative-proof}
\widehat\Omega_{\mathrm{2w}}(Y^{\mathrm L,\star})-\Sigma_{\mathrm L}
=
\frac{1}{I}\mathcal S_I\left(
e_1^BA^{\mathrm L,(1)}+e_0^BA^{\mathrm L,(0)}
\right)
+
\frac{1}{J}\mathcal S_J\left(
e_1^SB^{\mathrm L,(1)}+e_0^SB^{\mathrm L,(0)}
\right)
 +o_{\mathbb P}(\rho_{I,J}),
\end{equation}
by Proposition~\ref{prop:twoway-vector-conservative}.

We now apply Lemma~\ref{lem:oracle-sandwich-equivalence} with
$C=C_{\mathrm L}$, $R=R_{\mathrm L}$, and
$u^\star=\widehat u^{\mathrm L,\star}$. The matrix
$(IJ)^{-1}C_{\mathrm L}^\top C_{\mathrm L}$ is block diagonal, with cell
$(p,q)$ block
\((IJ)^{-1}\sum_{i,j}G_{pq,ij}\dot X_{ij}\dot X_{ij}^\top = e_p^Be_q^S Q_{X}^{\circ}+o_{\mathbb P}(1)\),
and
\(
(IJ)^{-1}Z^\top C_{\mathrm L}=O_{\mathbb P}(\rho_{I,J}^{1/2}).
\)
These relations follow from Lemma~\ref{lem:cell-average-lln}; overlap and
Condition~\ref{cond:covariate-regularity} give the required eigenvalue bounds.
The covariate fourth-moment bound in
\eqref{eq:covariate-fourth-moment} follows from the shared
Condition~\ref{cond:covariate-regularity}, and therefore
$(IJ)^{-1}\sum_{i,j}\|r_{ij}^{\mathrm L}\|^4=O_{\mathbb P}(1)$.
As in the Fisher case, Condition~\ref{cond:lin-variance} yields
\((IJ)^{-1}\sum_{i,j}|\widehat u_{ij}^{\mathrm L,\star}|^4=O_{\mathbb P}(1)\).

By the proof of Theorem~\ref{thm:lin-adjustment},
$\max_{p,q}\|\Delta_{\mathrm L,pq}\|=o_{\mathbb P}(1)$, where
$\Delta_{\mathrm L,pq}=\widehat\theta_{\mathrm L,pq}-
\theta_{\mathrm L,pq}^\star$. For $G_{pq,ij}=1$,
\(\widehat u_{ij}^{\mathrm L} = \widehat u_{ij}^{\mathrm L,\star}-\widehat\delta_{ij}^{\mathrm L}\), \(\widehat\delta_{ij}^{\mathrm L} = \{X_{ij}-\widehat X(p,q)\}^\top\Delta_{\mathrm L,pq}\).
Equation~\eqref{eq:within-cell-covariate-fourth-moment} and the preceding
slope consistency give
\((IJ)^{-1}\sum_{i,j}|\widehat\delta_{ij}^{\mathrm L}|^4 \le \max_{p,q}\|\Delta_{\mathrm L,pq}\|^4 (IJ)^{-1}\sum_{i,j} \|X_{ij}-\widehat X(W_i^B,W_j^S)\|^4 =o_{\mathbb P}(1)\).
Lemma~\ref{lem:oracle-sandwich-equivalence} now gives
\begin{equation}\label{eq:lin-sandwich-oracle-equivalence-proof}
\widehat\Omega_{\mathrm{2w}}^{\mathrm L}
=
\widehat\Omega_{\mathrm{2w}}(Y^{\mathrm L,\star})
+o_{\mathbb P}(\rho_{I,J})
\end{equation}
in operator norm. Combining this relation with
\eqref{eq:oracle-lin-2w-conservative-proof} proves the theorem.
\end{proof}

\subsection{Proof of Proposition~\ref{prop:anova-adjustment-efficiency}}
\label{app:anova-adjustment-efficiency}

\begin{proof}
By construction,
\(I^{-1}\sum_{i=1}^I X_i^B=0\), \(J^{-1}\sum_{j=1}^J X_j^S=0\),
and the doubly centered component satisfies
\(J^{-1}\sum_{j=1}^J X_{ij}^{BS}=0\), for every \(i\), \(I^{-1}\sum_{i=1}^I X_{ij}^{BS}=0\), for every \(j\).
Moreover, the three components $X_i^B$, $X_j^S$, and $X_{ij}^{BS}$
are mutually orthogonal over the full finite population.
Only the buyer and seller projection coefficients are needed for the leading
covariances. If the $X^{BS}$ block is singular, its least-squares coefficient
need not be unique, but its fitted contribution is unique and has zero row
and column averages. Thus it does not affect any covariance comparison below.

Let $\gamma_{pq}^B$ and
$\gamma_{pq}^S$ denote the population coefficients on
$X_i^B$ and $X_j^S$, respectively, in the ANOVA-decomposed Lin
regression for exposure cell $(p,q)$. Orthogonality of the three
covariate components gives
\(\gamma_{pq}^B = \{\mathcal S_I(X^B)\}^{-1} (I-1)^{-1} \sum_{i=1}^I X_i^B \{\bar Y_{i\cdot}(p,q)-\bar Y(p,q)\}\),
and
\(\gamma_{pq}^S = \{\mathcal S_J(X^S)\}^{-1} (J-1)^{-1} \sum_{j=1}^J X_j^S \{\bar Y_{\cdot j}(p,q)-\bar Y(p,q)\}\).

\paragraph*{Part (i): ANOVA-decomposed Fisher adjustment.}
Define the sign-and-fraction vectors
\(v_B= ((e_1^B)^{-1},(e_1^B)^{-1},-(e_0^B)^{-1},-(e_0^B)^{-1})^\top\), \(v_S= ((e_1^S)^{-1},-(e_0^S)^{-1},(e_1^S)^{-1},-(e_0^S)^{-1})^\top\).
The blockwise population normal equations for ANOVA-decomposed Fisher
regression give the common buyer and seller coefficients
$\gamma_{\mathrm F}^B=\sum_{p,q}e_p^Be_q^S\gamma_{pq}^B$ and
$\gamma_{\mathrm F}^S=\sum_{p,q}e_p^Be_q^S\gamma_{pq}^S$.
Since the other covariate components have zero averages on the relevant side,
\(A_i^{\mathrm{AF}}=A_i-v_B(X_i^B)^\top\gamma_{\mathrm F}^B\), \(B_j^{\mathrm{AF}}=B_j-v_S(X_j^S)^\top\gamma_{\mathrm F}^S\).
Put
\(h_B=(\gamma_{\mathrm F}^B)^\top\mathcal S_I(X^B)\gamma_{\mathrm F}^B\), \(h_S=(\gamma_{\mathrm F}^S)^\top\mathcal S_J(X^S)\gamma_{\mathrm F}^S\).
Both quantities are nonnegative. Under the conditions in part (i), for each
$(p,q)$,
\((\gamma_{pq}^B)^\top\mathcal S_I(X^B)\gamma_{\mathrm F}^B=h_B\), \((\gamma_{pq}^S)^\top\mathcal S_J(X^S)\gamma_{\mathrm F}^S=h_S\).
Consequently, the definitions of the projection slopes and score vectors imply
\((I-1)^{-1}\sum_i(A_i-\bar A)(X_i^B)^\top\gamma_{\mathrm F}^B=h_Bv_B\), \((J-1)^{-1}\sum_j(B_j-\bar B)(X_j^S)^\top\gamma_{\mathrm F}^S=h_Sv_S\).
Expanding the covariance of the adjusted scores therefore gives
\(\mathcal S_I(A)-\mathcal S_I(A^{\mathrm{AF}})=h_Bv_Bv_B^\top\), \(\mathcal S_J(B)-\mathcal S_J(B^{\mathrm{AF}})=h_Sv_Sv_S^\top\).
It follows that
\begin{equation}
\Sigma-\Sigma_{\mathrm{AF}}
=
\frac{e_1^Be_0^B}{I}h_Bv_Bv_B^\top
+
\frac{e_1^Se_0^S}{J}h_Sv_Sv_S^\top
\succeq0.
\label{eq:anova-fisher-no-harm-gap}
\end{equation}
This proves part (i).

\paragraph*{Part (ii): ANOVA-decomposed Lin adjustment.}
Define the $d\times4$ matrices
\(K_B^\star = \{\mathcal S_I(X^B)\}^{-1} (I-1)^{-1} \sum_{i=1}^I X_i^B(A_i-\bar A)^\top\),
and
\(K_S^\star = \{\mathcal S_J(X^S)\}^{-1} (J-1)^{-1} \sum_{j=1}^J X_j^S(B_j-\bar B)^\top\).
By the definitions of $A_i$ and $B_j$, the preceding cell-specific
projection coefficients imply
\(A_i^{\mathrm{AL}} = A_i-(K_B^\star)^\top X_i^B\), \(B_j^{\mathrm{AL}} = B_j-(K_S^\star)^\top X_j^S\).
The component $X_{ij}^{BS}$ does not enter either expression because
its row and column averages are zero. Therefore,
\[
\Sigma_{\mathrm{AL}}
=
\frac{e_1^Be_0^B}{I}
\mathcal S_I\left\{
A-(K_B^\star)^\top X^B
\right\}
+
\frac{e_1^Se_0^S}{J}
\mathcal S_J\left\{
B-(K_S^\star)^\top X^S
\right\}.
\]

For arbitrary $d\times4$ matrices $K_B$ and $K_S$, define
\[
\Sigma(K_B,K_S)
=
\frac{e_1^Be_0^B}{I}
\mathcal S_I\left\{
A-K_B^\top X^B
\right\}
+
\frac{e_1^Se_0^S}{J}
\mathcal S_J\left\{
B-K_S^\top X^S
\right\}.
\]
The normal equations give
\((I-1)^{-1} \sum_{i=1}^I X_i^B [ A_i-\bar A-(K_B^\star)^\top X_i^B ]^\top =0\),
with the analogous identity on the seller side. Hence the
finite-population Pythagorean identity yields
\begin{align}
\Sigma(K_B,K_S)-\Sigma_{\mathrm{AL}}
={}&
\frac{e_1^Be_0^B}{I}
(K_B-K_B^\star)^\top
\mathcal S_I(X^B)
(K_B-K_B^\star)
\nonumber\\
&+
\frac{e_1^Se_0^S}{J}
(K_S-K_S^\star)^\top
\mathcal S_J(X^S)
(K_S-K_S^\star)
\succeq0.
\label{eq:anova-lin-pythagorean}
\end{align}

The unadjusted estimator corresponds to $K_B=K_S=0$, and therefore
\(
\Sigma_{\mathrm{AL}}\preceq\Sigma.
\)

Next consider ordinary Fisher, ordinary Lin, and ANOVA-decomposed Fisher regressions. Since
\(J^{-1}\sum_{j=1}^J\dot X_{ij}=X_i^B\), \(I^{-1}\sum_{i=1}^I\dot X_{ij}=X_j^S\),
their residualized buyer- and seller-side first-order scores can be
written, for suitable matrices $K_B^m$ and $K_S^m$, as
\(A_i^m=A_i-(K_B^m)^\top X_i^B\), \(B_j^m=B_j-(K_S^m)^\top X_j^S\), \(m\in\{\mathrm F,\mathrm L,\mathrm{AF}\}\).
For ANOVA-decomposed Fisher regression specifically,
$K_B^{\mathrm{AF}}=\gamma_{\mathrm F}^Bv_B^\top$ and
$K_S^{\mathrm{AF}}=\gamma_{\mathrm F}^Sv_S^\top$.
These representations do not require the conditions in part (i).
Consequently,
\(\Sigma_m=\Sigma(K_B^m,K_S^m)\), \(m\in\{\mathrm F,\mathrm L,\mathrm{AF}\}\).
Applying \eqref{eq:anova-lin-pythagorean} with
$(K_B,K_S)=(K_B^m,K_S^m)$ gives
\(\Sigma_{\mathrm{AL}}\preceq\Sigma_{\mathrm F}\), \(\Sigma_{\mathrm{AL}}\preceq\Sigma_{\mathrm L}\), \(\Sigma_{\mathrm{AL}}\preceq\Sigma_{\mathrm{AF}}\).
Together with the unadjusted comparison, this proves part (ii).
\end{proof}
\subsection{Side-specific covariates}
\label{app:side-specific-covariates}
\begin{prop}[Side-specific covariates]
\label{prop:side-specific-covariates}

Suppose $X_{ij}=X_i^B$ for all $(i,j)$. Let
\(\bar X^B=I^{-1}\sum_{i=1}^I X_i^B\), \(Q_B = I^{-1}\sum_{i=1}^I (X_i^B-\bar X^B)(X_i^B-\bar X^B)^\top\).

\begin{enumerate}
\item For Fisher's regression,
\[
\beta_{11}+\beta_{10}=\beta_{01}+\beta_{00}=0
\quad\Longrightarrow\quad
\beta^\top\widehat{\boldsymbol\mu}_{\mathrm F}
=
\beta^\top\widehat{\mathbf Y}.
\]

\item If $Q_B$ is nonsingular, then for Lin's regression,
\(B_j^{\mathrm L}=B_j\), \(j=1,\ldots,J\), \(\Sigma_{\mathrm L}\preceq\Sigma\).
\end{enumerate}

The analogous results hold for $X_{ij}=X_j^S$ with the roles of buyers and
sellers reversed.
\end{prop}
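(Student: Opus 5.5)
For part (i), I would work directly with the Frisch--Waugh--Lovell representation from the proof of Theorem~\ref{thm:fisher-adjustment}. Write
\[
\widehat\mu_{\mathrm F,pq}=\widehat Y(p,q)-\{\widehat X(p,q)-\bar X\}^\top\widehat\theta_{\mathrm F}.
\]
It then suffices to show that $\sum_{p,q}\beta_{pq}\{\widehat X(p,q)-\bar X\}=0$ whenever $\beta_{11}+\beta_{10}=\beta_{01}+\beta_{00}=0$. Since $X_{ij}=X_i^B$ does not depend on $j$, the cell mean
\[
\widehat X(p,q)=(I_pJ_q)^{-1}\sum_{i:W_i^B=p}\sum_{j:W_j^S=q}X_i^B=I_p^{-1}\sum_{i:W_i^B=p}X_i^B
\]
depends only on $p$. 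Hence
\[
\sum_{p,q}\beta_{pq}\widehat X(p,q)=(\beta_{11}+\beta_{10})\widehat X(1,\cdot)+(\beta_{01}+\beta_{00})\widehat X(0,\cdot)=0.
\]
The same grouping applied to $\bar X$ gives $\sum_{p,q}\beta_{pq}\bar X=0$. The identity therefore holds exactly, for every realization, with no asymptotics needed. This step is routine. The only check is that the contrasts $\tau_{S,\mathrm{spill}}$ and $\tau_{\mathrm{int}}$ indeed satisfy the two zero-sum conditions, which is immediate from \eqref{eq:F-main}.

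For part (ii), I would recompute the seller-side score of the Lin-residualized outcomes. By definition,
\[
\bar Y^{\mathrm L}_{\cdot j}(p,q)=\bar Y_{\cdot j}(p,q)-I^{-1}\sum_i\dot X_{ij}^\top\theta^\star_{\mathrm L,pq}.
\]
With $X_{ij}=X_i^B$, we have $I^{-1}\sum_i\dot X_{ij}=I^{-1}\sum_i(X_i^B-\bar X^B)=0$ for every $j$. So each seller-side column mean is unchanged, and $B_j^{\mathrm L}=B_j$. On the buyer side,
\[
\bar Y^{\mathrm L}_{i\cdot}(p,q)=\bar Y_{i\cdot}(p,q)-(X_i^B-\bar X^B)^\top\theta^\star_{\mathrm L,pq},
\]
so $A_i^{\mathrm L}=A_i-(K_B^{\mathrm L})^\top(X_i^B-\bar X^B)$ for the $d\times4$ matrix $K_B^{\mathrm L}$ whose columns are the signed, rescaled slopes $\theta^\star_{\mathrm L,pq}/e_p^B$.

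The key observation is how $\theta^\star_{\mathrm L,pq}$ simplifies here. Since $\dot X_{ij}$ is constant in $j$, we get $Q_X^\circ=Q_B$ (nonsingular by assumption), and $q^\circ_{\mathrm L,pq}=I^{-1}\sum_i(X_i^B-\bar X^B)\{\bar Y_{i\cdot}(p,q)-\bar Y(p,q)\}$. Hence $\theta^\star_{\mathrm L,pq}$ is exactly the least-squares slope of the buyer averages $\bar Y_{i\cdot}(p,q)$ on $X_i^B$. Stacking with the signs and the $1/e_p^B$ weights shows that $K_B^{\mathrm L}$ is the multivariate least-squares coefficient $K_B^\star$ of $A_i-\bar A$ on $X_i^B-\bar X^B$. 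This matches the construction in the proof of Proposition~\ref{prop:anova-adjustment-efficiency}(ii), since here $X^B$ is the whole covariate.

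The finite-population Pythagorean identity \eqref{eq:anova-lin-pythagorean}, with the seller-side term absent, then gives
\[
\mathcal S_I(A)-\mathcal S_I(A^{\mathrm L})=(K_B^\star)^\top\mathcal S_I(X^B)K_B^\star\succeq0.
\]
Combining this with $B^{\mathrm L}=B$ in the definition of $\Sigma_{\mathrm L}$ yields $\Sigma_{\mathrm L}\preceq\Sigma$. The step needing the most care is matching normalizations: $Q_B$ uses denominator $I$ while $\mathcal S_I$ uses $I-1$, but the ratio cancels in the slope. Alternatively, I would simply note that this case is a special case of Proposition~\ref{prop:anova-adjustment-efficiency}(ii), where $X^S\equiv0$ and $X^{BS}\equiv0$, so the ANOVA-decomposed Lin regression coincides with ordinary Lin. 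The seller case $X_{ij}=X_j^S$ follows by symmetry.
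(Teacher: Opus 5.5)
Your argument is correct and matches the paper's proof. In part (i) you use the cell representation with $\widehat X(p,q)$ depending only on $p$. In part (ii) you use unchanged column means and the least-squares identity $\mathcal S_I(A)-\mathcal S_I(A^{\mathrm L})=(K_B^\star)^\top\mathcal S_I(X^B)K_B^\star$, which is exactly the paper's $I(I-1)^{-1}M_B^\top Q_B^{-1}M_B$. One small caution on your closing ``special case'' remark: Proposition~\ref{prop:anova-adjustment-efficiency} assumes $\mathcal S_J(X^S)$ is nonsingular, which fails when $X^S\equiv0$, so your direct argument, not that citation, is what proves the claim.
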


\begin{proof}
For $p\in\{0,1\}$, define
\(\widehat X^B(p) = I_p^{-1}\sum_{i:W_i^B=p}X_i^B\).
Since $X_{ij}=X_i^B$,
\(\widehat X(p,q)=\widehat X^B(p)\).
Hence \eqref{eq:mu-fisher-cell-representation} gives, for any $\beta$,
\[
\beta^\top\widehat{\boldsymbol\mu}_{\mathrm F}
=
\beta^\top\widehat{\mathbf Y}
-
\sum_{p=0}^1
\left(\sum_{q=0}^1\beta_{pq}\right)
\{\widehat X^B(p)-\bar X^B\}^\top
\widehat\theta_{\mathrm F}.
\]
This proves part (i).

For part (ii), let
\(\dot X_i^B=X_i^B-\bar X^B\), \(M_B = I^{-1}\sum_{i=1}^I \dot X_i^B(A_i-\bar A)^\top\).
If $Q_B$ is nonsingular, the cell-specific population slopes imply
\(A_i^{\mathrm L} = A_i-M_B^\top Q_B^{-1}\dot X_i^B\).
Moreover,
\(\bar Y_{\cdot j}^{\mathrm L}(p,q) = \bar Y_{\cdot j}(p,q) - (I^{-1}\sum_i\dot X_i^B)^\top \theta_{\mathrm L,pq}^\star = \bar Y_{\cdot j}(p,q)\),
so $B_j^{\mathrm L}=B_j$.

The finite-population least-squares identity gives
\(\mathcal S_I(A^{\mathrm L}) = \mathcal S_I(A) - I (I-1)^{-1} M_B^\top Q_B^{-1}M_B\).
Therefore,
\(\Sigma-\Sigma_{\mathrm L} = e_1^Be_0^B (I-1)^{-1} M_B^\top Q_B^{-1}M_B \succeq0\).
The seller-side result follows symmetrically.
\end{proof}

\subsection{Optimized scalar inference after regression adjustment}
\label{app:adjusted-optimized-scalar}

This subsection formalizes the extension described in
Section~\ref{subsec:adjusted-optimized-scalar}. Fix a contrast
\(\beta=(\beta_{11},\beta_{10},\beta_{01},\beta_{00})^\top\), and let
\(m\in\{\mathrm F,\mathrm L\}\) index Fisher or Lin adjustment. Apply the projection
definitions in Section~\ref{subsec:scalar-inference} to the corresponding
oracle residualized potential outcomes \(Y_{ij}^m(p,q)\). In particular, define
\(a_i^m(p) = (e_p^B)^{-1}\sum_{q=0}^1 \beta_{pq}\bar Y_{i\cdot}^m(p,q)\), \(b_j^m(q) = (e_q^S)^{-1}\sum_{p=0}^1 \beta_{pq}\bar Y_{\cdot j}^m(p,q)\),
and let \(S_{a,p}^m\) and \(S_{b,q}^m\) be their finite-population standard
deviations. Let \(S_{a,\Delta}^m\) and \(S_{b,\Delta}^m\) denote the
finite-population standard deviations of
\(a_i^m(1)-a_i^m(0)\) and \(b_j^m(1)-b_j^m(0)\), respectively. Write
\(
\sigma_{m,\beta}^2
=
\beta^\top\Sigma_m\beta
\)
for the leading variance of
\(\widehat\tau_{m,\beta}:=\beta^\top\widehat{\boldsymbol\mu}_m\). The adjusted
optimized population target is
\begin{equation}\label{eq:adjusted-Vopt-population}
V_{\mathrm{opt},\beta}^{m,\circ}
=
\frac{e_1^Be_0^B}{I}
\bigl(S_{a,1}^m+S_{a,0}^m\bigr)^2
+
\frac{e_1^Se_0^S}{J}
\bigl(S_{b,1}^m+S_{b,0}^m\bigr)^2.
\end{equation}

The feasible residualized outcomes are
\(\widehat Y_{ij}^{\mathrm F,\mathrm{res}} = Y_{ij}^{\mathrm{obs}} -\dot X_{ij}^\top\widehat\theta_{\mathrm F}\), \(\widehat Y_{ij}^{\mathrm L,\mathrm{res}} = Y_{ij}^{\mathrm{obs}} - \sum_{p=0}^1\sum_{q=0}^1 G_{pq,ij}\dot X_{ij}^\top\widehat\theta_{\mathrm L,pq}\).
The first-order conditions in
\eqref{eq:mu-fisher-cell-representation} and
\eqref{eq:mu-lin-cell-representation} imply, for either \(m\),
\(\widehat\mu_{m,pq} = (I_pJ_q)^{-1} \sum_{i=1}^I\sum_{j=1}^J G_{pq,ij}\widehat Y_{ij}^{m,\mathrm{res}}\).
Thus the adjusted coefficient vector is exactly the vector of cell means of the
corresponding feasible residualized outcome.

Use \(\widehat Y_{ij}^{m,\mathrm{res}}\) in place of \(Y_{ij}^{\mathrm{obs}}\) in
the sample projection construction preceding \eqref{eq:Vopt-sample}. Denote the
resulting state-specific sample standard deviations by
\(s_{a,p}^m\) and \(s_{b,q}^m\), and define
\begin{equation}\label{eq:adjusted-Vopt-sample}
\widehat V_{\mathrm{opt},\beta}^m
=
\frac{e_1^Be_0^B}{I}
\bigl(s_{a,1}^m+s_{a,0}^m\bigr)^2
+
\frac{e_1^Se_0^S}{J}
\bigl(s_{b,1}^m+s_{b,0}^m\bigr)^2.
\end{equation}
Let \(\Omega_{\mathrm{2w}}^{m,\circ}\) denote the deterministic first-order target
of the two-way clustered covariance matrix evaluated on the oracle residualized
potential outcomes \(Y^m\). The equivalence results in
\eqref{eq:fisher-sandwich-oracle-equivalence-proof} and
\eqref{eq:lin-sandwich-oracle-equivalence-proof} show that this is also the
first-order target of the two-way clustered covariance from the corresponding
feasible adjusted regression.

\begin{prop}[Optimized scalar inference after regression adjustment]
\label{prop:adjusted-optimized-scalar}
Suppose Assumptions~\ref{ass:smrd-randomization} and
\ref{ass:local-interference} and
Condition~\ref{cond:covariate-regularity} hold.
For \(m=\mathrm F\), impose
Conditions~\ref{cond:fisher-adjustment} and~\ref{cond:fisher-variance};
for \(m=\mathrm L\), impose
Conditions~\ref{cond:lin-adjustment} and~\ref{cond:lin-variance}. Under the
corresponding conditions,
\(\widehat V_{\mathrm{opt},\beta}^m = V_{\mathrm{opt},\beta}^{m,\circ} + o_{\mathbb P}(I^{-1}+J^{-1})\),
and
\begin{equation}\label{eq:adjusted-Vopt-ordering}
\sigma_{m,\beta}^2
\le
V_{\mathrm{opt},\beta}^{m,\circ}
\le
\beta^\top\Omega_{\mathrm{2w}}^{m,\circ}\beta.
\end{equation}
Thus, the interval
\(\mathcal I_{m,\beta}(1-\alpha) = [ \widehat\tau_{m,\beta} - z_{1-\alpha/2}\sqrt{\widehat V_{\mathrm{opt},\beta}^m}, \; \widehat\tau_{m,\beta} + z_{1-\alpha/2}\sqrt{\widehat V_{\mathrm{opt},\beta}^m} ]\)
satisfies
\(\liminf_{I,J\to\infty} \p\{ \tau_\beta\in\mathcal I_{m,\beta}(1-\alpha) \} \ge 1-\alpha\).
\end{prop}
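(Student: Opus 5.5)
The proof reduces the adjusted problem to the unadjusted one by an oracle-residualization argument, mirroring the proofs of Theorems~\ref{thm:fisher-2w-conservative} and~\ref{thm:lin-2w-conservative}. There are three parts: consistency of $\widehat V_{\mathrm{opt},\beta}^m$, the population ordering in \eqref{eq:adjusted-Vopt-ordering}, and the coverage statement.

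\emph{Step 1: oracle version.} Define the oracle residualized observed outcome $Y_{ij}^{m,\mathrm{obs},\star}$ (observed outcome minus $\dot X_{ij}^\top\theta^\star_{m}$, or minus the cell-specific $\theta^\star_{\mathrm L,pq}$), and let $\check s_{a,p}^m,\check s_{b,q}^m$ be the sample standard deviations produced by the construction preceding \eqref{eq:Vopt-sample} applied to it. The oracle potential outcomes $Y^m_{ij}(p,q)$ satisfy Condition~\ref{cond:vector-clt} and the fourth-moment bound, by Conditions~\ref{cond:fisher-adjustment}--\ref{cond:fisher-variance} or \ref{cond:lin-adjustment}--\ref{cond:lin-variance}. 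Hence Lemma~\ref{lem:estimated-projection-variances} and Proposition~\ref{prop:optimized-scalar-variance} apply verbatim. They give $(\check s_{a,p}^m)^2-(S_{a,p}^m)^2=o_{\mathbb P}(1)$ and the seller analogue, together with $\sigma_{m,\beta}^2\le V_{\mathrm{opt},\beta}^{m,\circ}\le\beta^\top\Omega_{\mathrm{2w}}^{m,\circ}\beta$. This already proves \eqref{eq:adjusted-Vopt-ordering}, which is a purely population statement about the array $Y^m$.

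\emph{Step 2: feasible versus oracle (main obstacle).} We must show $s_{a,p}^m-\check s_{a,p}^m=o_{\mathbb P}(1)$. For a buyer $i$ with $W_i^B=p$, the feasible and oracle projections differ by
\[
\widehat a_i(p)-\check a_i(p)=-(e_p^B)^{-1}\sum_q\beta_{pq}\,\widehat{\bar X}_{i\cdot}(q)^\top\Delta_{m,pq},
\]
where $\widehat{\bar X}_{i\cdot}(q)$ is the row-$i$ average of $\dot X_{ij}$ over sellers with $W_j^S=q$, and $\Delta_{m,pq}$ is the slope error, which is $o_{\mathbb P}(1)$ by Step~1 of the proofs of Theorems~\ref{thm:fisher-adjustment} and~\ref{thm:lin-adjustment}. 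As in the proof of Lemma~\ref{lem:estimated-projection-variances}, the triangle inequality for centered Euclidean norms gives
\[
|s_{a,p}^m-\check s_{a,p}^m|\le\Big[(I_p-1)^{-1}\sum_{i:W_i^B=p}\{\widehat a_i(p)-\check a_i(p)\}^2\Big]^{1/2}.
\]
By Jensen's inequality the bracket is at most $C\max_{p,q}\|\Delta_{m,pq}\|^2\,(IJ)^{-1}\sum_{i,j}\|\dot X_{ij}\|^2$, using overlap to absorb $I/I_p$ and $J/J_q$. The last average is $O(1)$ by Condition~\ref{cond:covariate-regularity}, so the bracket is $o_{\mathbb P}(1)$. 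The seller side is symmetric. The point needing care is that the fitted slopes are random and depend on the same assignment; this is harmless because the bound is uniform in the slope and the covariate moment is deterministic.

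\emph{Step 3: conclusion.} Steps 1 and 2 show that the standard deviations are $O_{\mathbb P}(1)$, so
\[
\widehat V_{\mathrm{opt},\beta}^m=V_{\mathrm{opt},\beta}^{m,\circ}+o_{\mathbb P}(\rho_{I,J}),
\]
arguing as in the proof of Proposition~\ref{prop:optimized-scalar-variance}. For coverage, Theorem~\ref{thm:fisher-adjustment} or~\ref{thm:lin-adjustment} gives $(\widehat\tau_{m,\beta}-\tau_\beta)/\sigma_{m,\beta}\Rightarrow N(0,1)$. Nondegeneracy gives $\sigma_{m,\beta}^2\ge c\rho_{I,J}$, so
\[
\widehat V^m_{\mathrm{opt},\beta}/\sigma^2_{m,\beta}=V^{m,\circ}_{\mathrm{opt},\beta}/\sigma^2_{m,\beta}+o_{\mathbb P}(1),
\]
and this ratio is at least $1$. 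A subsequence argument then yields $\liminf\p\{\tau_\beta\in\mathcal I_{m,\beta}(1-\alpha)\}\ge1-\alpha$: along subsequences where the ratio converges to some $r\ge1$, the coverage limit is $2\Phi(z_{1-\alpha/2}\sqrt r)-1\ge1-\alpha$.
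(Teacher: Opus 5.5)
Your proposal is correct and follows essentially the same route as the paper: apply Proposition~\ref{prop:optimized-scalar-variance} to the oracle residualized outcomes $Y^m$ to get both the population ordering and oracle consistency, bound the feasible-minus-oracle projections via the centered-norm triangle inequality using slope consistency and the covariate second moment, and combine the adjusted CLT with nondegeneracy for coverage. The only cosmetic difference is that the paper works with $d_{ij}^m=\widehat Y_{ij}^{m,\mathrm{res}}-Y_{ij}^{m,\mathrm{obs},\star}$ and shows $(IJ)^{-1}\sum_{i,j}(d_{ij}^m)^2=o_{\mathbb P}(1)$, rather than writing the slope error out explicitly as you do.
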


\begin{proof}
The residualized potential outcomes have the same exposure-cell means as the
original potential outcomes because
\((IJ)^{-1}\sum_{i,j}\dot X_{ij}=0\). Applying the scalar variance decomposition
to \(Y^m\) gives
\(\sigma_{m,\beta}^2 = e_1^Be_0^B I^{-1} \bigl(S_{a,\Delta}^m\bigr)^2 + e_1^Se_0^S J^{-1} \bigl(S_{b,\Delta}^m\bigr)^2\).
The triangle inequality for finite-population standard deviations therefore
gives the first inequality in \eqref{eq:adjusted-Vopt-ordering}. Likewise, the
deterministic variance-family argument in
Section~\ref{app:optimized-variance-family}, applied to \(Y^m\), shows that
\(V_{\mathrm{opt},\beta}^{m,\circ}\) is the infimum of that family, whereas its
\((1,1)\) member is
\(\beta^\top\Omega_{\mathrm{2w}}^{m,\circ}\beta\). This proves the second
inequality.

It remains to justify the use of the estimated slopes. Let
\(\widehat V_{\mathrm{opt},\beta}^{m,\star}\) denote the sample construction
based on the oracle residualized observed outcome
\(Y_{ij}^{m,\mathrm{obs},\star}\). The corresponding adjustment and variance
conditions allow Proposition~\ref{prop:optimized-scalar-variance} to be applied
to \(Y^m\), yielding
\(\widehat V_{\mathrm{opt},\beta}^{m,\star} = V_{\mathrm{opt},\beta}^{m,\circ} + o_{\mathbb P}(\rho_{I,J})\).
The slope-consistency results in the proofs of
Theorems~\ref{thm:fisher-adjustment} and~\ref{thm:lin-adjustment} give
\(\widehat\theta_{\mathrm F}-\theta_{\mathrm F}^\star=o_{\mathbb P}(1)\), \(\max_{p,q} \|\widehat\theta_{\mathrm L,pq}-\theta_{\mathrm L,pq}^\star\| =o_{\mathbb P}(1)\).
Together with the covariate moment condition, these relations imply, for
\(
d_{ij}^m
:=
\widehat Y_{ij}^{m,\mathrm{res}}
-
Y_{ij}^{m,\mathrm{obs},\star},
\)
that
\((IJ)^{-1}\sum_{i=1}^I\sum_{j=1}^J(d_{ij}^m)^2=o_{\mathbb P}(1)\).
For every $p\in\{0,1\}$, Cauchy--Schwarz and Jensen's inequality give
\begin{align*}
\frac{1}{I_p}\sum_{i:W_i^B=p}
\left\{
\widehat a_i^m(p)-\widehat a_i^{m,\star}(p)
\right\}^2&\le
C_\beta
\sum_{q=0}^1
\frac{1}{I_pJ_q}
\sum_{i:W_i^B=p}\sum_{j:W_j^S=q}(d_{ij}^m)^2
\\ &\le
\frac{C}{IJ}\sum_{i,j}(d_{ij}^m)^2
=o_{\mathbb P}(1),
\end{align*}
where the second inequality uses the overlap condition. The seller-side analogue holds
with $I$ and $J$ interchanged.

Let
$\Delta a_i^m(p)=\widehat a_i^m(p)-\widehat a_i^{m,\star}(p)$ and let
$\overline{\Delta a}_p^m$ be its sample mean among buyers assigned to $p$.
The reverse triangle inequality for centered Euclidean norms yields
\begin{align*}
\left|s_{a,p}^m-s_{a,p}^{m,\star}\right|
&\le
\left[
\frac{1}{I_p-1}
\sum_{i:W_i^B=p}
\{\Delta a_i^m(p)-\overline{\Delta a}_p^m\}^2
\right]^{1/2}\\
&\le
\left[
\frac{1}{I_p-1}
\sum_{i:W_i^B=p}\{\Delta a_i^m(p)\}^2
\right]^{1/2}
=o_{\mathbb P}(1).
\end{align*}
The same argument gives
\(s_{b,q}^m-s_{b,q}^{m,\star}=o_{\mathbb P}(1)\), \(q\in\{0,1\}\).
The oracle sample standard deviations are $O_{\mathbb P}(1)$ by
Lemma~\ref{lem:estimated-projection-variances}, and the feasible ones are
therefore also $O_{\mathbb P}(1)$. Substitution into
\eqref{eq:adjusted-Vopt-sample} gives
\(\widehat V_{\mathrm{opt},\beta}^m - \widehat V_{\mathrm{opt},\beta}^{m,\star} = o_{\mathbb P}(\rho_{I,J})\),
which proves the stated consistency.

For a fixed nonzero \(\beta\), the nondegeneracy condition gives
\(\sigma_{m,\beta}^2\ge c_{m,\beta}\rho_{I,J}\). Thus, for every
\(\varepsilon>0\),
\(\p\{ \widehat V_{\mathrm{opt},\beta}^m \ge (1-\varepsilon)\sigma_{m,\beta}^2 \} \longrightarrow 1\).
The adjusted vector CLT implies
\((\widehat\tau_{m,\beta}-\tau_\beta)/\sigma_{m,\beta} \Rightarrow N(0,1)\).
The coverage conclusion follows from this CLT and the preceding variance bound.
The case \(\beta=0\) is trivial.
\end{proof}

\end{document}